\newcommand{\sys}{Pacer}
\newcommand{\eg}{e.g.,~}
\newcommand{\ie}{\emph{i.e.},~}
\newcommand{\etal}{\emph{et al.}}
\newcommand{\medsite}{{\normalfont MedWeb}}
\newcommand{\nsc}{NSC}
\newcommand{\sid}{$sid$}
\newcommand{\flow}{$f$}
\newcommand*\circled[1]{\tikz[baseline=(char.base)]{%
            \node[shape=circle,scale=0.8,draw,inner sep=1pt] (char) {#1};}}
\newcommand{\dg}[1]{\textcolor{cyan}{DG: #1}}
\newcommand{\pd}[1]{\textcolor{red}{PD: #1}}
\newcommand{\todo}[1]{{#1}}
\newcommand{\update}[1]{{#1}}
\newcommand{\final}[1]{{#1}}
\theoremstyle{definition}
\newtheorem{thm}{Theorem}
\newtheorem{lem}[thm]{Lemma}
\begin{document}

\title{\Large \bf \sys: Comprehensive Network Side-Channel Mitigation in the Cloud}
\author[1,2]{Aastha Mehta}
\author[1]{Mohamed Alzayat}
\author[1]{Roberta De Viti}
\author[1]{Bj\"orn B. Brandenburg}
\author[1]{Peter Druschel}
\author[1]{Deepak Garg}
\affil[1]{Max Planck Institute for Software Systems (MPI-SWS), Saarland Informatics
Campus}
\affil[2]{University of British Columbia (UBC)}

\date{}

\maketitle

\begin{abstract}
Network side channels ({\nsc}s) leak secrets
through packet timing and packet sizes. They are of particular concern
in public IaaS Clouds, where any tenant may be able to colocate and
indirectly observe a victim's traffic shape. We present {\sys}, the first
system that eliminates {\nsc} leaks in public IaaS Clouds
end-to-end. It builds on the principled technique of shaping guest
traffic outside the guest to make the traffic shape independent of
secrets by design. However, {\sys} also addresses important concerns
that have not been considered in prior work---it prevents internal
side-channel leaks from affecting reshaped traffic, and it respects
network flow control, congestion control and loss recovery
signals. {\sys} is implemented as a paravirtualizing extension to the
host hypervisor, requiring modest changes to the hypervisor and the
guest kernel, and only optional, minimal changes to applications. We
present {\sys}'s key abstraction of a {\em cloaked tunnel}, describe
its design and implementation,
prove the security of important design aspects through a formal model,
and show through an experimental
evaluation that {\sys} imposes moderate overheads on bandwidth, client
latency, and server throughput, while thwarting attacks based on
state-of-the-art CNN classifiers.

\end{abstract}

\section{Introduction}
\label{sec:intro}



Sharing resources is in the very nature of public Clouds. However, many
side-channel leaks arise when mutually distrusting parties share hardware
resources. Shared CPUs, cores, caches, and memory buses have all been exploited
as side channels~\cite{xu2015, drama2016,
flushreload2014, liu15llcpractical, irazoqui2015ssa, yarom2017cachebleed,
survey2016, netspectre}. As a result, side-channel leaks in Cloud environments
are a growing concern for computer security research.

In this paper, we revisit a specific class of side-channel
leaks---those arising from shared network elements---in the specific
setting of public IaaS Clouds.  These channels, called \emph{network
side channels} ({\nsc}s), leak information via traffic shape (packet
timing and packet size) even when packet payloads are encrypted.  We
argue below that such leaks ought to be a serious concern in public
Clouds. We then describe key requirements for a practical,
comprehensive defense that mitigates {\nsc}s in public Clouds. Despite
decades of work on mitigating {\nsc}s, these requirements have not
received much attention. We present {\sys}, a new system
that satisfies all the requirements and effectively mitigates {\nsc}s
in IaaS Clouds.

\smallskip\noindent
\textbf{{\nsc}s are a serious concern in Clouds.~} Prior work has
shown that traffic shape is strongly correlated with secrets in many
applications -- traffic shape can reveal sensitive information about
webpages~\cite{and98trafficanalssl, sun02statistical, hintz2002,
  cai2012touching, gong2012website,
  dyer2012peekaboo, wang2014effective, hayes2016k, 
  li2018measuring}, video streams~\cite{beautyburst}, VoIP
chats~\cite{wright2008spot}, users' keystrokes~\cite{song2001timing},
and even private keys~\cite{brumley2005remote, brumley2011remote}.
Chen \emph{et al.}~\cite{chen10reality} demonstrate that users'
medical conditions, family income, and investments can be
gleaned from the encrypted traffic of healthcare, taxation,
investment, and web search services provided as software-as-a-service
(SaaS) offerings.

While many of these attacks relied on direct access to the victim's
traffic, more recent work has shown that an \emph{unprivileged
adversary} can also indirectly infer the victim's traffic shape by
inducing contention with the victim's traffic at a shared network
element and measuring resulting variations in the adversary's own
traffic shape~\cite{agarwal2016moving, getoffmycloud, beautyburst}. In
fact, we were able to create such an indirect attack to recognize
streamed videos with 96\% accuracy using a CNN classifier
(\S\ref{sec:attack}).  Such indirect attacks are of particular concern
in public (IaaS) Clouds as adversaries can rent virtual machines (VMs)
and even colocate with a victim's VM at low cost~\cite{getoffmycloud,
  inci2015seriously, inci2016efficient}. Hence, {\nsc}s \emph{should
be a significant concern} for security researchers, Cloud tenants and
Cloud providers alike.

\smallskip\noindent \textbf{Requirements for mitigating {\nsc}s.~} Any
comprehensive mitigation of {\nsc} attacks in an IaaS Cloud must
satisfy the following requirements. {\bf R1.} The mitigation
must prevent leaks through all aspects of \update{the shape of transmitted
traffic}, with provable
guarantees. {\bf R2.}  In line with basic Cloud philosophy, the
mitigation must allow for elastic (dynamically adaptive) sharing of
network resources. {\bf R3.}  Although some overhead for mitigating a
threat as strong as {\nsc}s is unavoidable, the mitigation should
still permit responsive, client-facing services and not require excessive
resources. \mbox{{\bf R4.} The} mitigation should work with any guest VM and
accommodate bursty network traffic, with minimal application changes.
In other words, the mitigation should be \emph{general}.

These requirements rule out many {\nsc} mitigation techniques, specifically
those that prevent leaks via either packet timing or packet size but not both
(violates R1)~\cite{wu2015deterland, wright2009morphing, shan2021real}, do not
handle bursty traffic (violates R4)~\cite{wright2009morphing, dyer2012peekaboo},
\update{rely on multipath routing~\cite{de2020trafficsliver} or} adding
``best-effort'' noise without strong guarantees (violates
R1)~\cite{kocher1996timing, HTTPOS}, hard, static bandwidth reservation for
tenants including TDMA (violates R2)~\cite{vattikonda2012practical},
or application code rewriting (violates R4)~\cite{HTTPOS}.

A general approach that can meet these requirements is to change the
\emph{shape the traffic} in a dedicated system component \emph{outside
the application} to make it independent of secrets. The final shape
can be learnt by the shaping component
adaptively~\cite{lu2018dynaflow,askarov2010}, or the application can
provide it to the shaping component~\cite{cai2014csbuflo,
  wang2017walkie}.
Although this approach has been considered in prior work, the Cloud
setting and the public Internet have additional practical requirements
that have \emph{not been considered} in prior work: {\bf R5.} The
traffic-shaping logic must take flow control and loss recovery of the
network protocol into account (else information may leak \update{via
  the presence of ACKs in the reverse direction}), and it must respect
network congestion signals (else it could destabilize the
network). {\bf R6.} The traffic-shaping component must be integrated
with Cloud servers---as opposed to routers or middleboxes---to prevent
colocated attacker from exploiting contention on servers' network
interface (NIC). Consequently, the shaping component must be
performance-isolated from secret-carrying Cloud tenants to prevent
\emph{internal side-channel leaks within the server} from affecting
the reshaped traffic.
To the best of our knowledge, no prior work on {\nsc} mitigation
satisfies all of R1--R6.

\smallskip\noindent
\textbf{Our contribution: {\sys}.~}
The requirements R1--R6 pose significant design and engineering challenges
for a secure and practical {\nsc} mitigation solution.
%
To our knowledge, {\sys}
is the first end-to-end system that mitigates {\nsc}s
comprehensively addressing all requirements.
{\sys}'s contribution is twofold:
a novel {\em cloaked tunnel} abstraction that shapes traffic between two guests on
different hosts end-to-end, and
a realization of this abstraction for IaaS Clouds.

Briefly, a cloaked tunnel
shapes application traffic to provably make it independent of secrets
at the traffic's origin (R1). This eliminates {\nsc}'s by design. The
tunnel multiplexes multiple flows at fine granularity (R2). The tunnel
works with all IaaS VMs and unmodified applications, although, to
improve efficiency, applications may \emph{optionally} interact with
the tunnel to specify which traffic shapes should be used on their
flows to improve efficiency (R3); this requires only small changes to
applications (R4). 
(The tunnel is secure as long as applications pick
shapes independent of secrets.)
Finally, by design, the tunnel is isolated from guest applications (R6) and it
takes network congestion, flow control, and loss recovery into account when
shaping traffic (R5).
The cloaked tunnel described above is a general abstraction that mitigates
{\nsc} leaks in any setting, not just Clouds.

In addition, {\sys} implements a {\em paravirtualized} instance of the cloaked
tunnel integrated with IaaS Cloud servers. 
{\sys} relies on a hypervisor component, called HyPace, and a guest
kernel module, called GPace, which interacts with HyPace to facilitate
congestion management, loss recovery, and flow control during shaping
(R5). HyPace and GPace implement a novel {\em masking} mechanism to
ensure timely packet transmission independent of guest delays, thus
achieving performance isolation from the guests (R6). Furthermore,
HyPace implements a secure {\em batching} mechanism to amortize the
high costs of masking and sustain \textasciitilde7.6 Gbps line rate
(R3).  An experimental evaluation of our prototype on two IaaS
applications---a medical information site and a video streaming
service---shows that {\sys} defeats powerful {\nsc}s with moderate
overhead. 

\smallskip\noindent{\bf Organization.} We present the threat model, design
challenges, and key ideas behind {\sys} in \S\ref{sec:overview}. We describe the
general cloaked tunnel abstraction, define its requirements and properties, and
argue its security in \S\ref{sec:tunnel}. We describe {\sys}'s implementation of
a paravirtualized instance of the tunnel in IaaS Cloud servers in
\S\ref{sec:design}. We discuss generation of efficient transmit schedules for
traffic shaping in \S\ref{sec:profiling}. We present our implementation and
empirical evaluation of {\sys}'s performance overheads and security in
\S\ref{sec:eval}.  We discuss related work in \S\ref{sec:related} and conclude
in \S\ref{sec:conc}.  Additionally, we present NSC attacks under various setups
in \S\ref{sec:attack} and a detailed evaluation of the security of {\sys}'s
masking mechanism in \S\ref{sec:eval-masking}.
Finally, we build an abstract formal model capturing relevant
aspects of Pacer's design and prove its
security in \S\ref{sec:formal-model}.

\if 0

{\bf R1.}~A {\nsc} mitigation must prevent leaks
through all observable parameters of traffic shape---packet sizes,
number, and timing---by design.
{\bf R2.}~The mitigation must be compatible with existing network
protocols, such as congestion control.  {\bf R3.}~Because {\nsc}s can
arise with colocated tenants on a server, the mitigation must
integrate with a Cloud server.  {\bf R4.}~When integrating with a
server, the mitigation itself must be protected from side channels
arising within the server.  In particular, a key challenge is to
ensure that packet transmission times remain unaffected by
secret-dependent computations in the server.  {\bf R5.}~The mitigation
must be efficient and must minimize the costs of high security for
tenants. \dg{Further thought: If we make efficiency a key requirement,
  we'll set up an easily criticizable expectation in the mind of our
  reader, which our evaluation may not attain. Maybe the point should
  instead be ``generality'' in that the mitigation must be able to
  \emph{adapt to different kinds of workloads and applications}
  (static page serving, streaming, uniform traffic, bursty traffic,
  etc.) while maintaining reasonable efficiency. Obviously, there will
  be some nontrivial cost for security (and we should admit this
  upfront).}

Next, we explain how prior work on mitigating NSCs falls short on at
least one of R1--R5, and how our work addresses these requirements.

\textbf{\update{Shortcomings of obvious and prior solutions}.~}
{\nsc}s can be eliminated by reserving dedicated network bandwidth at the
physical layer, but this
approach leads to underutilization of the network bandwidth (especially when
tenant's traffic is bursty), is at odds with dynamic resource sharing in a
public Cloud, and is difficult to realize even within a
datacenter~\cite{vattikonda2012practical}, much less
end-to-end across the Internet (overlooks {\bf R5}).
Other defenses rely on adding noise in the adversary's observations to weaken
the correlation between the tenant's traffic shape and its
secrets~\cite{kocher1996timing, song2001timing}.
However, the noise may not be sufficient at all times to defend against an
adversary with unknown sophistication and resources (overlooks {\bf R1}).

A more principled approach is to make the victim's traffic shape
independent of its secrets, so that an adversary cannot infer the
victim's secrets despite observing its traffic.
Shaping involves padding packets to secret-independent sizes, adding dummy
packets to generate secret-indepedent packet counts, and transmitting
all packets at secret-independent times.
However, prior shaping solutions
either leak information about response lengths or are inefficient for bursty
workloads (overlooks {\bf R1, R5})~\cite{dyer2012peekaboo, cai2014csbuflo, cai2014tamaraw,
lu2018dynaflow},
do not provide a practical system (overlooks {\bf R2--R4})~\cite{cai2014tamaraw,
wang2014effective, nithyanand2014glove, lu2018dynaflow},
do not integrate with a Cloud server and do not address the menace of internal
side channels (overlooks {\bf R3, R4})~\cite{wang2017walkie}.
Predictive mitigation~\cite{askarov2010, zhang2011predinteractive}
minimizes overheads by allowing traffic shapes to vary only based on public
information; however, similar to the above systems, it does not provide a
practical implementation that could be integrated with the Cloud (overlooks {\bf
R2--R4}).
We elaborate on the related work in \S\ref{sec:related}.

\textbf{\update{Our solution: {\sys}}.~}
\dg{We should explain the difficulty conceptual/design/engineering
  difficulty in satisfying requirements R1--R5 in this paragraph, with
  a bit of insight. For example, R5 requires careful design
  (partitioning by public inputs, per-request schedules, batching), as
  does the combination of R3 and R4 (masking). R2 and R3 together
  impose a significant amount of design and engineering effort
  (congestion control, paravirtualization).}
The requirements listed above pose significant design and engineering challenges
for a secure and practical {\nsc} mitigation solution.
{\sys} is a new traffic shaping based system that addresses all the requirements
listed above.
It provides a novel abstraction of a cloaked tunnel that enforces
end-to-end traffic shaping {\em by design} ({\bf R1}) and is compatible with
existing network protocols.
Furthermore, {\sys} provides a concrete implementation of the cloaked tunnel
for Cloud tenants that addresses {\bf R2--R5}, which we describe below.

{\sys} provides a paravirtualized implementation of the cloaked tunnel
integrated with the Cloud server, thus providing {\nsc} mitigation for Cloud
tenants, while retaining the flexibility of an IaaS Cloud.
Whenever the tenant produces network output, a transmit schedule (shape) for the
traffic is installed. The tenant's guest OS prepares fixed-size network packets,
adding dummy bytes to small payloads. The IaaS hypervisor transmits the packets
as specified by the schedule, transmitting dummy packets when the guest fails to
prepare payload packets in time.

Since it is difficult to eliminate all side channels possibly arising on a
server, {\sys} performance isolates packet transmissions from secret-dependent
computations on the server using a novel solution that {\em masks} the effect of
internal side channels on the packet transmission times ({\bf R4}).
The resulting traffic shape in the tunnel is {\em secret-independent by design}.

{\sys} reconciles security and efficiency in two ways ({\bf R5}).
First, it hides secret-dependent variations in the tenant's network traffic
shape, but allows variations based on public information. For instance, a video
service might consider the specific video being streamed private but the video's
resolution public.
Second, {\sys} batches packet transmissions to amortize the high costs of the
masking-based mitigation of internal side channels, thus retaining a significant
fraction of the NIC's line rate and application's throughput.

\textbf{Usability.~}
{\sys} provides secure and efficient {\nsc} mitigation as above while placing
only modest requirements on the tenant and the Cloud provider. The tenant and its
clients must run a guest OS that supports {\sys}.
The tenant application requires minimal annotations to identify public
information, which it signals to {\sys}. {\sys} uses this information to compute
traffic shapes for the application and to select a shape in production.
From the Cloud provider's perspective, the system requires modest
changes to the IaaS hypervisor.

\textbf{\update{Technical contributions}.~}
%
In summary, we present {\sys}, which, to our knowledge, is the first end-to-end system that solves the problem of {\nsc}
leaks in IaaS Clouds efficiently and by design. In this paper, we make the following technical
contributions:
(i)~a novel {\em cloaked tunnel} abstraction, which ensures that the shape of
network traffic in the tunnel is independent of secrets (\S\ref{sec:tunnel});
(ii)~a paravirtualized cloaked tunnel implementation for an IaaS Cloud that
ensures full isolation from potentially secret-dependent computations and
requires modest changes to hypervisor and guests (\S\ref{sec:design});
(iii)~a {\em gray-box profiler} that generates transmit schedules automatically
from tenant execution traces with minimal support from the tenant application
(\S\ref{sec:profiling}); and
(iv)~an experimental evaluation on two IaaS applications---a medical information
site and a video streaming service---which shows that \sys\ defeats powerful
{\nsc}s with modest overhead (\S\ref{sec:eval}).

\fi

\section{Overview}
\label{sec:overview}
As a {running example}, we use the scenario of
a patient who consults a trusted, Cloud-hosted medical website
{\medsite} for diagnostic and therapy options, medical procedures, and
care providers in their area. The patient wishes to keep their
condition from employers, health insurers, and other parties for fear
of discrimination. We show how \sys\ can ensure the patient's privacy
by hiding the content they retrieve from colocated tenants and other
network observers, with minimal modifications to {\medsite}, and with
modest overhead in network bandwidth and response~time.

\subsection{Threat model}
\label{subsec:threat-model}

\begin{figure}[t]
    \includegraphics[width=\columnwidth]{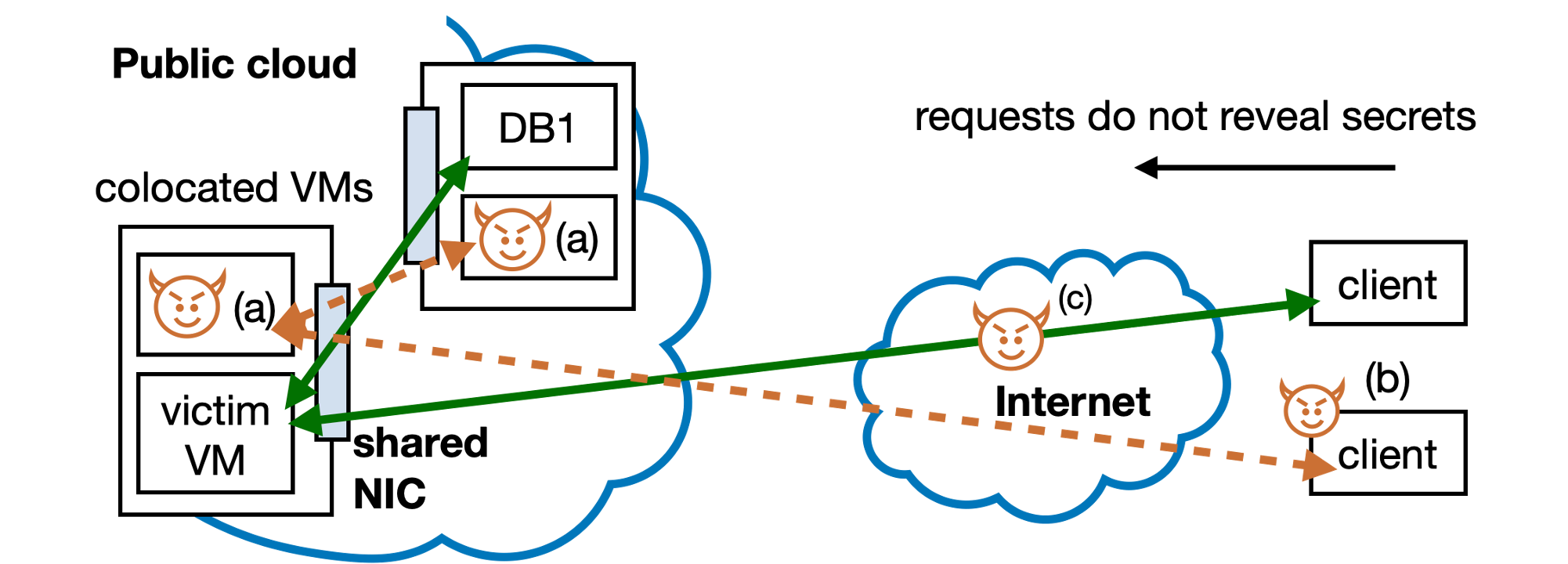}
    \caption{The adversary can (a) colocate with victim's VM or \update{backend
    services} in the Cloud, (b) control clients of its own VMs, and (c) use
    cross-traffic between any pair of these to infer the shape of the victim's
    traffic at shared network links.}
    \label{fig:threat-model}
\end{figure}

The {\em victim} in the public IaaS Cloud is a tenant executing
arbitrary computations in one or more guest VMs, and serving a set of
{\em trusted clients} that connect to its VMs using IPSec or a VPN (virtual
private network)
with pre-shared key authentication\footnote{Guests may require a second level of
  authentication to separate clients' privileges, but this is not
  relevant for \sys's security.}.
The \medsite\ site, for instance, authenticates its registered clients
using IPSec-PSK\footnote{\sys\ requires IPSec-PSK as the
  timing of the tenant's response to an unauthenticated client's
  connection attempt may be affected by tenant's
  concurrent processing of other clients' secrets, thus revealing
  these secrets.}.
\update{To serve a client request, the victim may invoke other Cloud backend
services hosted on separate physical servers.}
The victim's goal is to protect its secrets; these secrets can be
reflected in parameters~of client requests (\eg the name of a
requested file), in the victim's internal state (e.g., which request
handlers are cache-hot because they were recently accessed)\update{,
  or in the backend traffic}.
%

{\sys}'s goal includes preventing {\nsc} leaks of the victim's secrets
to anyone able to rent other VMs in the Cloud. Prior work has shown
that deliberate colocation with a victim VM is
feasible~\cite{getoffmycloud,inci2015seriously,inci2016efficient}. Accordingly,
we assume a strong adversary that may colocate its VMs with the
victim's VM and indirectly infer the shape of the victim's outbound
traffic by observing contention with its own cross-traffic.  The
adversary may use this method to infer the traffic shape of the victim
at shared network elements in the common server, rack or
datacenter\footnote{\update{Prior work has shown that traffic shape
  can be inferred through such methods by an adversary colocated on
  the same physical machine as well as an adversary contending on a
  downstream network link \cite{beautyburst}. Thus, renting dedicated
  physical machines is insufficient to mitigate {\nsc}s.}}.  The
adversary has access to all services available to IaaS guests,
including the ability to time the transmission and reception of its
own network packets with high precision.  The adversary controls
network clients, which communicate with its VMs via the network.
However, the adversary cannot break standard cryptography, break into
the victim's VPN, impersonate/compromise the victim's clients\update{,
  or connect to the instances of backend services used by the victim}.
While not the primary goal of our work, \sys's design also protects
against powerful adversaries who can directly observe the
victim's traffic as well as delay, drop, and inject network packets
(\eg ISPs).
Figure~\ref{fig:threat-model} summarizes {\sys}'s threat model.

\smallskip\noindent
\textbf{Non-goals.~}
\sys\ addresses {\nsc}s; we assume that micro-architectural side-channel
leaks are mitigated by renting an entire server socket and the associated NUMA
domain to the victim for exclusive use.
Alternatively, \sys\ can be combined with complementary work to
mitigate side-channel leaks through other shared
resources~\cite{varadarajan2014scheduler, braun2015robust}.  The Cloud
platform and provider are trusted.

Our focus is on protecting secrets within the content provided by a server.
Hiding the identity of the service requested~\cite{hintz2002, dyer2012peekaboo,
hayes2016k}, the communication protocol used \cite{wright2006inferring,
dyer2013protocol}, or the IP address~\cite{Tor} of the client are non-goals.
{\sys} can be combined with other techniques to address them.
In our running example of \medsite, the patient wishes to hide what specific
disease, procedures and care facilities (s)he is interested in, not that (s)he
is accessing a medical site and video service---most people do occasionally.
Note also that hiding the patient's IP address alone (e.g., using Tor or a
VPN) would be insufficient, because aspects of the content retrieved, e.g.,
the geographic location of care facilities the client retrieves, can reveal
the patient's identity.

\smallskip\noindent
\textbf{Prototype assumptions.~} {\sys}'s
  prototype additionally assumes that {\em clients' request} traffic
  reveals no secrets through its shape (its length, number of packets,
  or timing).
In particular, the time of requests does not depend on any secrets or
the actual completion times of previous responses.
However, {\sys}'s design can support bidirectional traffic shaping,
trivially by running the \sys-enabled hypervisor and kernel, or a kernel with
all of {\sys}'s functionalities on the client side.

\subsection{Key ideas}
{\sys} avoids {\nsc}s by ensuring the shape of the victim's
  network traffic is secret-independent. Ensuring secret-independence
  requires that: (i) The choice of traffic shape must not reveal
  secrets. For instance, if a constant rate of one 1.5kB packet per
  millisecond for 10 seconds is chosen to transmit a particular video,
  then this choice must not be specific to the video.  (ii) If the
  actual packet transmission times deviate from the chosen shape, the
  deviations must not reflect application secrets. In our example, if
  the actual transmission time of a packet deviates from its precise
  expected time based on the rate, then this deviation must not
  reflect the CPU and memory consumption of the concurrent video
  processing in a way that may identify the video.

\smallskip\noindent
\textbf{Secret-independent traffic shapes.~} A strawman secure shaping
strategy is to continuously transmit fixed-size packets at fixed
intervals independent of the application's actual workload; in the
absence of application payload, dummy packets are
transmitted. However, this strategy is highly inefficient when the
application workload is bursty.  {\sys} instead allows the shape to
vary as long as the variations do not depend on secrets.
Specifically, if a guest can partition its
workload into classes based on public information, {\sys} permits the
use of a different, efficient traffic shape in each class.
Returning to our example, suppose {\medsite} streams videos about medical
procedures in different resolutions. Then, the shape used to stream a video can
vary by resolution, which only reveals the patient's available bandwidth but not
the specific medical procedure being watched.
%

\smallskip\noindent
\textbf{Gray-box profiling.~} Secret-independent traffic shaping
requires understanding how a guest's secrets affect its network
traffic.  This information can be obtained by black-box profiling of
guests, but this approach cannot reliably discover all dependencies
and therefore is not secure.  Program analysis, which could discover
all dependencies in principle, does not scale well. \sys\ instead
relies on gray-box profiling, which requires no knowledge of a guest's
internals beyond a {\em traffic indicator} provided by the guest.
This indicator partitions the guest's possible network interactions
independent of secrets, and can be used to profile the guest's network
interactions and generate a transmit schedule for each partition
(\S\ref{sec:profiling}).

\smallskip\noindent
\textbf{Paravirtualized cloaked tunnel support.~} To enforce traffic shapes,
{\sys} provides
paravirtualized hypervisor support that enables guests to implement a
{\em cloaked network tunnel}, while adding only a modest amount of
code to the hypervisor.  A {\em performance-isolated} shaping
component in the hypervisor, called HyPace, initiates transmissions based on a
schedule.
A guest kernel module, GPace, shares state with HyPace for schedule installation
and adaptation based on network congestion, loss recovery, and flow control.
HyPace's and GPace's execution can experience interference from the guest due to
side channels, so {\sys} uses a novel idea---it {\em masks} any execution delays
in HyPace and GPace that could influence actual packet transmission times (R6).

\if 0
\paragraph{Transmission pacing}
A performance-isolated pacing component initiates transmissions based
on a transmission schedule.  If the payload is available
by a packet's scheduled transmission time, the pacing component
transmits a dummy packet instead.  A transmission schedule, chosen
independently of secrets, is installed for a flow as soon as an
incoming packet initiates a response. Response traffic is thus shaped
independently of a guest's rate of progress, which may depend on
secrets.

\if 0
Initiating packet transmissions in a manner that cannot be influenced
by secret-dependent computation is in itself a significant challenge,
which \sys\ addresses as follows.  Guests perform their own network
processing and interact directly with the network interface through a
virtual NIC (vNIC) function. However, transmissions of padded packets
queued by a guest are triggered by the hypervisor based on a
transmission schedule.  If the guest does not provide a payload by a
packet's scheduled transmission time, the hypervisor sends a padded
packet instead.
A transmission schedule, chosen independently of secrets, is installed
for a flow as soon as an incoming packet initiates a response.
Response traffic is thus shaped independently of a guest's rate of
progress, which may depend on secrets.
\fi

\paragraph{Unobservable padding}
To the adversary, padded packets must be indistinguishable from
regular payload packets in terms of their size and content, as well as
the responses they elicit from the receiver's transport protocol.

\paragraph{Congestion awareness.}
To maintain network stability and fair sharing, shaped traffic must
respond to congestion signals by delaying scheduled
transmissions. \sys\ ensures TCP-friendliness by delaying scheduled
transmissions when TCP's congestion window is closed.
\fi

\if 0

The first set of constraint follow from the requirements for a secure
traffic-shaped tunnel:

\paragraph{a) Padding above transport layer} To ensure padded packets
and payload packets elicit identical responses (i.e., ACKs) from a
client's transport protocol, the padding must occur above the
transport layer. This ensure that padding and payload bytes are
treated identically.

\paragraph{b) Encryption below padding layer}
To make sure the content of padded packets is indistinguishable from
(pure) payload packets, encryption must occur below the padding layer.
This ensures that the padding layer is obfuscated.

\paragraph{c) Performance-isolated pacing}
The pacing layer, i.e. the component that initiates packet
transmissions, and its dependencies must be performance-isolated from
any secret-dependent computations. This is to ensure that any delay
between schedules and actual transmission times are not
secret-dependent.

The next set of constraints result from the context of a public Cloud:

\paragraph{d) Shared server NIC}
In an IaaS Cloud, colocated tenants typically share a physical
network interface and can therefore indirectly observe each others'
traffic. When mutually distrusting tenants share a physical NIC, the
tunnel entry point must be integrated with the IaaS server.

\paragraph{e) Client-side access link}
Unless it can be ruled out that an adversary shares the client's
access link, the tunnel exit must be placed on the clients'
premises. It can be integrated, for instance, in client premises'
access routers or the OS kernel of each client device.

The remaining constraints are not relevant for \sys's security but
concern practicality:

\paragraph{f) Congestion awareness}
The tunnel must respond to network congestion signals.

\paragraph{g) Minimal changes to existing components}
Required changes to existing system components (application, guest OS,
IaaS hypervisor, clients) should be minimized to ease deployment.

\paragraph{h) Minimize performance overheads}
The cost in terms of guest performance and resource consumption should
be reasonable, keeping mind that strong confidentiality does not come
for free.

\subsection{Design space}

Conceptually, one can think of \sys\ as a tunneling protocol whose
entry and exit points are placed so as to bracket networks links
shared with the adversary.  The space of possible tunnel designs is
large but subject to several constraints. We begin with constraints on
the placement of the tunnel entry/exit points:

\paragraph{a) Shared server NIC}
In an IaaS Cloud, colocated tenants typically share a physical
network interface. Therefore, whenever mutually distrusting tenants
share a physical NIC, the tunnel entry point must be integrated with
the IaaS server. Otherwise, the entry can be placed on a middlebox
before the first network link shared among distrusting parties.

\paragraph{b) Client-side access link}
Unless it can be ruled out that an adversary shares the client's
access link, the tunnel exit therefore has to be placed on the
clients' premises. It can be integrated, for instance, in client
premises' access routers or the OS kernel of each client device.

Following are constraints on the design of the entry/exit points:

\paragraph{c) Padding above transport layer} To ensure padded packets
and payload packets elicit identical responses (i.e., ACKs) from a
client's transport protocol, the padding must occur above the
transport layer.

\paragraph{d) Encryption below padding layer}
To make sure the content of padded packets is indistinguishable from
(pure) payload packets, encryption must occur below the padding
layer.

\paragraph{e) Performance-isolated pacing layer}
The pacing layer, i.e. the component that initiates packet
transmissions, and its dependencies must be performance-isolated from
any secret-dependent computations. To minimize the pacing layer's
dependencies and therefore the size of the computing base that must be
performance-isolated, it should occur as low in the system stack as
possible.

The remaining constraints are not relevant for \sys's security but
concern practicality.

\paragraph{f) Congestion awareness}
The pacing layer must respond to network congestion signals.

\paragraph{g) Minimal changes to existing components}
Required changes to existing system components should be minimized to
make sure the solution can be deployed in practice. This is
particularly important for applications.

\paragraph{h) Minimize performance overheads}
The cost in terms of guest performance and resource consumption should
be reasonable, keeping mind that strong confidentiality does not come
for free.

\fi

\if 0

\subsection{\sys\ architecture}

\sys implements the pacing layer in the hypervisor of IaaS Cloud
servers. This choice meets constraint a) and makes the required
performance isolation (e) relatively straightforward.\footnote{For
  ideal performance isolation, pacing could be supported in hardware
  by the NIC; however, such NICs are not yet available and exploring
  this option remains as future work.}

One option is to implement the entire tunnel entry in the Xen
hypervisor. Considering c), d), and f), this would require the
hypervisor to implement its own padding, transport, congestion
control, and encryption layers. While this design choice avoids all
changes to guests, it increases the complexity of the
performance-isolated pacer. Moreover, the duplication of functionality
and interactions with a guest's own transport protocol would likely
reduce performance.

In \sys, we chose instead a paravirtualization approach that relies
on a modified guest kernel that coordinates with the hypervisor. This
choice keeps the performance-isolated pacer simple and avoids a
duplication of functions between guest and tunnel, and the associated
overhead. A Linux kernel module modifies the guests' TCP/IP stack to
implement the tunnel and to cooperate with the hypervisor for schedule
instantiation, transmission pacing, and congestion control.

As shown in Figure~\ref{fig:pacer-arch}, \sys's HyPace component plugs
into the Xen hypervisor on IaaS servers.  HyPace implements the pacing
layer, meeting requirement e); it coordinates with the guest for
congestion control to meet f).  GPace is a Linux kernel module that
plugs into guest's OS, and also into the client's OS kernel. It
implements padding within TCP to meet c), relies on IPSec to encrypt
packets for d), and exposes TCP's congestion window to HyPace to meet
f). GPace brackets both the server NIC and the client's access link,
meeting a) and b). Finally, a profiler component (ProfPace) runs as a
user process in the guest VM, and generates transmission schedules
from profile data. \dg{ProfPace and the transmission schedules it
  generates are not trusted for security.}

\begin{figure}[t]
    \includegraphics[width=0.48\textwidth]{figures/arch.pdf}
    \centering
    \caption{\sys\ architecture\todo{Needs update}}
    \label{fig:pacer-arch}
\end{figure}

\paragraph{HyPace} enacts the transmission of a packet at the
precise times defined in the {\em transmission schedule} in effect for
the packet's flow, generating a padded packet if no payload is
available at the scheduled time. To simplify timing isolation, we
ensure that HyPace runs on a core that is not used by any tenant
VM. \dg{The previous sentence should go to the previous paragraph,
  where we mention that HyPace meets requirement e) but not why.}
HyPace's design and implementation are described in detail in
Section~\ref{sec:hypace}.

\paragraph{The guest application} installs transmission
schedules at appropriate points within a network interaction
sequence, such as the beginning of a TCP handshake, a TLS/SSL
handshake, and a response to a client request. \dg{The previous sentence misleads the interpretation of the next paragraph. It suggests that the guest application installs all transmission schedules. The next paragraph mentions default schedules, but is silent on who installs them. The obvious interpretation is that the guest application does this, which is incorrect.} When installing a
transmission schedule, the application provides the flow's 5-tuple {\flow}
and a schedule id {\sid}. The {\sid} is either a unique id indicating a
particular protocol action (e.g., TCP or TLS handshake) or the hash of
the public input $H_p$ of a client request.  The former kind of {\sid}
is used early in a response's lifetime before the guest application
has received all inputs, whereas the latter kind is selected by the
guest application once it has parsed the public inputs. \dg{I didn't understand the relevance of the previous sentence.}

\paragraph{GPace} receives IOCTL calls from guest applications
to install a transmission schedule identified by {\sid} for a flow {\flow}.
As explained in more detail in Section~\ref{sec:gpace}, an application
installs two types of schedules. A {\em default schedule} is
instantiated when an incoming packet is received on a new flow, or on
an existing flow whose previous transmission schedule has expired.  A
{\em custom schedule} $S_{new}$ immediately replaces the schedule
$S_{curr}$ currently in effect for {\flow} if, and only if, the prefix of
transmissions in $S_{curr}$ already transmitted is identical to the
prefix in $S_{new}$. Otherwise, the current schedule is played out
completely before the new schedule takes effect. \dg{Mention here that
  the default schedule is installed automatically, while the custom
  schedule is installed by the guest application.}

GPace pads all payload packets to the network's maximal transfer unit
(MTU) size.  Jointly with HyPace, GPace ensures that all outgoing
packets are transmitted according to a pre-determined schedule. A
schedule's start time is defined relative to the arrival time of any
incoming packet that causally precedes the schedule's transmissions.

In the client's OS, GPace processes IPsec packets and removes the
padding. \dg{This makes it sound as if GPace also decrypts the IPSec packets. I'm guessing that's done by the standard IPSec code path in the kernel.} The presence of \sys is transparent to client applications
and no hypervisor is required on client devices.  GPace is described
in more detail in Section~\ref{sec:gpace}.

\paragraph{ProfPace} generates the database of transmission
schedules by profiling a guest's network interactions.  It consumes
the recorded arrival times of incoming packets and times at which the
guest finished preparing response packets, as well as the times at
which applications install new transmission schedules. From this data,
it generates a transmission schedule for each public input based on
the sampled distributions of various aspects of the recorded
traffic. ProfPace is described in Section~\ref{sec:profiling}.

\fi

\section{Cloaked tunnel}
\label{sec:tunnel}

In this section, we describe an idealized realization of the {\em cloaked
tunnel} abstraction
and security properties, independent of a specific application setting,
implementation, or placement of tunnel entry and exits.
We begin with a discussion of three high-level properties required in a
practical, secure cloaked tunnel. These properties mirror the requirements R1,
R5, and R6 from \S\ref{sec:intro}.
{\bf P1.}~{\em Secret-independent traffic shape:} Requires that
transmissions follow a schedule that does not depend on secrets (R1),
and that actual transmissions within a schedule are not delayed by
potentially secret-dependent computations (R6).
{\bf P2.}~{\em Unobservable payload traffic:} The traffic shape must not
reveal, directly or indirectly, an application's actual time and rate
of payload generation and consumption. This implies that flow control
must not affect the traffic shape (R5); that padded content must
elicit the same response (e.g., ACKs) from receivers as payload data;
and that packet encryption must encompass the padding. This in turn
requires that padding be added at or above the transport layer, while
encryption be done below the transport layer.
%
{\bf P3.}~{\em Congestion control:} The tunnel must react to network
congestion (R5). Congestion control is needed for network stability
and fairness, but does not reveal secrets since it reacts to network
conditions, which themselves depend only on shaped and third-party
traffic.

\if 0
We begin with the requirements for a traffic-shaped tunnel.
{\em Secret-independent transmission schedule:} Data must be
transmitted according to a schedule that does not depend on secrets.
{\em Padding in time and space:} The schedule must be executed
independent of how much payload data is generated when by the
application.
{\em Unobservable padding:} Padding must be (i) part of the transport
layer data stream to ensure it elicits the same response (e.g. ACKs)
from receivers as payload data; and (ii), packet encryption must
encompass the padding to make it unobservable.
%
{\em Congestion control:} The tunnels must suspend transmission
schedules in response to network congestion to ensure network
stability. Doing so does not reveal secrets because it reacts to
network conditions, which in turn only depend on shaped or third-party
traffic.
{\em Unobservable flow control:} Conventional flow control, on the
other hand, is not secure because it suspends transmissions depending
on the rate of payload consumption by the receiver. Instead, flow
control must be made unobservable by continuing padded packet
transmissions even if the receiver is slow to consume payload.
\fi

\begin{figure}[t]
    \includegraphics[width=\columnwidth]{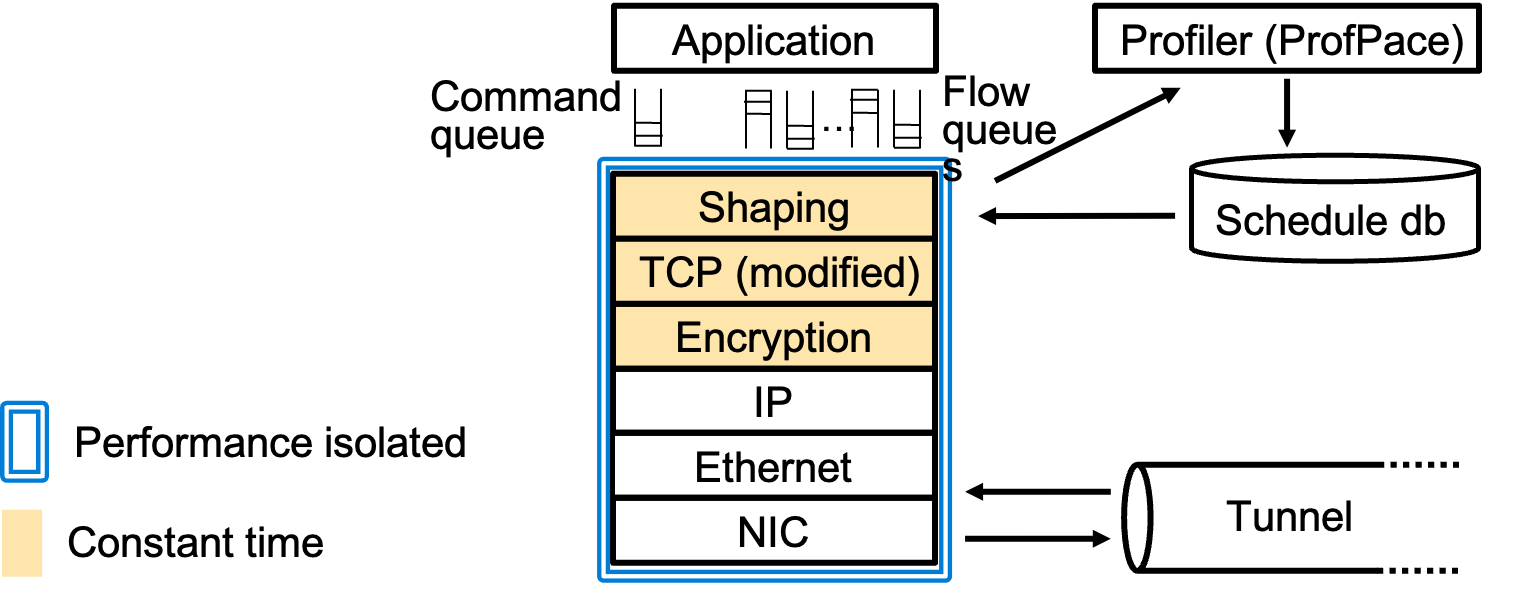}
    \centering
    \caption{Cloaked tunnel (one endpoint)}
    \label{fig:tunnel}
\end{figure}

\subsection{Idealized tunnel design}
\label{ssec:idealized}
Figure~\ref{fig:tunnel} shows the cloaked tunnel's architecture. The
tunnel protocol stack runs on both tunnel endpoints. (Only one of two
symmetric endpoints is shown in the figure.)  The 
stack consists of a {\em shaping} layer on top of a modified transport
layer (e.g., TCP) on top of the encryption layer, e.g. IPSec.
These layers rest on conventional IP and link
layers. Each tunnel is associated with a flow identified by a 5-tuple
of source and destination IP addresses and ports, and the transport
protocol.\footnote{We describe the tunnel in terms of TCP; however,
  another stack like QUIC~\cite{langley2017quic} can also be used.}

The shaping layer initiates transmissions according to a schedule and
pads packets to a uniform size.  It interacts with applications via a
set of shared, lock-free queues. The layer takes application data from
a per-flow outbound queue and transmits it in the tunnel. It places
incoming data from the tunnel into a per-flow inbound queue. Finally,
it receives traffic indicators and per-flow cryptographic keys (to be
used by the encryption layer) via a per-application command
queue.

Overall, the shaping layer shapes each flow separately, and then multiplexes the
shaped traffic of all flows onto the same physical links at the granularity of
packets (R2 of~\S\ref{sec:intro}).

A user-level gray-box profiler, ProfPace, analyzes
timestamps and traffic indicators collected by the tunnel, and generates and
stores
transmit schedules in a {\em schedule database}~(\S\ref{sec:profiling}).

\smallskip\noindent \textbf{Assumptions.~} The tunnel design presented
here relies on some idealized assumptions, which are relaxed in the
practical design of \S\ref{sec:design}. To ensure that packet
transmissions are not delayed by secret-dependent contributions
(property P1), the design
assumes that processing delays in the tunnel network stack are not
influenced by secrets, even indirectly.  This requires that: (i) the
tunnel's layers---especially the shaping, transport, and encryption
layers, which operate on cleartext data---execute in {\em constant
  time}, \ie they avoid data-dependent control flow and memory access
patterns;
and (ii) the execution of the tunnel network stack is {\em
  performance-isolated} from the application and any other
computation.

%

\if 0
In response to schedule installation requests, the shaping layer looks
up the associated transmissions schedules in a schedule database and
associated the schedule with a flow.

In addition, the shaping layer executes in constant time with respect
to both the amount and content of packets it processes. In practice,
this is achieved by avoiding data-dependent control flow and
data-dependent memory access patterns in the implementation of the
layer. The importance of this requirement will be become clear below.

\fi

%

\smallskip\noindent
\textbf{Outbound data processing.~}
A timestamp is taken whenever data is queued by the application; these
timestamps and the recorded traffic indicators are shared with the
gray-box profiler.  The shaping layer retrieves a chunk of available
data from the flow's outbound queue whenever a transmission is due on
a flow according to the active schedule (if any) and TCP's congestion
window is open (see transport layer below).  The layer removes a number
of bytes that is the minimum of (i)~the available bytes in the queue,
(ii)~the receiver's flow control window (see transport layer below),
and (iii)~$M$, the network's maximal transfer unit (MTU) minus the
size of all headers in the stack.  If fewer than $M$ bytes (possibly
zero) were retrieved from the queue due to payload unavailability or
flow control, the shaping layer pads the chunk to $M$ bytes. It adds a header
to indicate the amount of padding added.

\smallskip\noindent
\textbf{Transport layer.~}
The transport layer operates as normal, except for two
tunnel-related modifications to satisfy R5:~(i) When the congestion window closes, the transport layer signals the
shaping layer to suspend the flow's transmit schedule until the
window reopens. Schedule suspension ensures network
stability and TCP-friendliness, and does not leak information because
it depends only on network conditions, which are visible to the
adversary anyway.
%
(ii) Flow control is modified to make it unobservable to the
adversary. The transport layer signals to the shaping layer the size
of the flow control window advertised by the receiver. This window
controls how much payload data is included in packets generated by the
shaping layer (see outbound data processing above). The transport layer transmits packets
irrespective of the flow control window, sending dummy packets while
the window is closed, which are discarded at the tunnel's other end.

The transport layer passes outbound packets to the encryption layer,
which adds a message authentication code (MAC) keyed with the flow's
key to a header and encrypts the packet with the flow's key. Finally,
encrypted packets are passed to the IP layer, where they are processed as normal
down the remaining stack and transmitted by the NIC.

\smallskip\noindent
\textbf{Inbound packet processing.~}
Packets arriving from the tunnel are timestamped; the stamps are
shared with the profiler. Packets pass through the layers in reverse
order, causing TCP to potentially send ACKs. The encryption layer
decrypts and discards packets with an incorrect MAC. The shaping layer
strips padding and places the remaining payload bytes (if any) into
the inbound queue shared with the application.

\smallskip\noindent
\textbf{Schedule installation.~}
A transmit schedule must be installed on a flow before data can be
sent via the tunnel.
A schedule is associated with each flow's 5-tuple {\flow}
and a traffic id {\sid}, and can be of two types: {\em default} and {\em
custom}. A default {\sid} maps to a default schedule that is installed when the
flow is created. This schedule acts as a template, which is instantiated
automatically by the shaping layer whenever a packet arrives that
indicates the start of a new network exchange (\eg a GET request on a persistent
HTTP connection), identified by the TCP PSH flag. The schedule starts at the
arrival time of the packet that causes the schedule's instantiation.

A default schedule active on a flow can be optionally extended by a custom
schedule in response to an application's {\sid}.  For
instance, a default schedule that allows a TLS handshake might be
extended with one that is appropriate for the response to the first
incoming network request.
The shaping layer looks up the schedule associated with {\sid} in the
schedule database and associates it with flow {\flow}.
Every custom schedule has a sufficiently
large initial delay to allow the schedule to reach the tunnel endpoint
before the first scheduled transmission, despite any queueing
delays. Thus,~the precise time of schedule extension remains
unobservable to the adversary.

\if 0
If the application wishes to initiate communication, it indicates the
installation of a custom schedule, which takes immediate effect if no
schedule is currently active.  For security, such communication must
not be causally preceded by (i.e., respond to) an event observable to
the adversary. Otherwise, an adversary could use the start time of the
schedule to measure the guest's response time to the event.

\paragraph{Admission control}
The transport protocol's congestion control allows the dynamic sharing
of network bandwidth among different flows and applications within and
outside the tunnel. The same mechanism also takes care of contention
among tunnelled flows.
However, hosts can use admission control to manage contention on their
attached network link more efficiently.  By design, the tunnel commits
network bandwidth at the time of schedule installation. Specifically,
a custom schedule commits bandwidth inversely proportional to its
inter-packet spacing for its duration. A default schedule, on the
other hand, commits bandwidth while it is being installed on a flow,
because an incoming packet may cause the schedule's instantiation at
any time.
To avoid overcommitting its attached network link, a host can refuse
or delay installation in excess of the link's capacity, therefore
throttling applications early while avoiding subsequent packet losses
due to contention on the attached link.
\fi

\if 0
\paragraph{Schedule installation}
An application must install a schedule on a flow before it can
transmit any outbound data via the tunnel. There are two types of
schedules: {\em default} and {\em custom}. An application installs a
default schedule when the flow is created. This schedule acts as a
template, which is instantiated automatically by the shaping layer when
an incoming packet arrives on the flow that indicates the start of a
new network exchange (e.g., a request followed by a response).

The application installs a custom schedule at appropriate points
within a network interaction sequence to extend an active schedule or
to initiate a new schedule. Examples of such interactions are the
beginning of a TLS/SSL handshake, or the response to a client request.

When installing a transmission schedule, the application provides the
flow's 5-tuple {\flow} and a schedule id {\sid}. The {\sid} is either a
unique id indicating a particular protocol action (e.g., TCP or TLS
handshake, connection shutdown) or a value that partitions the
application's inputs by public values. The shaping layer looks up the
associated transmissions schedules in a schedule database and
associates the schedule with {\flow}.

A typical usage pattern is as follows. When an application wishes to
accept incoming connections on a port, it installs a default schedule
that allows a TLS handshake.  Once the application receives a full
request, it replaces the schedule with a custom schedule appropriate
for the response. As explained below, the replacement schedule must
extend the already played out prefix of the default schedule.

If the application wishes to initiate communication, it installs a
custom schedule, which takes immediate effect. For security, such
communication must not be causally preceded by an event observable to
the adversary. Otherwise, an adversary could use the start time of the
schedule to measure the guest's response time to the event.

The two types of schedules differ in the way they are instantiated by
the shaping layer.  A default schedule is instantiated whenever a
payload packet arrives that is the first in a newly created flow or
indicates the start of a new request in an existing flow (e.g., a new
GET request on a persistent HTTP connection).  The schedule starts at
the arrival time of the packet that causes the schedule's
instantiation. This arrival time causally precedes the transmitted
packet sequence.

If no schedule is active on a flow, a newly installed custom schedule
is activated immediately.  Otherwise, the schedule replaces the
currently active schedule if, and only if, the old and new schedules
share a prefix at least as long as the prefix of the currently active
schedule that has already been played out.  Because the new schedule
is anchored at the same time as the profile it replaces and subject to
the prefix constraint, the time of a schedule replacement is
unobservable to the adversary.  An installation of a custom profile
that does not match the played-out portion of the active schedule is
rejected.
\fi

\subsection{Tunnel security}
\label{sec:tunnel-security}
The cloaked tunnel provides the following security property: {\em The
  shape of the traffic in the tunnel does not depend on secrets.}
This property holds because our design ensures that each of the
following is either independent of secrets or unobservable to a
network adversary: {\bf S1.} the chosen transmit schedules, {\bf
  S2.} activations, pauses and resumes of transmit schedules, {\bf
  S3.} timing of updates to active transmit schedules, {\bf S4.}
deviations from transmit schedules, and {\bf S5.} transport-layer
responses to transmitted packets.

{\bf S1} is secret-independent by assumption on how schedules are
picked. {\bf S2} and {\bf S5} depend only on public network events
(like congestion signals) by design. {\bf S3} cannot be observed by a
network adversary due to the delay at the beginning of custom
schedules. {\bf S4} is secret-independent because the network stack is
performance-isolated from the application so there are no side-channel
leaks of secrets \emph{into} the network stack, and all processing on
cleartext data in the network stack is constant-time so there are no
internal secret-dependent delays \emph{within} the network stack.


\if 0
Next, we justify the cloaked tunnel's security, summarized by property
{\bf S0:} {\em The shape of traffic in the tunnel does not depend on
  secrets.}  Follows from S1--S7.\\
{\bf S1:} {\em Transmit schedules are chosen based on public
  information}. By assumption about applications' choice of schedules.\\
{\bf S2:} {\em The traffic in the tunnel is independent of the payload
  traffic.} Holds because packets are (i) padded and transmitted
independently of the application's rate of payload generation and
consumption; (ii) elicit a transport-layer response from the receiver
independently of payload traffic; and (iii) packet contents
including headers that reveal padding are encrypted.\\
{\bf S3:} {\em All packet transmissions follow a schedule.}
Holds because shaping initiates transmissions according to a
schedule.\\
{\bf S4:} {\em Delays between scheduled and actual packet transmission
  times do not reflect secrets.} Follows from the fact that the tunnel
stack, from the shaping layer down, is performance-isolated from any
secret-dependent computation and layers that operate on cleartext are
constant-time.\\
{\bf S5:}~{\em Transmit schedules are activated, paused, and
  re-activated at a secret-independent delay from any observable event
  that causally precedes the pausing or (re-)activation.}~Holds
because (i) pausing, reactivation, and instantiation of default
schedules is performed within the performance-isolated tunnel stack;
and (ii), by assumption, custom schedule installations that take
immediate effect are not causally preceded by an observable event.\\
{\bf S6:} {\em Transmit schedules are suspended and resumed only
  according to the network's congestion state.}  Follows from the
tunnel's transport layer congestion control mechanism.\\
{\bf S7:} {\em Modifications of active transmit schedules do not
  reveal secrets}. Holds because
the time of schedule replacement is unobservable to the adversary (matching prefix).
\fi

\if 0
\pd{Not sure we want to say anything about flow control, but it is
  complicated. More detail in the source}
\dg{Hmmm ... the description in the TeX source is worrisome. We can probably ignore this for now, but this is likely to come up in a review.}
FLow control:

Client-to-server: We need to assume that the maximal request size does not
exceed the size of the guest's socket buffer (difficult for POST
commands, for instance). Else, the adversary can tell from the pattern
of client transmissions when the server consumes the request payload.

Server-to-client: One of two things needs to hold true when the flow
control window is closed:

- both payload and dummy transmissions continue 
- both payload and dummy transmissions pause 

I am guessing the current implementation pauses payload but not
dummies. This reveals what is dummy and what is payload.

\fi


\section{\sys\ design}
\label{sec:design}

We now describe {\sys}, a practical cloaked tunnel design in the
context of a public IaaS Cloud.
%
We first discuss constraints on the design space in the context of an
IaaS Cloud. First, {\em the tunnel entry must be integrated with the
  IaaS server}.  In an IaaS Cloud, colocated tenants typically share
the network link attached to the server and can therefore indirectly
observe each others' traffic.
Therefore, the tunnel entry must be in the IaaS server to ensure the
attached link lies inside the tunnel.
Second, {\em shaping requires padding, which must be done at the transport
layer to ensure it is unobservable}.
Third, the conceptual tunnel design of \S\ref{sec:tunnel} requires
that the {\em network stack is performance-isolated from
  secret-dependent computations and layers that deal with cleartext
  are constant-time}.  All guest computation must be assumed to be
secret-dependent in an IaaS server, suggesting that shaping should be
implemented in the IaaS hypervisor, where it can be executed with
dedicated resources and tightly controlled.


One way to meet these requirements is to place the entire network
stack in the hypervisor, performance-isolate it from guests, and
implement it as constant time. However, this approach has many
limitations. First, ensuring performance isolation for an entire
network stack is technically challenging even in the hypervisor.
Second, implementing the tunnel layers as constant time is not
trivial.  Third, it defeats NIC virtualization, such as
  SR-IOV, and requires guests and their network peers to use the
network stack provided by the IaaS platform. Lastly, it complicates
the hypervisor significantly.


\smallskip\noindent
\textbf{{\sys} architecture.~}
\sys\ addresses the tension outlined above using a paravirtualization
approach.
The responsibilities are divided between the hypervisor and the guest OS such that
{(i) the hypervisor can ensure tunnel
  security with~only weak assumptions about a guest's rate of progress;
  (ii) the performance-isolated hypervisor component is small; (iii)
  the guest OS changes are modest}.
%
We extend the hypervisor to provide a {\em small set
  of functions that allows guests to implement a cloaked
  tunnel}, while guests retain the flexibility to use custom network
stacks on top of a virtualized NIC.

\begin{figure}[t]
  \includegraphics[width=\columnwidth]{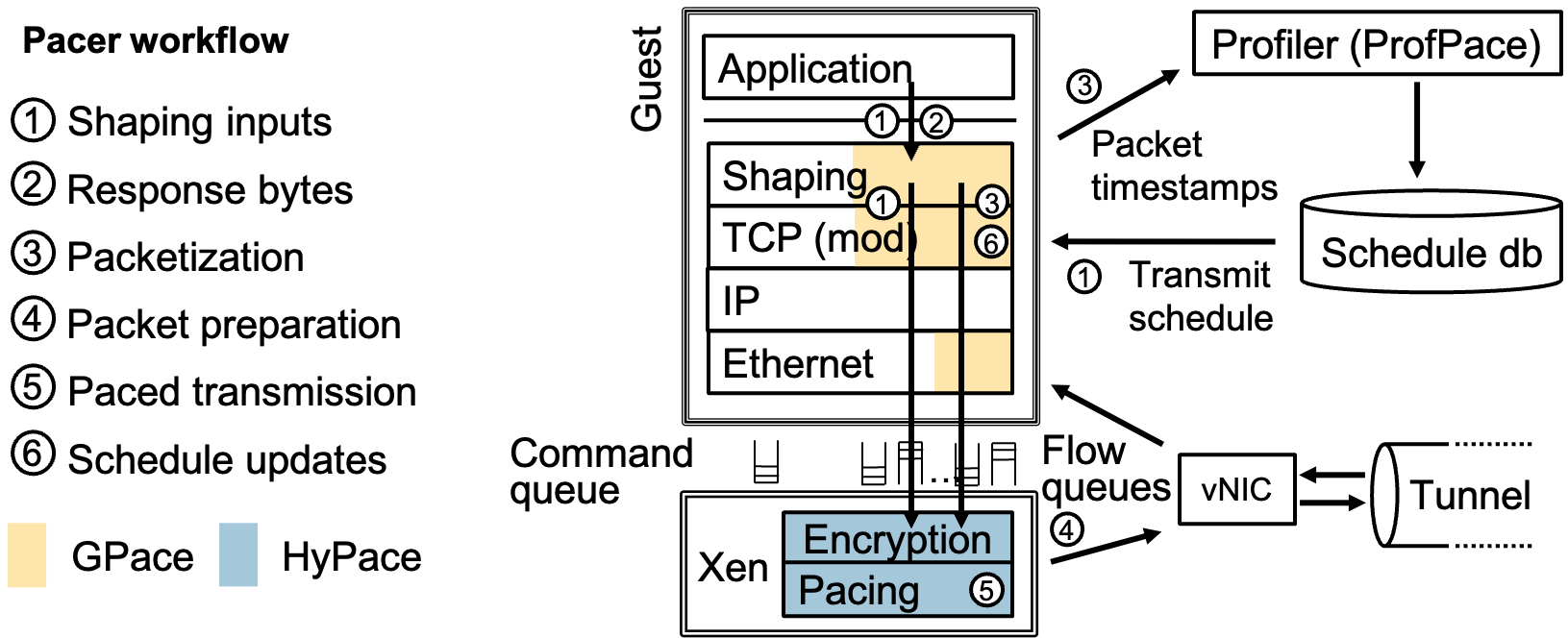}
    \centering
    \caption{\sys\ architecture and workflow.
    }
    \label{fig:pacer-arch}
\end{figure}

Figure~\ref{fig:pacer-arch} shows \sys's architecture and workflow.  Unlike the
strictly layered tunnel stack from \S\ref{sec:tunnel}, \sys\ factors
out a small set of functions that inherently require
performance-isolation into the lowest layer, implemented in the IaaS
hypervisor.  The {\em HyPace} component plugs into Xen and provides
these functions.  The {\em GPace} component, a Linux kernel module,
plugs into the guest OS and the OS of network clients that
interact with the guest. It implements the cloaked tunnel in
cooperation with HyPace.

The guest has direct access to a SR-IOV virtual NIC (vNIC) configured by the
hypervisor, which it uses to receive but not to transmit packets.
When the guest application receives a request, it sends a traffic indicator to
GPace, which shares the indicator along with the flow's 5-tuple, TCP sequence
number, congestion window, and crypto key in a per-flow datastructure to HyPace.
HyPace instantiates a transmit schedule based on the
indicator~\protect\circled{1}.
When the application forwards response bytes to GPace~\protect\circled{2}, GPace
splits the payload into MTU-sized packets with necessary padding, placing them
in the per-flow structure, and timestamps the outgoing packets, sharing the
flow's traffic profile with ProfPace~\protect\circled{3}.
GPace also generates retransmission packets for both payload and dummy packets.
At a scheduled transmit time, HyPace picks a payload packet or generates a dummy
packet, encrypts and adds MACs to the packet, and places it in the NIC
transmission queue~\protect\circled{4}.
HyPace initiates NIC transmission after {\em masking} potentially
secret-dependent delays in its execution~\protect\circled{5}.
Additionally, it adapts the schedules in response to network events (\eg
congestion and retransmission) based on GPace's signals~\protect\circled{6}.

By generating dummy packets subject to congestion control and independently from
the guest network stack, {\sys} requires performance-isolation only for HyPace
but not the guest.
Overall, \sys's security properties remain equivalent to those of the conceptual
cloaked tunnel design (\S\ref{sec:tunnel-security}),
as we discuss in \S\ref{sec:pacer-security}.
%

\subsection{HyPace}
\label{sec:hypace}

Similar to the shaping layer in the conceptual tunnel design, HyPace
receives traffic indicators from applications (via GPace),
instantiates template schedules in response to incoming packets
(signaled by GPace), and initiates transmissions.  To ensure tunnel
security despite potentially secret-dependent delays in the guest,
however, HyPace performs additional functions and there are
differences, which we discuss next.

\if 0

\todo{Rose's main feedback was that the differences from the conceptual 
architecture need to be made more explicit. The differences are: (i) in the 
interface, (ii) split padding between GPace and HyPace, (iii) masking instead 
of constant-time implementation. Also, note that we use the masking mechanism 
in two cases: packet transmission in HyPace, and schedule activation, which is 
mediated by GPace. Not sure how to explain this.}

PD: Not sure this is necessary. This entire subsection is about the
differences.  However, I added a clause in the previous subsection to
preview the masking approach.

\fi

HyPace implements padding, encryption, congestion control, \update{and
retransmissions} in
cooperation with the guest.  HyPace pauses a transmit schedule when a
flow's congestion window
closes and resumes the schedule when it reopens. When a transmission
is due on a flow and the congestion window is open, HyPace checks
whether the guest has queued a payload packet. If not, it generates a
dummy packet with proper padding, transport header, and encryption,
using the next available TCP sequence number and the flow key shared
with the guest. Next, it initiates the transmission of the payload
or dummy packet, reduces the congestion window accordingly, \update{and
initiates a retransmission timeout for the packet}.
\update{Finally, in case of retransmissions (either due to a timeout on expected
ACKs or due to receiving duplicate ACKs), HyPace extends the transmit
schedule with a slot for every packet retransmitted by GPace. Unlike the
generic tunnel, where the shaping occurs above the transport layer, this
schedule extension is necessary to enable retransmissions. Note that
extending schedules in response to retransmission events is secure
because retransmissions occur only when there are packet losses in the network,
which are publicly observable.} 
%

\smallskip\noindent
\textbf{Interface with guests.~}
HyPace shares a memory region pairwise with each guest. This region
contains a data structure for each active flow.  The flow structure
contains the following information: the connection 5-tuple associated
with the flow; a sequence of transmit schedule objects; the
current TCP sequence number and the right edge of the congestion
window; the flow's encryption key; and, a queue of packets
prepared for transmission by the guest. Each transmit schedule object
contains the {\sid} and a starting timestamp.
HyPace and the guest use lock-free synchronization on data they share.

\smallskip\noindent
\textbf{Packet transmission.~}
HyPace transmits packets according to the active schedule in the
packet's flow. From a security standpoint, packets need not be
transmitted at the exact scheduled times; however, any deviation
between scheduled and actual time must not reveal secrets.

On general-purpose server hardware, it is challenging to initiate
packet transmissions such that their timing cannot be influenced by
concurrent, secret-dependent computations. Using hardware timers,
events can be scheduled with cycle accuracy. However, the activation
time and execution time of a software event handler is influenced by a
myriad of factors. These may include (i) disabled interrupts at the
time of the scheduled event; (ii) the CPU's microarchitectural, cache,
and write buffer state at the time of the event; (iii) concurrent bus
traffic; (iv) frequency and voltage scaling; and (v) non-maskable
interrupts during the handler execution.  Many of these factors are
influenced by the state of concurrent executions on the IaaS server
and may therefore carry a timing signal about secrets in those
executions.

\smallskip\noindent \textbf{Masking event handler execution time.~}
HyPace masks hardware state-dependent delays to make sure they do not
affect the actual time of transmissions.
A general approach is as follows.
First, we determine empirically the distribution of delays
between the scheduled time of a transmission and the time when
HyPace's event handler writes to the NIC's {\em doorbell register},
which initiates the transmission.  We measure this distribution under
diverse concurrent workloads to get a good estimate of its true
maximum \update{and update the estimate whenever a new maximum is
  observed at any time during a system's execution.} We relax this
estimate further to account for the possibility that we may not have
observed the true maximum and call this resulting delay
$\delta_{xmit}$. Second, for a transmission scheduled at time $t_n$,
we schedule a timer event at $t_n - \delta_{xmit}$.  Third, when the
event handler is ready to write to the NIC doorbell register, it spins
in a tight loop reading the CPU's clock cycle register until $t_n$ is
reached and then performs the write.  By spinning until $t_n$, HyPace
masks the event handler's actual execution time, which could be
affected~by~secrets.

Unfortunately, the measured distribution of event handler delays has a
long tail. We observed that the median and maximum delay can differ by
three orders of magnitude (tens of nanoseconds to tens of
microseconds).  This presents a problem: With the simple masking
approach, a single core could at most initiate one transmission every
$\delta_{xmit}$ seconds, making it infeasible to achieve the line rate
of even a 10Gbps link. Instead, we rely on {\em batched
  transmissions}.

\smallskip\noindent
\textbf{Batched transmissions.~}
The solution is based on two insights. (i) Our extensive empirical
observations indicate that the instances in the tail  of the event handler delay
distributions tend to occur very infrequently and never in short
succession\footnote{Without the knowledge of Intel CPU internals, it is
difficult to determine the exact cause of the tail latencies, but their
frequency suggests that they may be caused by system management interrupts.}.
As a result, the maximal delay for transmitting $n$ packets in a single event
handler activation does not increase much with $n$.  Hence, we can amortize~the
overhead of masking handler delays over $n$ packets.
(ii) Actual transmission times can be delayed as long as
the delay does not depend on secrets. Hence, it is safe to batch
transmissions.

We divide time into {\em epochs}, such that all packet transmissions
from an IaaS server scheduled in the same epoch, across all guests and
flows, are transmitted at the end of that epoch. An event handler is
scheduled once per epoch. It prepares all packets scheduled in the
epoch, spins until the batch transmission time, and then initiates the
transmission with a single write to the NIC's doorbell register.

Let us consider factors that could delay the actual packet transmission time
once the spinning core issues the doorbell write. Reads were executed before the
spin, so the state of caches plays no role. The write buffer should be empty
after the spin.  Interference from concurrent NIC DMA transfers reflects shaped
traffic and is therefore secret-independent. Similarly, any delays in the NIC
itself due to concurrent outbound or inbound traffic cannot depend on secrets.
However, the doorbell write itself could be delayed by traffic on the memory 
bus, PCIe bus, or bus controller/switch.


\smallskip\noindent
\textbf{Hardware interference and NIC support.~}
A remaining source of delays are concurrent bus transactions caused by
potentially secret-dependent computations. We tried to detect such delays
empirically and have not been able to find clear evidence of them. Nonetheless, 
such delays cannot be ruled out on general-purpose
hardware. A principled way to rule out such interference would require hardware
support.


For instance, a {\em scheduled packet transmission} function provided
by the NIC would be sufficient.  Software would queue packets for
transmission with a future transmission time $t$. At time
$t-\delta_{bus}$, the NIC DMAs packets into onboard staging buffers in
the NIC. Here, $\delta_{bus}$ would be chosen to be larger than the maximal
possible delay due to bus contention. At time $t$, the NIC would initiate
the transmission automatically.  With such NIC support, HyPace
would prepare packets for transmission as usual, but instead of spinning
until $t_n$ it would immediately queue packets with $t=t_n$.  Incidentally,
NIC support for timed transmissions is also relevant for traffic
management, and a similar ``transmit on time stamp'' feature is
already available on modern smart NICs~\cite{napatechSmartNIC}. We
plan to investigate NIC support in future work.

\smallskip\noindent
\textbf{HyPace summary.~} HyPace is a minimal component implemented in the hypervisor,
which is performance-isolated from the guest and enables guests to
implement a cloaked tunnel.  HyPace's careful design masks any
potentially secret-dependent delays in the \update{(re-)}transmission of packets,
obviating the need for a constant-time implementation of any part of
the tunnel's network stack or a performance-isolated guest network
stack. At the same time, the batched transmission design amortizes the
high cost of masking and helps to sustain packet transmission throughput close
to the NIC's line rate.

\if 0

\pd{Following algorithm and description needs to be updated}
The pseudocode of HyPace's transmission event handler is shown in
Algorithm~\ref{algo:xmit}. First, the handler checks if the
transmission is allowed by the current congestion window; if not, the
handler schedules the next event (line 18--19) and returns. Second,
the handler checks if the guest has queued a payload packet. If so, it
dequeues the packet; else, it generates a dummy packet with the
appropriate sequence number. Next, the handler places a descriptor for
the payload or dummy packet into the NIC's transmit queue. At this
point, a write of an appropriate value into the NIC's doorbell
register will initiate the transmission. The handler loads the
appropriate register value in a register and spins until the value of
a clock register matches the packet's scheduled transmission
time. Finally, the handler looks up the earliest future transmission
scheduled on any flow, programs a hardware timer event for that time,
and returns.


\begin{algorithm}[t]
\caption{HyPace packet transmission}
\label{algo:xmit}
\raggedright
\textbf{Transmission even handler.} Invoked by the hypervisor's timer
facility at time $t_f-\delta$, where $t_f$ is the time of the next
scheduled transmission on flow {\flow}, and $\delta$ is the event
handler's worst-case execution time.

\mbox{}

\begin{algorithmic}[1]
  \Procedure{xmit}{{\flow}}
  \If{{\flow}.next $\leq$ {\flow}.cw}
    \If{!empty({\flow}.queue)}
    \State $p \gets$ removeFirst({\flow}.queue)
    \Else
    \State $s \gets$ atomic({\flow}.seq++)
    \State $p \gets$ dummy with seq-num $s$
    \EndIf
    \State place descriptor for $p$ in NIC's xmit queue
    \State $r1 \gets$ TRANSMIT\_CMD
    \While {current\_time < $t_f$}
    \State nop
    \EndWhile
    \State doorbell $\gets r1$
  \Else
    \State postpone {\flow}.schedules by one transmission
  \EndIf
  \State $t \gets$ getEarliest({\flow}.schedules)
  \State schedule event at $t$  
  \EndProcedure
\end{algorithmic}
\end{algorithm}

\fi

\subsection{GPace}
\label{sec:gpace}


GPace is a kernel module that implements a cloaked
tunnel jointly with HyPace\footnote{On the client-side, 
GPace terminates the tunnel in the kernel.}. GPace
pads outgoing TCP segments to MTU size and removes the padding
on the receive path. It modifies Linux's TCP implementation to
share its per-flow congestion window and sequence number with HyPace,
and to notify HyPace of retransmissions so that HyPace can extend the
active schedule.
\update{Furthermore, in case of a retransmission, GPace starts with
retransmitting the first unacknowledged TCP sequence number. If this sequence
number is for a dummy, GPace generates a dummy packet and sends it to HyPace,
which eventually transmits it at a scheduled time.}

Note that TCP's flow control window is not advertised to HyPace,
causing HyPace to send dummies if the receiver's flow control window
is closed, as required.  GPace timestamps outbound data arriving from
applications and inbound packets from the tunnel in the vNIC interrupt
handler. All timestamps and recorded traffic indicators are used
by the profiler (\S\ref{sec:profiling}).

GPace allows applications to install session keys and provide traffic
indicators on flows via IOCTL calls on network sockets.  Recall that
applications specify a flow, a traffic id {\sid} and a type as
arguments when indicating traffic.  GPace passes this information into
the per-flow queue shared with HyPace, which uses the {\sid} as an
index to look up the corresponding transmit schedule in the database.

\if 0
\paragraph{Retransmissions}
Whenever the guest's transport protocol initiates a retransmission on
a flow, it notifies HyPace to extend the currently active transmission
schedules on the flow by one transmission slot. This extension does
not reveal any information not already observable to the adversary.
If multiple schedules are queued on the flow, HyPace extends the last
schedule. This ensures that the adversary cannot detect which flow's
packet is being retransmitted under any circumstances.
\fi

\smallskip\noindent
\textbf{Packet processing.~}
With GPace, the guest OS generates TCP segments as usual, but pads them to the
MTU size
before passing them to the IP layer\footnote{ACKs are not padded as
{\sys} does not need to hide client traffic shape. However, ACKs
are paced to hide guest's interference with their transmission.}.
Instead of queuing packets in the
vNIC's transmit queue, GPace queues them in per-flow transmit queues
shared with HyPace. The guest OS processes incoming packets as usual
by accepting interrupts and retrieving packets directly from its vNIC.

\if 0

\paragraph{Schedule (re-)activation delays}
\label{ssec:delays}

Unlike the idealized tunnel, \sys\ processes incoming network packets
in the guest, which is not performance-isolated.  Therefore, care must
be taken to ensure that the time of activation or re-activation of a
transmit schedule does not reveal the guest kernel's execution
time, which could depend on secrets.  Schedule (re-)activation must
occur at a defined, secret-independent delay from the event that
causally precedes the (re-)activation.  There are four such events to
consider, as illustrated in Figure~\ref{fig:deltas}:

a) {\em The arrival of the first packet of a request.}  GPace
instantiates a default schedule with a start time equal to the
packet's arrival time. To make sure the first transmission occurs in
time, we require the initial response time of any default schedule be
larger than $\delta+\delta_1$, where $\delta_1$ is the WCET of the
guest OS'es incoming packet processing up to the installation of the
default profile. \dg{What is $\delta$? I'm guessing it is the
  hypervisor's packet processing time. That was called $\delta_1$ in
  V-A(c), but $\delta$ in Algorithm~\ref{algo:xmit}.}

b) {\em The arrival of an ACK that opens the congestion window.}
GPace ensures the ACK does not enable a transmission that is scheduled
within $\delta+\delta_2$ of the ACK's arrival, where $\delta_2$ is the
WCET of the guest OS'es incoming packet processing up to the opening
of HyPace's congestion window.

c) {\em The arrival of an ACK that causes a retransmission.}  GPace
ensures the ACK does not enable a transmission that is scheduled less
that $\delta+\delta_3$ from the ACK's arrival, where $\delta_3$ is the
WCET of the guest OS'es incoming packet processing up to queuing the
retransmitted packet for HyPace.

d) {\em A timeout that causes a retransmission.}  GPace ensures the
timeout does not enable a transmission that is scheduled within
$\delta+\delta_4$ of the timeout, where $\delta_4$ is the WCET of the
timeout handler up to queuing the retransmitted packet for HyPace.

The WCETs were determined experimentally. They can be bounded
relatively tightly because each guest uses dedicated cores and all
processing in question occurs within the guest Linux kernel's
``softirq'' context, which has higher priority than any application
execution.

\dg{I think we should not use the term WCET in this paper. Many
  readers will implicitly associate WCET with static analysis which
  necessarily covers the longest path. This will draw attention to
  potential security flaws ensuing from lack of coverage in our
  empirical testing of WCETs (which we might indeed have, but are
  unlikely to). I suggest using a generic phrase such as ``a
  conservative upper bound on execution time''.}
\fi

\smallskip\noindent
\textbf{Schedule (re-)activation delays.~}
\label{ssec:delays}
Unlike the conceptual tunnel design, \sys\ processes inbound network
packets and TCP timeouts in the guest, which is not
performance-isolated. Thus, the
delay between two causally related network events $e_1$ and $e_2$ must be made
independent of actual processing delays in the guest, which may otherwise reveal
secrets.

  
There are three relevant causally related pairs of events: 1) The
arrival of the first packet of a request ($e_1$), which triggers the
instantiation of a default schedule with start time equal to $e_1$'s
timestamp, and the subsequent transmission of a packet ($e_2$)
according to the schedule, 2) An incoming ACK ($e_1$) that either
causes a retransmission or opens the congestion window and triggers
the next packet transmission ($e_2$), and 3) a network event ($e_1$)
that sets a timer and subsequently causes a retransmission when the
timer expires ($e_2$).

In each case, GPace uses masking to hide variability in the
  processing time between $e_1$ and $e_2$. Let $\epsilon$ be HyPace's
  epoch length, $\delta_{delay}$ be the guest OS's empirical
  \emph{maximum} inbound packet- and timer-processing time, and
  $\delta = \epsilon + \delta_{delay}$. Then, for (1): GPace requires
  that the initial response time of any default schedule be larger
  than $\delta$; for (2): GPace ensures that $e_2$ is scheduled no
  earlier than $\delta$ after $e_1$; for (3): GPace ensures that $e_2$
  is scheduled no earlier than $\delta$ after the timeout. These rules
  make the guest's actual processing time between causally related
  network events unobservable to the adversary.

  \smallskip\noindent
  \textbf{GPace summary.~} GPace is a Linux
  kernel module that implements both ends of a cloaked tunnel, using
  the paravirtualized support from HyPace. It handles padding in
  payload packets, shares outgoing packets with HyPace along with
  per-flow sequence numbers and congestion window state, signals
  HyPace on installation of a new transmit schedule or update of a
  transmit schedule, and masks processing delays between pairs of
  causally related network events.

\if 0

Unlike the conceptual tunnel design, \sys\ processes inbound network
packets in the guest, which is not performance-isolated. Therefore,
care must be taken to ensure that the time of activation or
re-activation of a transmit schedule in response to an inbound packet
does not reveal the guest kernel's execution time, which could depend
on secrets.  Schedule (re-)activation must occur at a defined,
secret-independent delay from the event that causally precedes it,
e.g., a packet arrival.


Let $\epsilon$ be HyPace's epoch length and $\delta_{recv}$ be the
guest OS's empirical maximal inbound packet processing time.
There are four such events to consider:
1)~{\em The arrival of the first packet of a request.}  GPace
instantiates a default schedule with a start time equal to the
packet's arrival time. To make sure the first transmission occurs in
time, we require that the initial response time of any default schedule be
larger than $\epsilon+\delta_{recv}$.
2)~{\em The arrival of an ACK that opens the congestion window.}
GPace ensures the ACK does not enable a transmission that is scheduled
within $\epsilon+\delta_{recv}$ of the ACK's arrival.
3)~{\em The arrival of an ACK that causes a retransmission.}  GPace
ensures the ACK does not enable a transmission that is scheduled less
that $\epsilon+\delta_{recv}$ from the ACK's arrival.
4)~{\em A timeout that causes a retransmission.}  GPace ensures the
timeout does not enable a transmission that is scheduled within
$\epsilon+\delta_{recv}$ of the timeout. Here, we use $\delta_{recv}$
as a conservative upper bound on the delay of the timeout event
handler.
These four rules make the guest's actual processing time for incoming
packets and timeouts unobservable to the adversary.
\fi

\subsection{{\sys} security}
\label{sec:pacer-security}

We built an abstract formal model of HyPace, the guest and the
network, covering essential details such as delays due to internal
side channels and HyPace's schedule replacement. We formally proved
that our design provides the standard, strong security property of
\emph{noninterference}~\cite{DBLP:series/ais/Smith07}---adversaries
learn nothing about guest secrets (in an information-theoretic sense)
despite observing traffic shape. The formal model and the proof are
presented in
\S\ref{sec:formal-model}.

Here, we provide some intuitive justification of \sys's
security. First, \sys's threat model rules out side-channel leaks to
other co-located tenants through shared CPU state, caches, memory
bandwidth and shared Cloud back-end services. Second, it is impossible
to connect to the victim tenant as a (fake) client and elicit even one
response packet because \sys\ requires packet authentication with
pre-shared keys and GPace silently ignores all unauthenticated
packets.
Third, the adversary cannot learn secrets by measuring the shape of
the victim's traffic because, like the cloaked tunnel of
\S\ref{sec:tunnel}, \sys\ ensures that the shape of outgoing traffic
does not depend on secrets.  This holds because {\bf S1}--{\bf S5}
from \S\ref{sec:tunnel-security} are either unobservable or
independent of secrets for \sys\ as well. The only nontrivial
difference is in the secret-independence of {\bf S4}: while the
cloaked tunnel relies on performance-isolation and a constant-time
implementation of the network stack, \sys\ relies on the empirical
delay-masking mechanisms as above.

\update{
Among the empirical \sys\ parameters, only $\delta_{xmit}$ and
$\delta_{delay}$ are relevant for security; all others like the epoch
length and batch size merely affect performance. If actual delays exceed these
two parameters,  the actual runtime of the transmit handler
or the inbound packet/timer handlers could be exposed, which may  be correlated with victim secrets.
}

\update{
However, to exploit this vector, a colocated adversary would have to
first find a way to cause a delay in the execution of these handlers
beyond what was observed during {\sys}'s systematic training phase for computing
the masking delays. This is
difficult because the adversary is unprivileged relative to handler
executions in both the guest kernel and the hypervisor and, hence, limited in its ability to cause these handlers to preempt. Second,
the adversary would have to extract the secret from the observed
run time. This is difficult because the adversary does not generally
know the nature of the correlation between the secret and the observed
run time. The adversary cannot rely on statistical inference since it can
observe only a single instance of a parameter violation (\sys\ updates the
parameter whenever a new maximum delay is observed). We discuss the
security of \sys's masking in
\S\ref{sec:eval-masking}.
}

\if 0
\update{Fourth, the adversary may attempt to alter the shape of the
  victim's traffic by delaying or dropping victim's packets, or
  indirectly by inducing delays in HyPace or GPace. Delaying GPace
  would only cause HyPace to transmit more dummy packets, which are
  indistinguishable from payload packets.  Delaying HyPace may at
  worst increase retransmissions due to timeouts, but these are not
  dependent on victim's secrets.}

\update{Finally, a remaining source of risk is in {\sys}'s delay-masking
mechanisms, which may be compromised in two ways.
First, if the mask values are too small, several packet transmissions may be
delayed in production beyond the masking delays, which would reveal the actual
packet processing times. We used extensive measurements to determine the mask
values on our experimental setup (\S\ref{sec:microbenchmarks}). We have seen
{\sys}'s delays exceed these values only rarely. We have not
found any correlations between the long delays and guest secrets, nor can we
reproduce them consistently (see \S\ref{sec:eval-masking} for a detailed
discussion).  Nevertheless, one cannot completely rule out such correlations,
which could potentially leak guest secrets. One could mitigate such leaks by
increasing a mask value, \eg by doubling it, every time HyPace's or GPace's
execution delay exceeds the current mask value. This approach
is similar to the predictive mitigation
technique~\cite{askarov2010,zhang2011predinteractive}. An extensive evaluation
of this mitigation is left to future work.}
\update{Second, the adversary may induce delays in HyPace or GPace longer than
the mask values. However, it cannot generate delay amounts correlated with the
victim's specific secrets on which the next packet transmission depends (see
\S\ref{sec:eval-masking}).}
\fi



\if 0
We justify \sys's overall security. \sys's threat model rules out side
channels via shared CPU state, caches, and memory bandwidth, as well
as shared Cloud back-end services. Therefore,
the adversary is limited to (i) trying to
connect to the victim as a client and observe the timing and content
of responses, or (ii) measuring the shape of the victim's traffic by
observing packet delays on a shared network link.

Attack (i) is not possible because the adversary cannot elicit a
response from the victim. \sys\ relies on encryption and a MAC keyed
with pre-shared keys and GPace silently ignores incoming packets that
cannot be authenticated.
Attack (ii) is unproductive, because the victim's incoming traffic
shape is secret-independent by assumption and its outgoing traffic is
shaped to be secret-independent.
Next, we justify that the victim's outgoing traffic shape is indeed
secret-independent {\em by design}. In other words, \sys's
tunnel has property S0 of the cloaked tunnel from
\S\ref{sec:tunnel}.

S1 and S3 hold trivially, because the relevant behavior of \sys\ is
equivalent to the conceptual tunnel's.
S2 holds because \sys, like the conceptual tunnel, pads packets above
the transport layer, encrypts packets below the padding layer, and
makes flow control unobservable.
S4 follows from GPace's rules on the pausing and
(re-)activation of transmit schedules.
S5 holds because HyPace's batch transmission mechanism masks the
execution time of its transmission event handler.
S6 holds because HyPace cooperates with GPace to pause and resume
schedules in response to the network's congestion state.
Even though a schedule can be extended in \sys, S7 still holds because
schedule extension happens only in response to a packet loss, which is a public 
event.
\fi

\if 0

First, {\em padded and dummy packet transmissions are unobservable by
  the adversary, therefore hiding the shape of payload traffic.} This
property holds because all packets have the same size, they elicit the
same transport-layer response from clients, and the packet contents
and transport headers are encrypted.

Second, {\em packet transmissions follow a schedule. Delays in the
  schedule can occur only due to the network's congestion state and
  the victim's aggregate load, both of which are already observable to
  the adversary.} The property follows from HyPace's packet
transmission, congestion control, and load dilation mechanisms.

Third, {\em packet transmissions are initiated at their scheduled
  times.} This property follows from the fact that HyPace (i) is able
to transmit a packet independent of available payload from the guest;
and, (ii) activates a top-priority, non-interuptible thread in advance
of the scheduled transmission, by an amount equivalent to the
transmission handler's WCET.

Fourth, {\em transmit schedules are (re-)activated at a scheduled,
  secret-independent delay from any observable event that causally
  precedes the (re-)activation.}  This property holds because GPace
ensures these events do not affect transmissions that occur within the
WCET of the guest's appropriate handlers, as explained in
Section~\ref{ssec:delays}.

\dg{Analogous to the previous point, we also need to talk of the
  secret-independence of the \emph{pausing} of schedules, since the
  absence of a packet on the network can also leak information.}

Fifth, {\em transmit schedules are chosen based on public
  information only}. By assumption about application's choice of schedules.

\fi

\if 0

At a high level, we need to maintain two invariants:

1) The pacing of transmissions don't reveal the server's execution time.

This requires that transmissions happen exactly when the schedule says
so. We ensure this by scheduling an event delta1 before the scheduled
transmission, where delta1 must be >= the WCET of HyPace's
transmission event handler.

2) The time of a transmission relative to the event that causally
precedes the transmission does not reveal the sender's execution time.

Three types of events are relevant:

a) an incoming packet that is the first of a request.

For this, we need to ensure that the WCET of the guest's handling of
incoming packets up to the default schedule installation is less than
the delay to the first transmission in the default schedule - delta1.

b) an ACK that opens the congestion window

Here, the ACK must not enable a transmission that is scheduled within
delta2 of the ACK's arrival time, where delta2 > WCET of the guest's
incoming packet processing up to TCP's adjusting the congestion window
+ delta1.

c) an ACK that causes a retransmission

Here, the ACK must not enable a transmission that is scheduled less
that delta3 from the ACK's arrival time, where delta3 > WCET of the
guest's incoming packet processing up to queuing the retransmitted
packet for HyPace + delta1

d) a timeout that causes a retransmission

Here, the timeout must not enable a transmission that is scheduled
within delta4 of the timeout, where delta4 > WCET of the timeout
handler up to queuing the retransmitted packet for HyPace + delta1.

We can simplify the above by letting delta2=delta3=delta4= WCET of any
of the executions under b-d + delta1.

\paragraph{Security (SP 1)} {\em Dummy packet
transmissions are unobservable by the adversary.} This property holds
because the shape and timing of dummies is indistinguishable from
payload packets, they elicit the same response from client as payload
packets, and the adversary cannot observe packet headers and content
in our threat model.

\paragraph{Security (SP 2)}
{\em HyPace ensures that all packet transmission follows a
  pre-determined schedule. Moreover, observable delays in the schedule
  can only occur due to the network's congestion state and the
  victim's overall load, both of which are already observable to the
  adversary.} The property follows from HyPace's packet
transmission, congestion control, and load dilation mechanisms.

\paragraph{Security (SP 3)}
{\em GPace ensures that unauthorized incoming packets are ignored, and
  a response to authorized incoming packets is transmitted regardless
  of any secret-dependent delays the guest application may suffer.}
This property results from the combination of the IPsec configuration,
the precise timing of incoming of packets, and the timely installation
of transmit schedules.
\fi

\section{Efficient transmission schedules}
\label{sec:profiling}

By default, {\sys} can use the same transmit schedule for all of a
guest's network traffic. This approach does not require any support
from tenant applications and is perfectly secure.  In practice,
however, tenants can significantly reduce bandwidth and latency
overhead by using different schedules for different partitions of
their workload. As long as those partitions are chosen using public
information, no information is leaked. Here, we discuss how
  tenants can safely partition their workload, and use automatically
  generated, efficient schedules for each workload partition.

\subsection{Traffic indicators}
To use custom schedules, a tenant needs
to provide {\em traffic indicators}.  These indicators are used by
\sys\ to instantiate schedules and, along with other logged
information, can be used by {\em ProfPace} to produce transmission
schedules automatically (ProfPace is explained later).

In more detail, traffic indicators are integer-valued events
that a guest generates at appropriate points in its execution. The
indicators serve two purposes:
(i) They indicate the onset of a sequence of transmissions of the class
corresponding to the integer {\sid} argument. This information is used by Pacer
to instantiate an appropriate transmission schedule for the sequence.
(ii) They delimit semantically related packet exchanges within a network flow,
e.g., a client request from the guest's corresponding response.  The integer
{\sid} value of the indicator identifies the equivalence class of the exchange,
e.g., a TCP handshake, a TLS handshake, or
the workload partition to which the request and its response belong, such as the
video resolution in case of {\medsite}.

\smallskip\noindent
\textbf{Instrumenting guests.~} Instrumenting guests to provide traffic
indicators is straightforward. A guest that responds to client
requests on the network, for instance, simply invokes an IOCTL call on
the network socket
before it sends the response. A client, on the other hand,
  calls IOCTL on a new socket to install a schedule before it sends a
  request.  In \S\ref{sec:eval}, we describe how we instrumented
Apache and the PHP applications we use to provide traffic indicators.
\sys\ ensures that the precise timing of the IOCTL call, which
  could reveal secrets, is not reflected in the start time of a
  transmission schedule. If the schedule is instantiated in response
  to a network request, then the schedule is anchored at the request's
  arrival time (see \S\ref{ssec:idealized}). Otherwise, the schedule
  is anchored at a fixed offset from a public event like the top of
  the hour.

\subsection{Choosing workload partitions}
The tenant provides a {\sid} value with each indicator, which
identifies the workload partition and enables \sys\ to use an
efficient schedule.
For {\em performance}, the tenant's choice of {\sid} values should
partition the guest's network traffic into classes of similar
shape. The lower the variance of shapes in each class, the less the
padding required when a specific network response is generated,
minimizing bandwidth overhead. Returning to video streaming in our
running {\medsite} example, there should be a different {\sid} value
for every resolution, and the application should provide this {\sid}
for all videos of this resolution.

For {\em security}, it is sufficient that the choice of {\sid} does
not depend on secrets.
%
First, certain network interaction patterns are well-known and don't
reveal secrets.  For instance, a network server's traffic typically
consists of a TCP handshake, a TLS handshake, and a variable number of
requests and responses on the established connected, followed by a
connection show-down. Using a different {\sid} for each is safe.
Second, the tenant may partition its request-response workload into
equivalence classes, such that the chosen traffic shape reveals the
class but not the specific object requested within a class.  Returning
to the {\medsite} example, all videos with a given resolution may be
considered a class, for which the same {\sid} is used and therefore an
efficient traffic shape (rate) for that class. If all videos are
available in the same set of resolutions, then the resolution reveals
no information about the content requested.

A tenant may choose to further partition its workload into clusters such that
the cluster of a requested object is public, but not the specific object. We
discuss clustering heuristics next.

\subsection{Clustering}
\label{sec:clustering}

Consider a guest that serves a
corpus of objects with a skewed size distribution. Using a single
schedule for the entire corpus requires padding every object to the
largest one in the corpus, incurring a large overhead. Suppose now the
guest can partition the corpus such that each partition contains
objects of similar size, but revealing the partition of a requested
object is not a secret as would be the case when each partition
contains a sufficient large number of objects.  Now, each object can
be padded to the largest object in \emph{its} cluster, which may
reduce overhead significantly without revealing which object within a
partition is being requested.

Determining what clustering is sufficiently private for a specific
content service given its corpus's size and popularity distribution is
beyond the scope of this paper.  We merely highlight here the large
efficiency gains possible when clustering content with skewed size
distributions.
%

We describe heuristic clustering algorithms for videos
and static HTML documents that minimize overhead subject to a \emph{given} privacy
need, which is defined in terms of the minimum number of
objects in any cluster.


\smallskip\noindent
\textbf{Video clustering.~}
We cluster videos according to the sequence and sizes of their 5s
segments using the following algorithm.
Note that dynamically compressed segments differ in size.
Initially, we over-approximate the shape of each video $v_i$ by its
maximal segment size $smax_i$ and its number of segments $l_i$. For
each distinct video length $l$ and each distinct maximal segment size
$s$ in the entire dataset, we compute the set of videos that are
dominated by $\langle l, s \rangle$. A video $v_i$ is dominated by
$\langle l,s \rangle$ if $l_i \leq l$ and $smax_i \leq s$.

Let $c$ be a desired minimum cluster size.
Our algorithm works in rounds. In each round, we select every $\langle l,s 
\rangle$
dominating at least $c$ videos, and we choose as a cluster the set of
videos minimizing the average relative padding overhead per video, \ie
$ \frac{1}{c_i} \sum_{j=1}^{c_i} \sum_{k=1}^{l_i} \left(\frac{(s_k -
      s_{kj})}{s_{kj}}\right) $,
where $c_i$ is the cardinality of the set of videos, $l_i$ is the
maximal length across all videos in the set and $s_k$ is the maximal
size of the $k$th segment across all videos in the set
(\ie $ \max_{1 \leq j\leq c_i}(s_{kj}) $).
The sequence of segment sizes $\langle s_1, s_2, ..., s_{l_i} \rangle$ is the
ceiling of the cluster $c_i$.
Once a cluster is formed, its videos are ignored for later
rounds.
The algorithm stops when all videos are clustered.
If the last cluster has less than $c$ videos, it is merged with the one before
it.

\smallskip\noindent
\textbf{Document clustering.~}
Unlike videos, HTML documents contain a single data object.
Therefore the algorithm clusters based on the single size parameter
of documents, and the largest document in a cluster constitutes the
cluster's ceiling.

More sophisticated clustering algorithms that account for distinct per-object
privacy requirements (popularity) and overall privacy requirements are left to
future work.
We present overheads of clustering real videos and
documents~in \S\ref{sec:padding}.


\subsection{ProfPace}
\label{sec:profpace}

The ProfPace gray-box profiler automatically generates a transmit schedule for
each traffic class as follows.
GPace (\S\ref{sec:gpace}) logs the application-provided traffic
indicators~(\S\ref{sec:profiling}) along
with the arrival times of incoming packets and the times at which the guest OS
queues packets for transmission, and shares the logs with ProfPace.
ProfPace, a userspace process in the guest, analyzes these logs to
compute transmit schedules. Specifically, ProfPace segregates the logs into
network interaction segments and bins them by different values of traffic
indicators, {\sid}s.
The set of observed segments in a bin are considered samples of the
associated equivalence class of network interactions.

ProfPace characterizes the traffic shape for each class with a set of random
variables:
(i) the delay between the first incoming packet and the first response packet
$d_i$,
(ii) the delay between subsequent response packets $d_s$, and
(iii) the number of response packets $p$.
For each equivalence class of network interactions, the profiler samples the
distribution of these random variables from the segments in the
associated bin.

Finally, ProfPace generates a transmit schedule for {\sid} based on the
sampled distributions of the random variables.
Specifically, it generates a schedule with the 99th percentile of the
initial delay $d_i$ and the 90th percentile of the spacing among subsequent packets
$d_s$.
For the number of packets $p$, ProfPace generates the 100th percentile of the number
of packets and further increments this value by a 10\%.
The choice of percentiles is determined empirically; the schedules thus
generated incur minimal overheads on the peak throughput of web services and
moderate overheads on the client response latencies at the cost of a small
increase in bandwidth overheads.

Note that the choice of transmit schedules is relevant only for
performance, not security. An inadequate schedule could increase delays and
waste network bandwidth due to extra padding, but cannot leak secrets.
For good performance, during profiling runs, the guests should sample the space
of workloads with different values of public and private information, as
well as different guest load levels, so that the resulting profiles capture the
full space of network traffic shapes.


  \smallskip\noindent \textbf{Summary.~} {\sys} enables
    tenants to optionally partition their network traffic into public
    classes, where each class is shaped differently using
    custom transmission schedules for efficiency. To use custom
    schedules, a tenant merely has to provide integer-valued traffic
    indicators at appropriate points in its execution. The indicators
    enable Pacer to instantiate efficient transmission schedules,
    and enable ProfPace to automatically generate efficient
    transmission schedules.

\if 0

\section{Transmission schedules}
\label{sec:profiling}

By default, {\sys} can use the same transmit schedule for all of a
guest's network traffic. This approach does not require any support
from tenant applications and is perfectly secure.  In practice,
however, tenants can significantly reduce bandwidth and latency
overhead by using different schedules for different partitions of
their workload. As long as those partitions are chosen by public
information, no information is leaked.  \update{Next, we discuss how
  tenants can safely choose custom schedules.
}

\textbf{Types of public inputs.~}
\update{
We discuss three types of public inputs.
First, certain aspects of traffic are inherently non-sensitive and can be used as
public inputs for customizing transmit schedules. For instance, a guest's network
interaction can be partitioned into three segments---a TCP handshake, a TLS
handshake, and a response to a request---each of which may be shaped
differently.
Second, the application developer may partition its
request-response workload into equivalence classes based on application-specific
public inputs. Returning to the video streaming service in our {\medsite}
example, the video resolution may be considered as a public input, with each
resolution corresponding to an equivalence class and determining the shape of
all videos at the class's resolution.
Third, the network conditions are publicly visible to all network
observers. As described in \S\ref{sec:tunnel}, we exploit network congestion
conditions as public inputs to further reduce bandwidth overheads without
leaking guest secrets.
}

\textbf{Declassifying private inputs.~}
\update{
In addition to inputs that are completely non-sensitive, guests can additionally
consider declassifying their private inputs in order to reduce shaping
overheads.
}
%
For instance, consider a guest that serves a corpus of
objects with a skewed size distribution. Using a single schedule for
the entire corpus requires padding every object to the largest one in
the corpus, incurring a large overhead. Suppose the guest can
partition the corpus based on public information such that each
partition contains objects of similar size; now each object is padded
to the largest object in \emph{its} cluster, which may reduce overhead
significantly without revealing which object within a partition is
being requested.
\update{Prior work has proposed clustering based on sequences of network packet
sizes and timing using dynamic time warping
(O(N\textsuperscript{2}) complexity) and heuristics-based shortest common
supersequence (an NP-hard problem) algorithms~\cite{wang2017walkie,
wang2014effective, nithyanand2014glove}. Consequently, these algorithms do not
scale well for very long network traces, such as those generated by video
streams.}
In \S\ref{sec:clustering} we describe alternate
algorithms for clustering videos and static HTML documents based only on their
static sizes.
\update{We note, however, that {\sys} can use any clustering algorithm.}

We note that determining what information can be considered public and
private in the context of a specific application and the corpus's size
and popularity distributions may be challenging in general, and is beyond
the scope of this paper.  Our goal is to highlight the large
efficiency gains possible when clustering content with skewed size
distributions. We present overheads of
clustering real videos and documents~in \S\ref{sec:padding}.
\todo{How does a developer determine what is a reasonable public input?
Keystroke timing should not be used as a public input.}

\textbf{Traffic indicators.}
To use custom schedules, a guest needs to provide {\em traffic indicators}.
These indicators are used by \sys\ to instantiate schedules and, along with
other logged information, can be used by {\em ProfPace} to produce transmission
schedules automatically.

\update{In more detail}, traffic indicators are integer-valued events that a guest generates at
appropriate points in its execution. The indicators serve two purposes:
(i) They indicate the onset of a sequence of transmissions of the class
corresponding to the integer {\sid} argument. This information is used by Pacer
to instantiate an appropriate transmission schedule for the sequence.
(ii) They delimit semantically related packet exchanges within a network flow,
e.g., a client request from the guest's corresponding response.  The integer
{\sid} value of the indicator identifies the equivalence class of the exchange,
e.g., a TCP handshake, a TLS handshake, or
the workload partition to which the request and its response belong, such as the
video resolution in case of {\medsite}.

Instrumenting guests to provide traffic indicators is straightforward. A guest
that responds to client requests on the network, for instance, simply invokes an
IOCTL call on a network socket whenever it has finished receiving a client
request and before it sends the first part of its response on the socket. In
\S\ref{sec:eval}, we describe how we instrumented Apache and the PHP
applications we use to provide traffic indicators.

%
%

\if 0
%
\update{Recall that GPace logs the timestamps of inbound and outbound packets.
Because the application workload consists of synchronous request-response pairs,
the start of the outbound packet stream (response) implicitly indicates the end of the
preceding inbound packet stream (request).
At each site in the guest's application where a message is sent to the network,
the application can \update{optionally} invoke an IOCTL call to provide an
{\sid} for the message, which associates the message to the corresponding
equivalence class.
GPace combines the optional {\sid}, the arrival
times of the preceding inbound packets, and the times at which the guest OS queues
subsequent packets for transmission, and shares the logs with ProfPace.
In absence of {\sid}, ProfPace treats all logs as part of a single equivalence
class and generates a single transmit schedule for all of the application
workload.
}
\fi

\if 0
{\sys} relies on traffic indicators from a guest application to compute traffic
shapes for different workload partitions of the application.
The traffic indicators serve three purposes.
First, they delimit segments of semantically related sequences of inbound and
outbound packets within a network flow.
Second, the {\sid} value indicated by the application identifies segments of the
same equivalence class, e.g., a TLS handshake, or a response to a request within
a given workload partition.
Recall that the size and timing of inbound packets is assumed to be public
information, while the application determines the equivalence classes based on
public information.
Thus, the {\sid} partitions the application's network interactions by public
information.
Third, the indicators indicate the onset of a network interaction and therefore
provide an opportunity to install the appropriate schedule.

At each site in the application where a message is sent to the network, the
application generates a traffic indicator {\sid}, and sends it to GPace in the
guest kernel via an IOCTL. GPace logs the traffic indicators along with the
arrival times of incoming packets and the times at which the guest OS queues
packets for transmission, and shares the logs with a userspace profiler called
ProfPace.
\fi

\textbf{ProfPace.} The ProfPace gray-box profiler
  automatically generates a transmit schedule for each traffic
  class. Briefly, ProfPace consumes the logs of packet arrivals, times
  when the guest produces response data, and traffic indicators.
  ProfPace then profiles the logged network interactions in each
  traffic class as indicated by the traffic indicators, and computes a
  transmit schedule that can accomodate the ``largest shape'' in each
  class. This is described further in \S\ref{sec:profpace}.

\update{
\textbf{Summary.~}
{\sys} computes traffic shapes in a way that hides application's secrets and
allows variations to reveal only information that is deemed public by the
application. Application developers can trade off some privacy for efficiency in
traffic shaping by choosing to declassify secrets.
}

\if 0

\textbf{Video clustering.~}
We cluster videos according to the sequence and size of their 5s
segments using the following algorithm.
Note that the dynamically compressed segments differ in size.
Initially, we over-approximate the shape of each video $v_i$ by its
maximal segment size $smax_i$ and its number of segments $l_i$. For
each distinct video length $l$ and each distinct maximal segment size
$s$ in the entire dataset, we compute the set of videos that are
dominated by $\langle l, s \rangle$. A video $v_i$ is dominated by
$\langle l,s \rangle$ if $l_i \leq l$ and $smax_i \leq s$.

Let $c$ be a desired minimum cluster size.
Our algorithm works in rounds. In each round, we select every $\langle l,s 
\rangle$
dominating at least $c$ videos, and we choose as cluster the set of
videos minimizing the average relative padding overhead per video,
i.e.
$ \frac{1}{c_i} \sum_{j=1}^{c_i} \sum_{k=1}^{l_i} \left(\frac{(s_k -
      s_{kj})}{s_{kj}}\right) $,
where $c_i$ is the cardinality of the set of videos, $l_i$ is the
maximal length across all videos in the set and $s_k$ is the maximal
size of the $k$-th segment across all videos in the set
(\ie $ \max_{1 \leq j\leq c_i}(s_{kj}) $).
The sequence of segment sizes $\langle s_1, s_2, ..., s_{l_i} \rangle$ is the
ceiling of the cluster $c_i$.
Once a cluster is formed, its videos are ignored for later
rounds. The algorithm terminates when all videos are clustered. If the last
cluster has less than $c$ videos, it is merged with the one before it.

\textbf{Document clustering.~}
Unlike videos, HTML documents contain a single data object.
Therefore the algorithm clusters based on the single size parameter
of documents, and the largest document in a cluster constitutes the
cluster's ceiling.

\fi

\fi

\section{Evaluation}
\label{sec:eval}

\begin{table}[t]\centering\footnotesize
\begin{tabular}{|p{0.13\columnwidth}
                |p{0.428\columnwidth}
                |p{0.30\columnwidth}|}
\hline
\textbf{Library} &
\textbf{Traffic indicator} &
\textbf {Traffic class}\\
\hline
Apache &
Before listen() & TCP handshake \\
\hline
Apache &
Before accept() & SSL handshake\\
\hline
Apache &
Before last SSL handshake msg & HTTP requests\\
\hline
MediaWiki &
After parsing requested page title & Page cluster \\
\hline
Video &
After parsing video title/segment & Video cluster, segment \\
\hline
\update{Video} &
\update{Before connect() to memcache} & \update{TCP handshake} \\
\hline
\update{Video} &
\update{Before get request to memcache} & \update{request (1 MTU)} \\
\hline
\update{Memcache} &
\update{Before listen()} & \update{TCP handshake} \\
\hline
\update{Memcache} &
\update{After parsing video title/segment} & \update{Video cluster, segment} \\
\hline
\end{tabular}
\caption{Locations of traffic indicator instrumentation}
\label{tab:indicators}
\end{table}

We implemented HyPace for Xen and GPace's Linux kernel module in
8,100 and \textasciitilde{15K} lines of C, respectively. We
imported 4,458 lines of AESNI assembly code from OpenSSL to encrypt
packets in HyPace. We implemented ProfPace in 1,800 lines of
Python and 1,200 lines of C.

All experiments were performed on Dell PowerEdge R730 servers with
Intel Xeon E5-2667, 3.2 GHz, 16 core CPU (two sockets, 8 cores per
socket), 512 GB RAM, and a Broadcom BCM 57800 10Gbps Ethernet
card, which were connected to a single 10Gbps switch.
The NIC was configured to export SR-IOV vNICs. We disabled
hyperthreading, dynamic voltage and frequency scaling, and power
management in the hosts, which helps to reduce
variance in execution time and ensures consistent, repeatable results
across different runs.

We run \mbox{Xen 4.10.0} hypervisor on each host, which is assigned
one of the CPU sockets and 40GB RAM. Up to two cores are configured to
execute the HyPace transmit event handler in parallel; flows are
partitioned statically among the HyPace cores.  The guest runs an
Ubuntu 16.04 LTS kernel (version 4.9.5, x86-64) in a VM with 8 cores
and 64 GB RAM, and has access to a vNIC. The VCPUs of the guest VM
were pinned \update{1-to-1} to cores on the second socket of the host CPU,
and we used Xen's `Null' scheduler~\cite{xennullsched} for VM
scheduling.  This is in line with our threat model, which assumes that
guests rent dedicated CPU sockets.
Network clients run a modified Ubuntu 16.04 LTS with
  GPace but no hypervisor.

\begin{figure}[t]
	\includegraphics[width=\columnwidth]{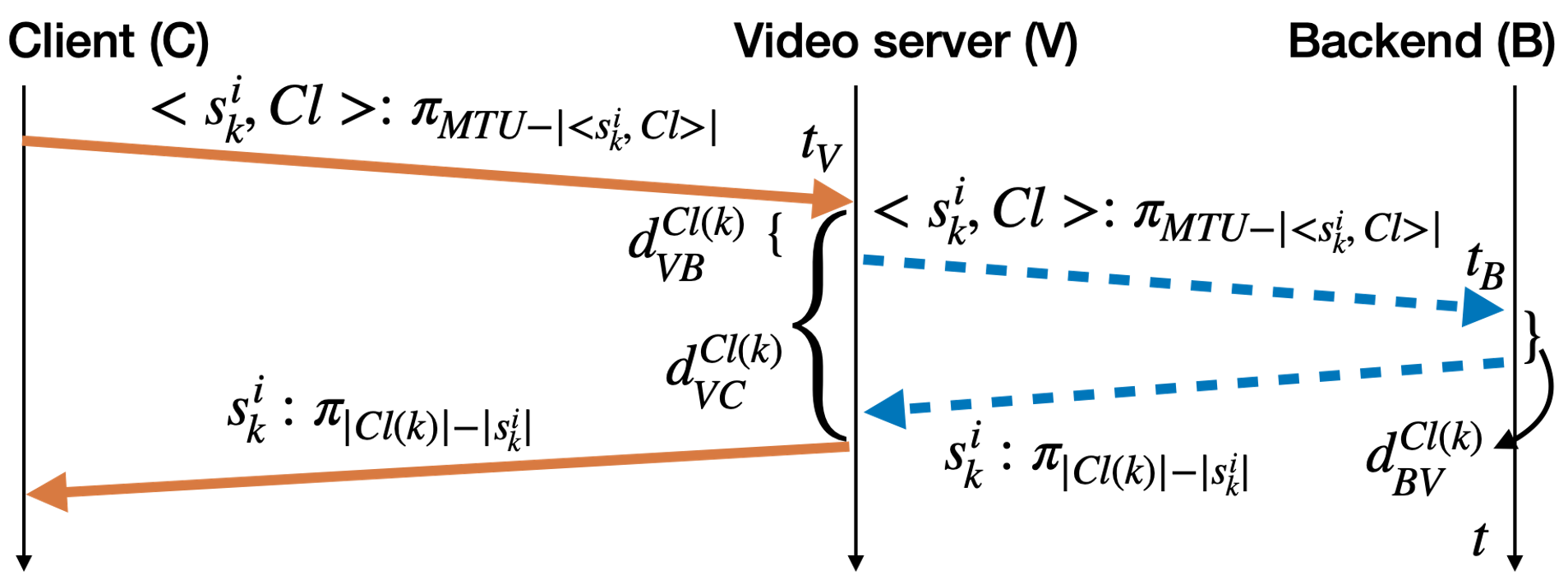}
	\centering
  \caption{\update{Traffic shaping in video service.
  $s^i_k$: $k^{th}$ segment of $i^{th}$ video.
  $Cl$: cluster id of the video with segment $s^i_k$.
  $|X|$: length of payload X.
  ${\pi}_L$: padding bytes of length L.
  $t_V$, $t_B$: arrival time of incoming request.
  $d^{Cl(k)}_{VB}, d^{Cl(k)}_{BV}, d^{Cl(k)}_{VC}$:
  initial delay for $k^{th}$ segment in cluster $Cl$.
  In a {\bf local} setup, shaping is done only on the paths shown by solid
  arrows.
  }}
	\label{fig:videomultitier}
\end{figure}


We used {\sys} to demonstrate {\nsc} mitigation in the
context of a video streaming service and a medical service.
\final{We use {\sys}'s  traffic shaping to hide from an adversary the specific video or
medical webpage requested from the respective service, and we evaluate the
overhead incurred on client latencies and server throughput in the process.
Both the services are hosted using Apache HTTP Server 2.4.33.
Below, we first describe the services and then the modifications introduced in
various applications to generate traffic indicators.}

\smallskip\noindent
\textbf{\update{Video service.~}}
\update{
We wrote a custom video streaming server in PHP, which returns video segments in
response to client requests. We evaluated two setups of the service:
(i) {\bf local:} with the videos hosted on the local VM disk, and
(ii) {\bf 2-tier:} with the videos hosted in a memcached (v1.6.9) KVS backend.
In the {\bf 2-tier} setup, we used one frontend and two replicated
KVSes, each hosted in a VM on a separate server.
The frontend randomly selects either KVS replica for serving each video segment.
For each segment request and response, shaping is done only between the client
and the video server in the {\bf local} setup, while it is done along the entire
network path between the client, the frontend and the KVS selected for the
segment in the {\bf 2-tier setup} (Figure~\ref{fig:videomultitier}).
}


\smallskip\noindent
\textbf{\update{Medical service.~}}
The medical service is \update{a single-tier application} based on Mediawiki (v1.27.1)~\cite{mediawiki1271}.
It stores \update{the content of the} medical pages in a database hosted locally on
MySQL 5.7.16 \update{and caches a copy of HTML pages generated from the content
in a local file cache}. In our experiment, all HTML pages are cached.
\update{When a client requests a page, Mediawiki queries
the database for the page-specific metadata, retrieves the HTML page from the
cache and returns it to the client.
}
%

\smallskip\noindent
Table~\ref{tab:indicators} shows the code locations in the guest applications
where we inserted 15 LoC each to generate traffic indicators. We identified and
modified these sites manually; automating the instrumentation is possible but
remains future work. No other changes were required to guest applications.


%

%


\smallskip\noindent
{\bf Evaluation overview.~}
In the following subsections, we consider
(i) microbenchmarks to determine HyPace and GPace configuration
parameters;
(ii) the tradeoff between spatial padding overhead and privacy possible due to
{\sys}'s clustering,
(iii) {\sys}'s impact on client latencies and server
throughput in the context of two guest applications; and (iv) an
empirical security evaluation of {\sys}'s implementation.

\subsection{Microbenchmarks}
\label{sec:microbenchmarks}
We empirically select the maximum batch size $B$ (number of packets to
be prepared by a HyPace handler) in a suitable HyPace epoch length
$\epsilon$, and the parameters $\delta_{xmit}$ and $\delta_{delay}$
from \S\ref{sec:design}.  To this end, we ran multiple, 12-hour
experiments with varying network workloads.  We requested 100KB-sized
documents from the document server using concurrent clients.  In
background, we ran large matrix multiplications on Xen's dom0 VM, which
used \textasciitilde 12GB RAM and saturated the CPUs.

To determine $\delta_{xmit}$, $\epsilon$ and $B$, we measured the cost
of preparing batches of packets for transmission in HyPace.
Over many observations in the
presence of the background load described above, we first determined
the number of packets that can be safely prepared with different epoch
lengths with a single HyPace handler.
Epochs of length 30$\mu$s, 50$\mu$s, 100$\mu$s and 120$\mu$s could prepare 5,
14, 33 and 42 packets respectively, allowing HyPace to achieve 22\%, 28\%, 41\%
and 42\% of the NIC line rate with one core. We set $\epsilon$ to 120$\mu$s
for all HyPace handlers.

Based on these results, we run two parallel HyPace handlers on two
separate cores. In this configuration, we repeated our measurements
and chose $B$ = 38 packets and $\delta_{xmit}$ = 35$\mu$s for each
handler.
\update{Thus, with two HyPace handlers, {\sys} sustains a line rate of 7.67
Gbps, which is 76.7\% of the NIC's line~rate.}

$\delta_{delay}$ is independent of the number of HyPace threads, and its average
and maximum values observed across all experiment configurations were 3.9ms
and 15.8ms, respectively. We conservatively set $\delta_{delay}$ to 20ms.

\update{Note that only $\delta_{xmit}$ and $\delta_{delay}$ are
security-relevant parameters, which we discuss in detail in
\S\ref{sec:eval-masking}. Epoch and batch size only affect performance.
}

\if 0
Next, we increased the number of HyPace handlers to determine $B$ and
$\delta_{xmit}$. The number of packets that could be safely prepared
per handler with 1, 2 and 4 HyPace handlers were \update{42, 38 and
    10} respectively, while the maximum values of $\delta_{xmit}$ were
\update{9.1$\mu$s, 24.1$\mu$s, and 66.5$\mu$s} respectively (average
$\delta_{xmit}$ was 0.65$\mu$s in all cases).  All the HyPace
datastructures are replicated for each handler, and there is no
explicit synchronization among the handlers. We believe the increase
in $\delta_{xmit}$ and the decrease in maximum batch size is likely
due to some interference at the hardware timer interrupt handler
level.  Based on these observations, we configured two HyPace handlers
with $B$ set to \update{38 packets} for each handler.  Finally, we
conservatively configured $\delta_{xmit}$ to \update{35$\mu$s} for
each handler.
\fi

\if 0
The observed average and maximum values of $\delta_{xmit}$ were
0.58$\mu$s and 9.7$\mu$s, respectively. The observed average and
maximum values of $\delta_{delay}$ were 3.9ms and 15.8ms,
respectively. Therefore, we conservatively configured $\delta_{xmit}$
to 10$\mu$s and $\delta_{delay}$ to 20ms.

To determine the epoch length $\epsilon$, we measured the cost of
preparing batches of packets for transmission in HyPace.
Over many observations and in the presence of a background load as
described above, the number of packets that can be safely prepared
within epochs of length \todo{30$\mu$s}, \todo{50$\mu$s},
\todo{100$\mu$s} and \todo{120$\mu$s} are 5, 14, 33, and 42,
respectively, which allows HyPace to achieve 22\%, 28\%, 41\% and 42\%
of the NIC line rate with a single core. For our experiments we used
an epoch length of 120$\mu$s, a maximum of 38 packet preparations per
epoch, and configured \todo{two cores} to execute the HyPace transmit
event handler in parallel.
\fi


\subsection{Spatial padding overhead}
\label{sec:padding}

\begin{figure}[t]
\centering
\includegraphics[width=\columnwidth]{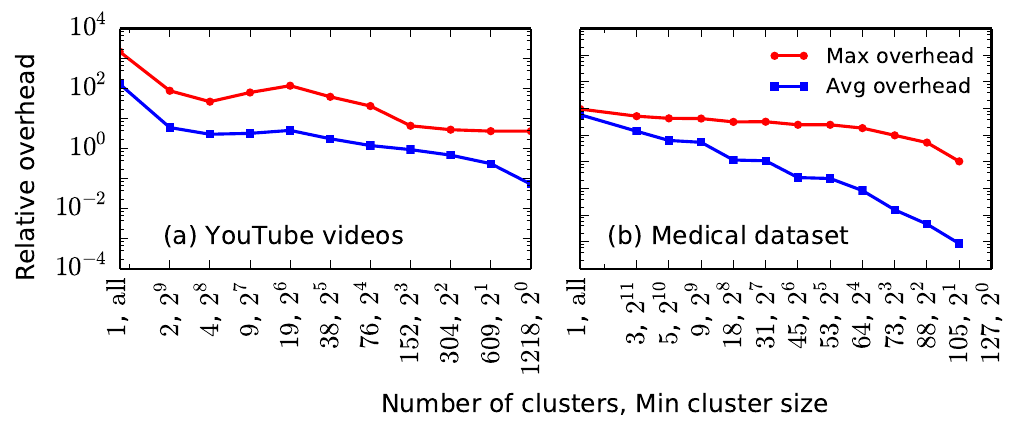}
\caption{Relative padding overhead vs number of clusters and minimum cluster
size for two corpuses representing real-world file size
distributions (log-log scale).}
\label{fig:bwoverhead}
\end{figure}
\begin{figure}[t]
\centering
\includegraphics[width=\columnwidth]{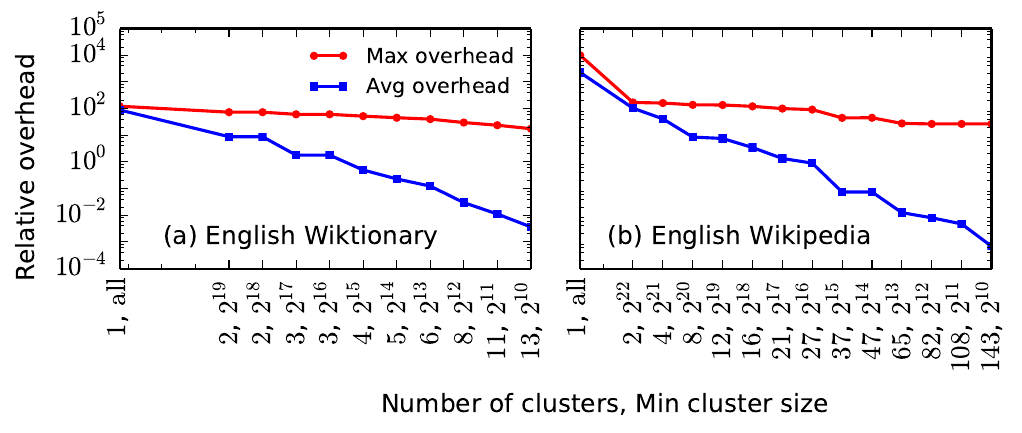}
\caption{Relative padding overhead vs number of clusters and minimum cluster
size (log-log scale) for (a) English Wiktionary and (b) English Wikipedia.}
\label{fig:bwoverhead-extended}
\end{figure}
We measure the tradeoff between spatial padding overhead and privacy
guarantees when clustering content.
The spatial padding overhead corresponds to the network bandwidth overhead for
{\sys}'s traffic shaping.

We clustered two different datasets using algorithms described in
\S\ref{sec:clustering}: (i) a set of 1218 videos downloaded from YouTube
(240p and 720p bitrate, max duration 4.2hr, median duration 7min, max
size 468.7MB, median size 6.2MB), and (ii) a set of 6879 MedicineNet~\cite{medicinenet}
medical web pages comprising
diseases, procedures, medications, and supplements pages (max size
521.9KB, median size 75.2KB).
Figure~\ref{fig:bwoverhead} shows the reduction in the average and
maximum padding overhead with increasing number of clusters and
decreasing minimum cluster size (i.e, the minimum number of objects in
each cluster).  Compared to the medical dataset, the overhead
reduction is less for videos
due to the multi-dimensional clustering~needed for videos.  Nonetheless, even
clustering the corpuses into just two clusters leads to at least two orders of
magnitude reduction in the average padding overhead.

We also compare {\sys}'s clustering with other shaping approaches described in
the literature. Specifically, CS-BuFLO~\cite{cai2014csbuflo} and
Tamaraw~\cite{cai2014tamaraw} round up each
response to the nearest power of 2 and a multiple of some integer value (\eg
$L = 100$ in their paper), respectively.
As can be seen from Table~\ref{tab:clustering}, rounding methods may still leave
files with unique sizes in clusters of size 1, rendering the files immediately
identifiable.
With {\sys}'s clustering, the overheads are comparable even when generating
clusters with more than 2200 files each.
We observe similar results with videos. In fact, the rounding methods of prior work lead
to nearly all the videos in the corpus being in clusters of size 1.
Thus, rounding methods cannot guarantee privacy for all objects in the
corpus, while {\sys}'s clustering can be configured
based on desired privacy requirements and bandwidth constraints.

\begin{table}[t]\centering\footnotesize
\begin{tabular}{|p{0.28\columnwidth}|p{0.07\columnwidth}|p{0.07\columnwidth}|p{0.12\columnwidth}|p{0.12\columnwidth}|}
\hline
{\bf Technique} & {\bf $c_{min}$} & {\bf $n_1$} & {\bf avg OH} & {\bf max OH}
\\
\hline
Power of 2~\cite{cai2014csbuflo} & 1 & 1 & 0.512 & 0.999
\\
\hline
Multiple of 100~\cite{cai2014tamaraw} & 1 & 219 & 0.001 & 0.002
\\
\hline
{\sys} ($c_{min}=1$) & 1 & 37 & 0.009 & 0.027
\\
\hline
{\sys} ($c_{min}=8$) & 8 & 0 & 0.002 & 0.989
\\
\hline
{\sys} ($c_{min}=2206$) & 2206 & 0 & 1.41 & 5.17
\\
\hline
\end{tabular}
%
\caption{Comparison of privacy and overheads in prior work and {\sys}.
$c_{min}$: size of the smallest cluster; $n_1$: number of clusters with a
single element generated by each technique.
{\sys}'s $c_{min} = 1$ is similar to~\cite{cai2014tamaraw} with rounding
up to MTU. 
}
\label{tab:clustering}
\end{table}

\smallskip\noindent
\textbf{Clustering on larger corpuses.}
To understand the impact of padding on larger corpuses, we additionally ran our clustering algorithm on two wiki corpuses:
(i) a 2016 snapshot of the English Wiktionary corpus (5,027,344 documents, max
521.9KB, median 4.7KB), and (ii) a 2008 snapshot of the English
Wikipedia corpus (14,257,494 documents, max 14.3MB, median 83.5KB).
Note that though Wiktionary pages and Wikipedia pages
are not sensitive and may not need protection with a
system like {\sys} in practice, all that matters for our evaluation
are the file sizes and size distributions. The content is irrelevant as it is encrypted
during transmission anyway.
We present the clustering results in Figure \ref{fig:bwoverhead-extended}.

\subsection{Macro experiments}
\label{subsec:eval:macro}
Next, we measure the impact of {\sys}'s traffic shaping on the client response
latencies and server throughput in the video service and medical document service.
The client request payload is only a few bytes and,
hence, is padded to one MTU by the GPace in the client's kernel.
\update{Furthermore, the client is open loop, \ie it transmits
requests to the server at fixed intervals independent of the server's prior
responses.}
\update{ With Pacer support on clients (see
  \S\ref{subsec:threat-model}), the shaping of client traffic will
  similarly ensure that request timing does not depend on the
  completion time of prior requests.
  }

\smallskip\noindent
\textbf{Video service.~}
We wrote a Python streaming client that simulates
a MPEG-DASH player: when a user requests a video, the client initially
fetches six segments (covering 5s of video each) in succession
to fill a local buffer. After reaching 50\% of the initial buffer
(rebuffering goal), the player starts consuming the segments from the
buffer. The client fetches subsequent segments whenever
space is available in the buffer.
We measure the impact of traffic shaping on (i) the download
latency for individual video segments, (ii) the initial delay
until the video starts playing, and (iii) the frequency and duration of any
pauses (video skipping) experienced by~the player.
We use a corpus of videos 1218 videos downloaded from YouTube in March 2018,
which were clustered into 19 clusters with at least 64 elements each, yielding
an average padding overhead of 4x.
The client sequentially plays four randomly chosen videos for up to 5 min each.

We ran experiments for a client with high bandwidth (10Gbps)
\update{and with low bandwidth (10Mbps)}.  \update{The baseline
  segment download latency is <1ms on average, while the exact latency
  depends on the segment size. With {\sys}, the download latency is
  dominated by the initial response latency in each segment's traffic
  shape at the video server, which is 30ms and 400ms in {\bf local}
  and {\bf 2-tier} setup, respectively. Despite these overheads,}
there is no noticeable impact on the user experience for using {\sys}
for either client \update{in either setup}. Initial startup delays,
\update{i.e., the delay until a video starts playing,} don't increase
significantly, and there is no video skipping in any of the
experiments. When serving \update{128} high bandwidth clients in
\update{{\bf 2-tier} setup}, the maximum CPU utilization on the
\update{video} server \update{and the KVS} increases
\update{respectively} from \update{11.76}\% to \update{13.39}\% and
\update{1.96\% to 12.62\%} with {\sys}.


\begin{figure}[t]\centering
    \includegraphics[width=0.6\columnwidth]{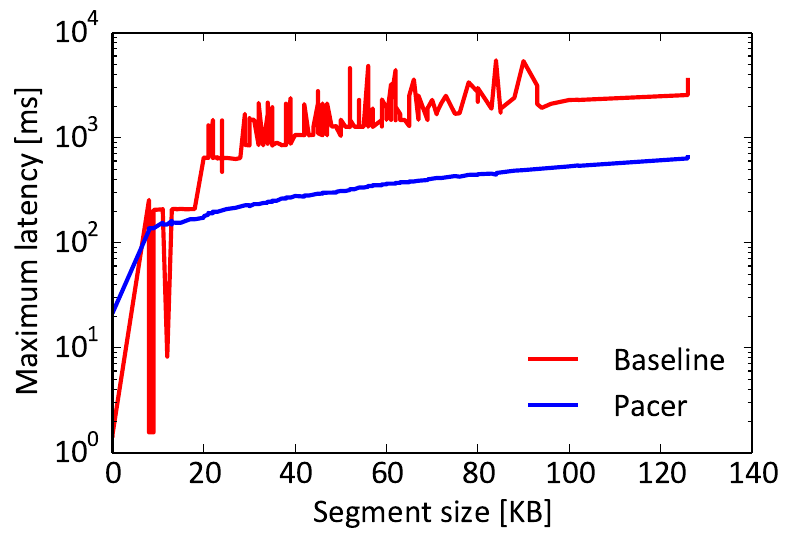}
    \caption{
        Download latency for a 10Mbps client 
    }
    \label{fig:pacer:videortt}
\end{figure}

\smallskip\noindent{\bf Impact on 10Mbps clients.~}
We also evaluated the effect of {\sys}'s shaping on bandwidth-constrained clients
streaming videos.
Here, {\sys}'s shaping also provides an opportunity to use domain knowledge to
optimize schedules for better {\em performance}.
Downloading the largest segment in our collection of 240p videos within its 5s
deadline requires packets to be sent at an interval of max 20ms.
Conservatively increasing the inter-packet spacing in the schedules~to even~6ms
allows downloading segments within 5s. However, for the 10Mbps clients,
the paced schedule avoids losses and reduces the segment
download latency significantly.
Figure \ref{fig:pacer:videortt} shows the download latency for a 10Mbps client for
different segment sizes in the baseline, and after applying {\sys}'s shaping
with 6ms inter-packet spacing. These results are based on the {\bf local} setup
(\S\ref{sec:eval}).
This schedule optimization does not affect security; it only utilizes
{\sys} to reduce network contention, a known benefit of traffic shaping.

\if 0

{\sys}'s shaping also provides an opportunity to use domain knowledge to
optimize schedules for better {\em performance}.
Downloading the largest segment in our collection of 240p videos within its 5s
deadline requires packets to be sent at an interval of max 20ms.
Conservatively increasing the inter-packet spacing in the schedules~to even~6ms
allows downloading segments within 5s. However, for the 10Mbps clients,
the paced schedule avoids losses and reduces the segment
download latency significantly. We omit the full results due to space constraints. This
schedule optimization does not affect security; it only takes advantage of
{\sys} to reduce network contention, a known benefit of traffic shaping.
\fi

\if 0
{\sys}'s shaping also provides an opportunity to
use domain knowledge to optimize schedules for better {\em
performance}.
Downloading the largest segment
in our collection of 240p videos within its 5s deadline requires
at least 550Kbps client bandwidth.
For clients with 10 Mbps bandwidth this optimization avoids
losses and reduces the segment download latency significantly, but we
omit the full results due to space constraints.  This schedule
optimization does not affect security;
it only takes advantage of \sys\ to reduce network contention,
a known benefit of traffic shaping.
\fi

\if 0
As we shall see, for clients with low bandwidth this optimization
avoids losses and reduces the segment download latency
significantly. In fact, for 10Mbps clients, traffic shaping also
reduces the initial delay until playback starts from about
5.48$\pm$2.23s to 3$\pm$0.002s.
On the other hand, for high-bandwidth
clients this delay increases from 0.007$\pm$0.005s to 3s on average.
Of course, the server could avoid this increase by choosing schedules
based on the client's bandwidth.  Note that the schedule optimization
does not affect security; instead, it takes advantage of \sys's shaping
to obtain well-known benefits of shaping network traffic towards
reducing network congestion.

Figure~\ref{fig:videoavgrtt} shows the maximum download latency for
fetching a segment as a function of the actual segment size, for a
10Mbps client and with the optimized transmit schedules. Note that
this latency does not include the time to transmit padding, since the
segment is available for playback as soon as the payload has arrived.
With \sys's shaping, the maximum latency remains below 0.7s, while in the
baseline there are peaks up to 5.4s due to packet retransmissions.

For the 10Gbps connected clients, traffic shaping with the 2Mbps schedule causes
an increase of max 212.2x (median 96.3x) in the segment download latency;
however, the absolute latencies are still below 700ms and therefore well below
the 5 second deadline, hence not user-perceivable. In no cases did the video
player skip.

\begin{figure}[t]
	\includegraphics[width=0.40\textwidth]{plots/video_max_rtt.pdf}
	\centering
	\caption{Download latency vs. segment size, 10Mbps}
	\label{fig:videoavgrtt}
\end{figure}
\fi

\begin{figure}[t]
    \includegraphics[width=\columnwidth]{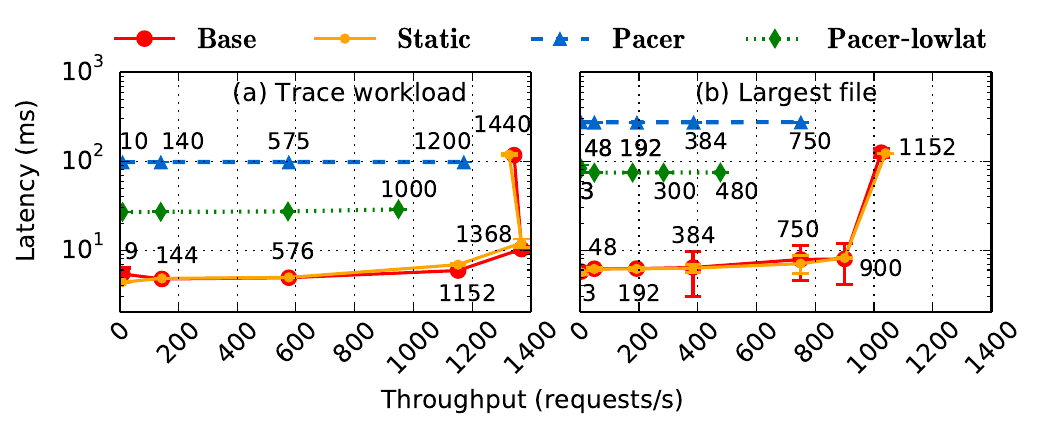}
    \centering
    \caption{Medical service throughput vs latency}
    \label{fig:mwxput}
\end{figure}

\smallskip\noindent
\textbf{Medical service.~}
Next, we measured \sys's impact on the throughput and response latency
of the medical server. We use a corpus of static HTML pages downloaded from
MedicineNet, a medical website \cite{medicinenet}, in August 2020. We used 3 clusters
with at least 2048 elements each, which yields an average padding
overhead of 142.8\%.  Modified wrk2~\cite{wrk2} based clients issue HTTPS GET
requests for different pages concurrently and
synchronously for 120s.  Prior to the measurement, we ran the workload
for 10s to warm up caches.

We selected 1,000 files from the 3 clusters in proportion 65\%, 30\%, and 5\%,
and used this as a workload trace. Each client requests files from the trace in
a random order.
For comparison, we also stressed the server with requests only to the largest
file in the corpus (521.9KB).
Figure~\ref{fig:mwxput} shows the throughput vs average latency for
\update{two insecure} baselines and \update{two schedule configurations of} {\sys} with varying number of concurrent client
requests \update{(denoted by the data point~labels)}. The error
bars show the standard deviations of the average latencies. {\bf Base}
corresponds to the baseline \update{without any shaping}, \update{{\bf Static} corresponds to a baseline where the HTML pages are
statically padded but the response traffic is not paced}, {\bf
    {\sys}} and {\bf {\sys}-lowlat} correspond to {\sys} with schedules
using the 99th and 80th \%ile initial response latency,
respectively.

\update{The performance of {\bf Static} is nearly the same as {\bf Base} because the
padding added is low for the trace workload and zero for the largest file,
implying that there is not much difference in the two workloads.}
Unlike the baseline, \sys's latency remains constant until the maximal
throughput, because latency is determined only by the transmission schedule.
Once the server is at capacity, it fails to serve additional requests and
clients time out.
With {\bf {\sys}-lowlat} a few requests (\eg less than 50 out of 85K) timeout even at
lower loads. \update{This is because {\bf {\sys}-lowlat} uses an aggressive schedule
that does not account for the server's response latencies beyond the 80th \%ile.} We ignore
these timeouts in the average latency and throughput
measurements.

In the trace workload, {\bf {\sys}} incurs a 14.4\% overhead on peak
throughput with response latencies 10x-18x of the baseline. The figures reflect
{\sys}'s total overhead, because they compare to a saturated baseline server. By
comparison, {\bf {\sys}-lowlat} incurs a
\todo{\textasciitilde30\%} overhead on peak throughput with latencies
\todo{3x-5x} of the
baseline. Here, the throughput drops as the requests that are delayed beyond the
80th percentile latency timeout. Similar trends are also observed with the large
file. This shows that latencies could be optimized with moderate additional
overheads on response throughput.

The overheads on peak throughput are higher with the large file.
Here, the baseline operates at over 40\% of the line rate, and we believe that
{\sys}'s performance in this challenging experiment is limited by the accuracy
of transmit schedules, which can be improved substantially.

\if 0
With the small cluster, baseline and \sys\ achieve a peak throughput
of \todo{1548} and \todo{1443} requests/s, respectively, with Pacer
incurring an overhead of \todo{6.8\%}. Pacer's average response latency is 
81.47ms. With the large cluster, baseline
achieves a peak throughput of \todo{1036} requests/s, while Pacer can
only achieve a peak throughput of \todo{590} requests/s. We believe
\sys's performance in this challenging experiment, which operates at
over \todo{40\%} of the line rate in the baseline, is limited by the accuracy
of transmit schedules and can be improved substantially.
\fi

\smallskip\noindent
\textbf{{\sys}'s costs}
%
are in bandwidth, CPUs, and memory.
The bandwidth overhead (\S\ref{sec:padding}) depends on the application's data,
workloads and the public inputs chosen for workload partitioning.
The bandwidth overhead due to {\sys}'s clustering is comparable to that of prior
work while offering stronger privacy.
The CPU cost is in the two cores dedicated to HyPace
(\S\ref{sec:microbenchmarks}), and the
increase in the guest CPU utilization due to shaping (\S\ref{subsec:eval:macro}).
\sys\ requires less than 20MB of additional main memory in the Xen hypervisor
and less than 30MB of additional memory per HyPace core in each guest that uses \sys.
Cloud providers would likely charge their tenants for the
  added cost of {\nsc} mitigation.  In the case of a service like
  {\medsite}, the tenants (health insurer or provider) would likely
  cover the cost from their own customers' premiums or subscriptions.

\subsection{Security evaluation}
\label{sec:security-eval}


\update{{\sys} is secure by design, as supported by a formal model and
proof in \S\ref{sec:formal-model}.}
Nevertheless, as a sanity check and to validate our prototype implementation,
we also empirically
evaluated the security of {\sys}'s implementation using a powerful
{\nsc} attack. We streamed 4 videos from a single video cluster 40
times each, and collected the precise timestamps and sizes of packets
transmitted in both directions using tcpdump at the video client.
Thus, we grant the attacker direct access to the victim's traffic
shape, which makes the attack more powerful than one launched by a
colocated tenant. (However, as described in \S\ref{sec:attack}, the
attack is effective even when launched by a colocated tenant when
\sys\ is not used.)

\update{For {\sys},}
we trained a multi-feature CNN classifier using timeseries labeled with
the video id and comprising of
inter-packet intervals and sizes of packets in both directions between
client and the server as the features.
\update{For the baseline, we trained the same classifier with just a single
feature---the timeseries of inter-packet intervals in server's response
packets.}
The classifier architecture is similar to that used by Schuster
\etal\cite[\S7.2]{beautyburst},
except
we used a dropout of 0.1 between the model's
hidden layers and 64 epochs for training.

During classification, the classifier generates the probability of each label
value for a given sample.
\update{
In the baseline, the classification probability is more than 99\% for each
label. In {\sys}, it is close to 25\%, \ie the classifier's prediction of a
video's label is no better than a random guess.}
%
We repeated the experiment with other video clusters and obtained
similar results.  Thus, we confirm empirically that, as expected,
{\sys} eliminates leaks through timing, sizes, and count of packets.

\if 0
The network bandwidth overhead due to padding in the trace workload is
\todo{52\%}. \todo{PD: What are we trying to convey with this number?
  That choosing files randomly results in a 52\% overhead given our
  clustering? Is this very meaningful?}  (The padding overhead for
requesting the largest file is \todo{5\%}.) \todo{PD: If we want to
  mention this at all, it requires explanation. Without it, a reader
  would expect 0\%, no?}

Note that, in contrast to static bandwidth partitioning, the bandwidth
overhead varies dynamically based on the workload, and \sys's shaping
still allows for flexible sharing of the network bandwidth with other
tenants.  \todo{This is confusing because it makes a different point
  that has little to do with padding overhead. In fact, there is no
  experimental result relevant to this point, so I don't think it
  should be mentioned here.}
\fi

\if 0
In these experiments, we apply load-based schedule dilation during traffic
shaping. We bucketized the traffic shapes into 0-10\%, 10-50\%, 50-80\% and
>80\% CPU utilization buckets.

Figure~\ref{fig:mwxputlatavg} and~\ref{fig:mwxputlat99p} plot respectively
the throughput vs average and 99th percentile latency for both baseline
and~\sys. These plots show that~\sys incurs at most \todo{15\%} overhead on
peak throughput, while the client latencies still remain within user
perceivable delays. The average client latencies correspond to the enforced
traffic shape.

Figure~\ref{fig:mwxputcpu} shows that~\sys incurs only \todo{5\%} CPU
overhead compared to baseline.

\begin{figure*}[h]
\center{
    \subfigure[English Wiktionary (normal scale)]{
    \includegraphics[width=0.31\textwidth]{plots/mwxputlatavg.pdf}
    \label{fig:mwxputlatavg}
}
\subfigure[Video dataset (normal scale)]{
    \includegraphics[width=0.31\textwidth]{plots/mwxputlat99p.pdf}
    \label{fig:mwxputlat99p}
}
\subfigure[English Wikipedia (log-log scale)]{
    \includegraphics[width=0.31\textwidth]{plots/mwxputcpu.pdf}
    \label{fig:mwxputcpu}
}
}
\caption{Throughput vs latency plot for document service}
\label{fig:mwxputlat}
\end{figure*}
\fi

\if 0
\begin{figure*}
\centering
\subfloat{
    \centering
    \includegraphics[width=0.31\textwidth]{plots/vidbwoverhead-minsize.pdf}
    \label{fig:vidbwoverhead}
}
\subfloat{
    \centering
    \includegraphics[width=0.31\textwidth]{plots/wiktbwoverhead-minsize-mergelast.pdf}
    \label{fig:wiktbwoverhead}
}
\subfloat{
    \centering
    \includegraphics[width=0.31\textwidth]{plots/wikibwoverhead-minsize-mergelast.pdf}
    \label{fig:wikibwoverhead}
}
\caption{Relative padding overhead vs number of clusters and minimum cluster
size (log-log scale) for three corpuses representing real-world file size
distributions: \mbox{(a) YouTube videos}, (b) English
Wiktionary, and (c) English Wikipedia}
\label{fig:bwoverhead}
\end{figure*}
\fi

\section{Related work}
\label{sec:related}



We compare to existing mitigation techniques and discuss related work
with different threat models or goals.



\if 0
\subsection{Attacks using {\nsc}s}
Network side-channel attacks can be launched by observing the total
number and sizes of packets~\cite{and98trafficanalssl,
  sun02statistical, danezis09trafficanalhttps, chen10reality}, their
timing~\cite{cai2012touching, gong2012website}, and more
coarse-grained information, such as burst lengths, the frequency of
bursts, burst volumes~\cite{dyer2012peekaboo, wright2008spot,
  wang2014effective}, and a combination of such
features~\cite{hayes2016k,li2018measuring,saponas2007devices}. Attacks
have been shown to discover what a user is typing over
SSH~\cite{song2001timing}, which websites a user is
visiting~\cite{hintz2002,vila2017loophole}, what videos are being
streamed~\cite{beautyburst}, the contents of live
conversations~\cite{wright2008spot} and private
keys~\cite{brumley2005remote,brumley2011remote}, even with end-to-end
encryption and techniques like onion routing
in~place~\cite{panchenko2011}.


Within the context of Cloud computing, the setting of this paper,
Ristenpart \emph{et al.}~\cite{getoffmycloud} and Inci \emph{et
  al.}~\cite{inci2015seriously,inci2016efficient} show that targeted
attacks can be carried out by first attaining co-residency with a
desired victim, and then exploiting side channels, including
contention on I/O ports~\cite[Section 8.3]{getoffmycloud}.
Agarwal~\emph{et al.}~\cite{agarwal2016moving} demonstrate a
coarse-grained attack based on change in aggregate bandwidth
consumption in a Cloud environment.

These attacks demonstrate the need for a solution such as {\sys} that
makes network traffic of a (Cloud-resident) victim independent of
secrets.
\fi



\smallskip\noindent
\textbf{(a) Mitigating {\nsc}s in Clouds.}
Contention on individual shared links in a Cloud can be mitigated by
time-division multiple access (TDMA) in a
hypervisor~\cite{kadloor2016mitigating,beams2021packet} as this eliminates the
adversary VM's (and, in fact, every VM's) ability to observe a colocated
victim's traffic at that link.  However, an end-to-end mitigation against all
network adversaries would require synchronous TDMA scheduling along the entire
path of a tenant's traffic, which is inefficient especially when the payload
traffic is bursty~\cite{vattikonda2012practical}.
Statistical multiplexing, which only caps the total amount of data
transmitted by a VM in an epoch, is insecure because the
resources available to a flow depend on the bandwidth utilization of
other flows~\cite{gong2016schedulers}.

Another approach restricts the adversary VM's ability to~observe
time~\cite{vattikonda2011eliminating, martin2012timewarp,
  liu2017demand}.
StopWatch~\cite{li2014stopwatch} replaces a VM's
clock with virtual time based only on that VM's execution. To mitigate
{\nsc}s, each VM is replicated $3\times$, the replicas
are colocated with different guests, and each interrupt is delivered
at a virtual time that is the median of the 3 times. This prevents a
guest from consistently observing I/O interference with any colocated
tenant. However, it requires a $3\times$ increase in deployed
Cloud resources. Deterland~\cite{wu2015deterland} also replaces VMs' real time with
virtual time, but it does not address leaks due to {\nsc}s
as it delivers I/O events to VMs without delay.
%
In contrast, {\sys} {\em shapes} traffic by padding and pacing packets, which
  mitigates all {\nsc}s with far less resource overhead.

Bilal {\em et al.}~\cite{bilal2018mitigating} generate multicast traffic to
shape the {\em pattern} of queries to different backend nodes in multi-tier
stream-processing applications in a Cloud, but they do not consider leaks due to
packet size and timing.
\if 0
Bilal \emph{et al.}~\cite{bilal2018mitigating} address leaks via the
\emph{pattern} of queries to different backend nodes in multi-tier
stream-processing applications in a Cloud, but they do not consider
leaks due to packet size and timing, which is what {\sys} focuses on.
\fi

\smallskip\noindent
\textbf{(b) Traffic shaping to mitigate {\nsc}s.~}
Pacer uses a standard technique~\cite{hintz2002,wright2008spot} to remove the
dependence of packet {\em size} on secrets: it pads all packets to a fixed
length.
To make packet {\em timing} independent of secrets, a strawman is to send
packets continuously at a \emph{fixed} rate independent of the actual workload,
inserting dummy packets when no actual packets
exist~\cite{saponas2007devices, song2001timing}. This~either
wastes bandwidth or incurs high latencies when the workload is bursty.
%
%
%
BuFLO~\cite{dyer2012peekaboo} reduces this overhead by shaping
response traffic to evenly-spaced \emph{bursts} of a fixed number of
packets~for a certain minimum amount of time after a request
starts. However, it leaks the size of responses that take longer~than the
minimum time.
Tamaraw~\cite{cai2014tamaraw}, CS-BuFLO~\cite{cai2014csbuflo}, and
DynaFlow~\cite{lu2018dynaflow} pad each response to some factor of the
original size, such as the nearest power of 2. They offer no control
over how many objects end up with the same traffic shape. In
  contrast, {\sys} supports flexible traffic shape adaptation without
  leaking secrets. Moreover, as shown in \S\ref{sec:padding}, the
  bandwidth overhead of \sys's clustering is comparable to CS-BuFLO's
  and Tamaraw's.

\if 0
Additionally, CS-BuFLO and DynaFlow adapt traffic shapes based on the
application's actual transmission rate, which may depend
on application secrets.
In contrast, {\sys} allows traffic shape adaptation based only on
public inputs.
\fi

Walkie-Talkie~\cite{wang2017walkie}, Supersequence~\cite{wang2014effective},
and Glove~\cite{nithyanand2014glove} cluster
responses, and generate a traffic shape for the cluster that envelopes each
response in the cluster.
They cluster by simultaneously considering both packet sizes and timing
from runtime network traces, and compute the shape based on the traces used in
the clusters.
{\sys} instead first clusters based on static object sizes, and then computes
traffic shapes for each cluster based on network
traces of cluster objects. {\sys} can also support clustering and
shaping algorithms proposed by these systems.
Traffic morphing~\cite{wright2009morphing} makes sensitive responses look like
non-sensitive responses, but only shapes packet sizes and ignores packet timing.
{\sys} shapes all packet size and timing,
and allows precise control over cluster sizes,
thus eliminating all leaks by design.


\if 0
Finally, none of the systems described above defend against
interference between the shaping component (enforcement) and the rest
of the application stack, which is required in our
context (\S\ref{sec:design}). Hence, they would allow for potential leaks
if integrated directly within the Cloud host.
\fi

\smallskip\noindent \textbf{(c) Predictive mitigation.~} Predictive
mitigation~\cite{askarov2010, zhang2011predinteractive} mitigates
network timing side-channel and covert channel leaks to an adversary
who has compromised or authenticated as a legitimate client of the
victim. Here, the adversary can distinguish real packets from dummies,
so predictive mitigation cannot avoid a leak when the application
fails to produce a packet in time for a scheduled transmission.
In {\sys}, the threat is from an adversary that only observes network
traffic but does not communicate with the victim. Hence, {\sys} can
hide application delays by sending dummy packets.
Both predictive mitigation and {\sys} partition application workloads based on
public inputs and precompute a traffic shape for each partition.
However, a bad shape leaks information in predictive
mitigation, but only affects performance in {\sys}.

\if 0
\textbf{(c) Predictive mitigation.~}
Predictive mitigation~\cite{askarov2010, zhang2011predinteractive}
mitigates network timing side channels and covert channels, but in a
threat model fundamentally different from {\sys}'s.
In predictive mitigation the threat is from legitimate (or
authenticated) clients of the victim who may have been compromised. In
this setting, the adversary can always distinguish real packets from
dummy packets, so predictive mitigation does not rely on dummies for traffic
shaping. Instead, it incurs a leak (upto 1 bit) whenever the
application does not provide a packet before a scheduled transmission. The
schedule is then {\em adjusted} using a prediction algorithm (based only on
    public inputs) to reduce the chances of a leak in the future.
In {\sys}, the
threat is from a colocated tenant or a network adversary that cannot
compromise (or authenticate as) legitimate clients of the victim.
Hence, {\sys} can send a dummy
packet when the application does not provide a real packet before a
scheduled transmission. This prevents leaks completely.

Both {\sys} and predictive mitigation partition application workloads based on
public inputs and compute a traffic shape for each partition ahead of time.
However, a badly chosen shape for a partition can leak
information in predictive mitigation, but it only affects performance
in {\sys}.
\fi

\if 0
Finally, the prototype implementation of predictive mitigation does
not prevent or mask interference between the pacing logic and the
application, which may result in timing leaks from the application to
the paced traffic (\S\ref{sec:design}).
\fi
\smallskip\noindent
\textbf{(d) Related work with other security goals.}
Herd~\cite{le2015herd}, ~Vuvuzela~\cite{van2015vuvuzela},
Karaoke~\cite{lazar2018karaoke}, and Yodel~\cite{lazar2019yodel} provide
metadata privacy: they prevent information about who is communicating with whom
from leaking via {\nsc}s.
{\sys}'s goal is different: it prevents sensitive data from leaking via {\nsc}.
To address its~goal, in addition to shaping individual packet sizes and timing,
{\sys} shapes the lengths of application messages.
\update{Herd~\cite{le2015herd} and Yodel~\cite{lazar2019yodel} focus on VoIP
calls. {\sys} can also be used to shape VoIP traffic. For instance, uniform
pacing can be used for a maximum duration, which is picked before the call
from a set of allowed durations. Only this maximum duration, but not the actual
duration, will be leaked.
}
\if 0
Further, although metadata-privacy systems and {\sys} share some  underlying
techniques (e.g. use of fixed size packets and dummy packets to shape traffic),
{\sys} additionally masks interference between the application and HyPace, thus
preventing any leak of sensitive data via timing channels.
\fi

Format-Transforming Encryption (FTE)~\cite{dyer2013protocol}
and ScrambleSuit~\cite{winter2013scramblesuit}
use a tunnel abstraction
to modify payload traffic to bypass a traffic censor's
filters. However, unlike {\sys}, they do not decorrelate the observable traffic
shape from secrets.
%
%
SkypeMorph~\cite{mohajeri2012skypemorph} circumvents censors that
inspect packet sizes and timing.  It samples the inter-packet gap and
the packet size from a fixed distribution, which mimics the
distribution of some target protocol that the censor allows.
SkypeMorph shapes traffic, but unlike {\sys} it is not
  designed to ensure that the resulting shape does not reveal
  secret-dependent variations. Moreover, SkypeMorph transmits traffic
  \emph{continuously} at the average transmission rate of the target
  protocol, which is inefficient for bursty traffic.
%
\if 0
Additionally, unlike {\sys}, the implementation of SkypeMorph
does not mask interference between the application and the
pacing component, and hence cannot be used in the Cloud settings.
\fi
Oblivious computing systems~\cite{crooks2018obladi,
  eskandarian2017oblidb, lorch2013shroud} prevent accessed
memory \emph{addresses} or accessed database \emph{keys} from
depending on secrets, for which they rely on ORAM techniques.
{\sys} addresses the orthogonal problem of
making packet size and timing independent of secrets, and relies on traffic
shaping.
Fletcher \emph{et al.}~\cite{fletchery2014suppressing} address \emph{timing} leaks in
ORAM accesses~by pacing ORAM accesses. However, their pacing rate
changes periodically based on the past actual request rate of the
program, which may be secret-dependent and leak information.

\smallskip\noindent
\textbf{(e) Other work.~}
Some prior work~\cite{pu2013your,ongaro2008scheduling,chiang15swiper} use
performance-isolation techniques for performance predictability;
Silo~\cite{jang2015silo} implements traffic pacing to improve remote access
latency; and MITTS~\cite{zhou2016mitts} ``shapes'' memory traffic on CPU cores
for performance and fairness.  The goals and approaches  are
different from {\sys}'s. Richter {\em et al.}~\cite{richter2015hardware} propose
to performance-isolate colocated tenants by modifying the NIC firmware. {\sys}'s
traffic shaping can be implemented in NIC to provide strong isolation
from the rest of the system in the face of microarchitectural side channels
(\S\ref{sec:hypace}).

\if 0
\textbf{(e) Related work with non-security goals.~}
%
%
Some prior work~\cite{pu2013your,ongaro2008scheduling,chiang15swiper} use
techniques similar to those of Pacer to isolate colocated tenants for
{\em performance} predictability (but not security).
Silo~\cite{jang2015silo} implements traffic pacing in the
hypervisor like {\sys}. Silo's goal is to improve remote
access~latency, not information security, and hence, its pacing logic
is very different.  MITTS~\cite{zhou2016mitts} ``shapes'' memory
traffic on CPU cores for performance and fairness, whereas Pacer
shapes network traffic for security.
Richter \emph{et al.}~\cite{richter2015hardware}
propose to performance-isolate colocated tenants by modifying the NIC
firmware. {\sys}'s traffic shaping can be similarly implemented
in the NIC. This would provide strong isolation of the pacing logic
from the rest of the system in the face of micro-architectural side
channels.
\fi

\if 0
\paragraph{\todo{Padding strategies}}
Dependent link padding algorithms for low latency anonymity systems~\cite{wang2008dependent}

\noindent PPTP: Privacy-Preserving Traffic Padding in Web-Based Applications~\cite{liu2014pptp}
\fi

\section{Conclusions}
\label{sec:conc}

{\sys} is a comprehensive, provably-secure mitigation for {\nsc} leaks in IaaS
Clouds. It reshapes network traffic outside guest VMs to make packet
timing and packet sizes independent of guest secrets. {\sys}
integrates with the host hypervisor to thwart attacks from colocated
tenants, relies on paravirtualization to respect network flow
control, congestion control, and loss recovery, and uses performance
isolation and masking to nullify the effects of internal timing
channels within the host.
{\sys}'s end-to-end overheads are
moderate.



\section*{Acknowledgments}

We thank Lorenzo Alvisi, Bobby Bhattacharjee, Keon Jang, Antoine
Kaufmann, Jonathan Mace, and the anonymous reviewers for their helpful
feedback on earlier versions of this paper. This work was supported in
by part by the European Research Council (ERC Synergy imPACT 610150)
and the German Science Foundation (DFG CRC 1223).



\appendix

{
\bibliographystyle{plain}
\bibliography{pacer}}

\begin{thebibliography}{10}

\bibitem{mediawiki1271}
{MediaWiki}.
\newblock \url{https://www.mediawiki.org/wiki/MediaWiki_1.27}.
\newblock Accessed 31 Aug 2020.

\bibitem{medicinenet}
{MedicineNet}.
\newblock \url{https://www.medicinenet.com/script/main/hp.asp}.
\newblock Last accessed on 16 Sep 2020.

\bibitem{napatechSmartNIC}
{NapaTech SmartNIC, Feature Overview Data Sheet}.
\newblock
  \url{https://www.napatech.com/support/resources/data-sheets/napatech-smartnic-feature-overview/}.

\bibitem{wrk2}
{wrk2: A constant throughput, correct latency recording variant of wrk}.
\newblock \url{https://github.com/giltene/wrk2}.

\bibitem{xennullsched}
{Xen Null scheduler}.
\newblock \url{https://patchwork.kernel.org/patch/9669405/}.

\bibitem{agarwal2016moving}
Yatharth Agarwal, Vishnu Murale, Jason Hennessey, Kyle Hogan, and Mayank Varia.
\newblock {Moving in next door: Network flooding as a side channel in cloud
  environments}.
\newblock In {\em Intl. Conf. on Cryptology and Network Security (CANS)}, 2016.

\bibitem{askarov2010}
Aslan Askarov, Danfeng Zhang, and Andrew~C Myers.
\newblock {Predictive black-box mitigation of timing channels}.
\newblock In {\em ACM Conf. on Computer and Communications Security (CCS)},
  2010.

\bibitem{beams2021packet}
Andrew Beams, Sampath Kannan, and Sebastian Angel.
\newblock {Packet scheduling with optional client privacy}.
\newblock 2021.

\bibitem{bilal2018mitigating}
Muhammad Bilal, Hassan Alsibyani, and Marco Canini.
\newblock {Mitigating Network Side Channel Leakage for Stream Processing
  Systems in Trusted Execution Environments}.
\newblock In {\em ACM Intl. Conf. on Distributed and Event-based Systems
  (DEBS)}, 2018.

\bibitem{braun2015robust}
Benjamin~A Braun, Suman Jana, and Dan Boneh.
\newblock {Robust and efficient elimination of cache and timing side channels}.
\newblock {\em arXiv preprint arXiv:1506.00189}, 2015.

\bibitem{brumley2011remote}
Billy~Bob Brumley and Nicola Tuveri.
\newblock {Remote timing attacks are still practical}.
\newblock In {\em European Symposium on Research in Computer Security
  (ESORICS)}, 2011.

\bibitem{brumley2005remote}
David Brumley and Dan Boneh.
\newblock {Remote timing attacks are practical}.
\newblock {\em Computer Networks}, 48(5), 2005.

\bibitem{cai2014csbuflo}
Xiang Cai, Rishab Nithyanand, and Rob Johnson.
\newblock {CS-BuFLO: A Congestion Sensitive Website Fingerprinting Defense}.
\newblock In {\em Workshop on Privacy in the Electronic Society (WPES)}, 2014.

\bibitem{cai2014tamaraw}
Xiang Cai, Rishab Nithyanand, Tao Wang, Rob Johnson, and Ian Goldberg.
\newblock {A systematic approach to developing and evaluating website
  fingerprinting defenses}.
\newblock In {\em ACM SIGSAC Conference on Computer and Communications Security
  (CCS)}, 2014.

\bibitem{cai2012touching}
Xiang Cai, Xin~Cheng Zhang, Brijesh Joshi, and Rob Johnson.
\newblock {Touching from a distance: Website fingerprinting attacks and
  defenses}.
\newblock In {\em ACM Conf. on Computer and Communications Security (CCS)},
  2012.

\bibitem{chen10reality}
Shuo Chen, Rui Wang, XiaoFeng Wang, and Kehuan Zhang.
\newblock {Side-Channel Leaks in Web Applications: A Reality Today, a Challenge
  Tomorrow}.
\newblock In {\em IEEE Symposium on Security and Privacy (SP)}, 2010.

\bibitem{and98trafficanalssl}
Heyning Cheng and Ron Avnur.
\newblock {Traffic Analysis of SSL Encrypted Web Browsing}, 1998.

\bibitem{chiang15swiper}
Ron~Chi{-}Lung Chiang, Sundaresan Rajasekaran, Nan Zhang, and H.~Howie Huang.
\newblock Swiper: Exploiting virtual machine vulnerability in third-party
  clouds with competition for {I/O} resources.
\newblock {\em {IEEE} Trans. on Parallel and Distributed Systems (TPDS)},
  26(6), 2015.

\bibitem{crooks2018obladi}
Natacha Crooks, Matthew Burke, Ethan Cecchetti, Sitar Harel, Rachit Agarwal,
  and Lorenzo Alvisi.
\newblock {Obladi: Oblivious serializable transactions in the cloud}.
\newblock In {\em {USENIX} Symposium on Operating Systems Design and
  Implementation (OSDI)}, 2018.

\bibitem{de2020trafficsliver}
Wladimir De~la Cadena, Asya Mitseva, Jens Hiller, Jan Pennekamp, Sebastian
  Reuter, Julian Filter, Thomas Engel, Klaus Wehrle, and Andriy Panchenko.
\newblock {TrafficSliver: Fighting Website Fingerprinting Attacks with Traffic
  Splitting}.
\newblock In {\em ACM Conf. on Computer and Communications Security (CCS)},
  2020.

\bibitem{dyer2012peekaboo}
Kevin~P Dyer, Scott~E Coull, Thomas Ristenpart, and Thomas Shrimpton.
\newblock {Peek-a-boo, I still see you: Why efficient traffic analysis
  countermeasures fail}.
\newblock In {\em {IEEE Symposium on Security and Privacy (SP)}}, 2012.

\bibitem{dyer2013protocol}
Kevin~P Dyer, Scott~E Coull, Thomas Ristenpart, and Thomas Shrimpton.
\newblock {Protocol misidentification made easy with format-transforming
  encryption}.
\newblock In {\em ACM Conf. on Computer and Communications Security (CCS)},
  2013.

\bibitem{eskandarian2017oblidb}
Saba Eskandarian and Matei Zaharia.
\newblock An oblivious general-purpose {SQL} database for the cloud.
\newblock {\em CoRR}, abs/1710.00458, 2017.

\bibitem{fletchery2014suppressing}
Christopher~W Fletchery, Ling Ren, Xiangyao Yu, Marten Van~Dijk, Omer Khan, and
  Srinivas Devadas.
\newblock {Suppressing the oblivious ram timing channel while making
  information leakage and program efficiency trade-offs}.
\newblock In {\em IEEE International Symposium on High Performance Computer
  Architecture (HPCA)}, 2014.

\bibitem{survey2016}
Qian Ge, Yuval Yarom, David Cock, and Gernot Heiser.
\newblock {A survey of microarchitectural timing attacks and countermeasures on
  contemporary hardware}.
\newblock {\em Journal of Cryptographic Engineering}, 2016.

\bibitem{gong2012website}
Xun Gong, Nikita Borisov, Negar Kiyavash, and Nabil Schear.
\newblock {Website Detection Using Remote Traffic Analysis.}
\newblock In {\em Privacy Enhancing Technologies Symposium (PETS)}, 2012.

\bibitem{gong2016schedulers}
Xun Gong and Negar Kiyavash.
\newblock {Quantifying the Information Leakage in Timing Side Channels in
  Deterministic Work-conserving Schedulers}.
\newblock {\em IEEE/ACM Trans. on Networking (TON)}, 24(3), 2016.

\bibitem{hayes2016k}
Jamie Hayes and George Danezis.
\newblock {k-fingerprinting: A robust scalable website fingerprinting
  technique}.
\newblock In {\em {USENIX} Security Symposium}, 2016.

\bibitem{hintz2002}
Andrew Hintz.
\newblock {Fingerprinting websites using traffic analysis}.
\newblock In {\em Conf. on Privacy Enhancing Technologies (PETS)}, 2002.

\bibitem{inci2015seriously}
Mehmet~Sinan Inci, Berk G{\"u}lmezoglu, Gorka~Irazoqui Apecechea, Thomas
  Eisenbarth, and Berk Sunar.
\newblock {Seriously, get off my cloud! Cross-VM RSA Key Recovery in a Public
  Cloud.}
\newblock {\em {IACR} Cryptology ePrint Archive}, 2015(1-15), 2015.

\bibitem{inci2016efficient}
Mehmet~Sinan {\.I}nci, Gorka Irazoqui, Thomas Eisenbarth, and Berk Sunar.
\newblock {Efficient, adversarial neighbor discovery using logical channels on
  Microsoft Azure.}
\newblock In {\em Annual Conf. on Computer Security Applications (ACSAC)},
  2016.

\bibitem{irazoqui2015ssa}
Gorka Irazoqui, Thomas Eisenbarth, and Berk Sunar.
\newblock {S\$A: A Shared Cache Attack That Works across Cores and Defies VM
  Sandboxing--and Its Application to AES}.
\newblock In {\em IEEE Symposium on Security and Privacy (SP)}, 2015.

\bibitem{jang2015silo}
Keon Jang, Justine Sherry, Hitesh Ballani, and Toby Moncaster.
\newblock {Silo: Predictable message latency in the cloud}.
\newblock In {\em ACM Conf. on Special Interest Group on Data Communication
  (SIGCOMM)}, 2015.

\bibitem{kadloor2016mitigating}
Sachin Kadloor, Negar Kiyavash, and Parv Venkitasubramaniam.
\newblock {Mitigating timing side channel in shared schedulers}.
\newblock {\em IEEE/ACM Trans. on Networking (TON)}, 24(3), 2016.

\bibitem{kingma2014adam}
Diederik~P Kingma and Jimmy Ba.
\newblock {Adam: A Method for Stochastic Optimization}.
\newblock \url{http://arxiv.org/abs/1412.6980}, 2014.

\bibitem{kocher1996timing}
Paul Kocher.
\newblock {Timing attacks on implementations of Diffie-Hellman, RSA, DSS, and
  other systems}.
\newblock In {\em Advances in Cryptology -- {CRYPTO}}, 1996.

\bibitem{langley2017quic}
Adam Langley, Alistair Riddoch, Alyssa Wilk, Antonio Vicente, Charles Krasic,
  Dan Zhang, Fan Yang, Fedor Kouranov, Ian Swett, Janardhan Iyengar, et~al.
\newblock {The QUIC Transport Protocol: Design and Internet-Scale Deployment}.
\newblock In {\em ACM Conf. on Special Interest Group on Data Communication
  (SIGCOMM)}, 2017.

\bibitem{lazar2018karaoke}
David Lazar, Yossi Gilad, and Nickolai Zeldovich.
\newblock {Karaoke: Distributed private messaging immune to passive traffic
  analysis}.
\newblock In {\em {USENIX} Symposium on Operating Systems Design and
  Implementation (OSDI)}, 2018.

\bibitem{lazar2019yodel}
David Lazar, Yossi Gilad, and Nickolai Zeldovich.
\newblock {Yodel: strong metadata security for voice calls}.
\newblock In {\em ACM Symposium on Operating Systems Principles (SOSP)}, 2019.

\bibitem{le2015herd}
Stevens Le~Blond, David Choffnes, William Caldwell, Peter Druschel, and
  Nicholas Merritt.
\newblock {Herd: A scalable, traffic analysis resistant anonymity network for
  {VoIP} systems}.
\newblock In {\em ACM Conf. on Special Interest Group on Data Communication
  (SIGCOMM)}, 2015.

\bibitem{li2014stopwatch}
Peng Li, Debin Gao, and Michael~K Reiter.
\newblock {Stopwatch: a cloud architecture for timing channel mitigation}.
\newblock {\em ACM Trans. on Information and System Security (TISSEC)}, 17(2),
  2014.

\bibitem{li2018measuring}
Shuai Li, Huajun Guo, and Nicholas Hopper.
\newblock {Measuring information leakage in website fingerprinting attacks and
  defenses}.
\newblock In {\em ACM Conf. on Computer and Communications Security (CCS)},
  2018.

\bibitem{liu15llcpractical}
Fangfei Liu, Yuval Yarom, Qian Ge, Gernot Heiser, and Ruby~B Lee.
\newblock {Last-level cache side-channel attacks are practical}.
\newblock In {\em IEEE Symposium on Security and Privacy (SP)}, 2015.

\bibitem{liu2017demand}
Weijie Liu, Debin Gao, and Michael~K Reiter.
\newblock {On-demand time blurring to support side-channel defense}.
\newblock In {\em {European Symposium on Research in Computer Security
  (ESORICS)}}, 2017.

\bibitem{lorch2013shroud}
Jacob~R Lorch, Bryan Parno, James Mickens, Mariana Raykova, and Joshua
  Schiffman.
\newblock {Shroud: Ensuring private access to large-scale data in the data
  center}.
\newblock In {\em {USENIX} Conference on File and Storage Technologies (FAST)},
  2013.

\bibitem{lu2018dynaflow}
David Lu, Sanjit Bhat, Albert Kwon, and Srinivas Devadas.
\newblock {DynaFlow: An Efficient Website Fingerprinting Defense Based on
  Dynamically-Adjusting Flows}.
\newblock In {\em Workshop on Privacy in the Electronic Society (WPES)}, 2018.

\bibitem{HTTPOS}
Xiapu Luo, Peng Zhou, Edmond~WW Chan, Wenke Lee, Rocky~KC Chang, and Roberto
  Perdisci.
\newblock {HTTPOS: Sealing Information Leaks with Browser-side Obfuscation of
  Encrypted Flows}.
\newblock In {\em Network and Distributed System Security Symposium (NDSS)},
  volume~11, 2011.

\bibitem{martin2012timewarp}
Robert Martin, John Demme, and Simha Sethumadhavan.
\newblock {TimeWarp: Rethinking Timekeeping and Performance Monitoring
  Mechanisms to Mitigate Side-channel Attacks}.
\newblock In {\em Intl. Symposium on Computer Architecture (ISCA)}, 2012.

\bibitem{mohajeri2012skypemorph}
Hooman Mohajeri~Moghaddam, Baiyu Li, Mohammad Derakhshani, and Ian Goldberg.
\newblock {Skypemorph: Protocol obfuscation for tor bridges}.
\newblock In {\em ACM Conf. on Computer and Communications Security (CCS)},
  2012.

\bibitem{nithyanand2014glove}
Rishab Nithyanand, Xiang Cai, and Rob Johnson.
\newblock {Glove: A bespoke website fingerprinting defense}.
\newblock In {\em Workshop on Privacy in the Electronic Society (WPES)}, 2014.

\bibitem{ongaro2008scheduling}
Diego Ongaro, Alan~L Cox, and Scott Rixner.
\newblock {Scheduling I/O in virtual machine monitors}.
\newblock In {\em ACM SIGPLAN/SIGOPS Intl. Conf. on Virtual Execution
  Environments (VEE)}, 2008.

\bibitem{panchenko2011}
Andriy Panchenko, Lukas Niessen, Andreas Zinnen, and Thomas Engel.
\newblock {Website fingerprinting in onion routing based anonymization
  networks}.
\newblock In {\em {ACM} Workshop on Privacy in the Electronic Society (WPES)},
  2011.

\bibitem{drama2016}
Peter Pessl, Daniel Gruss, Clementine Maurice, Michael Schwarz, and Stefan
  Mangard.
\newblock {DRAMA: Exploiting DRAM Addressing for Cross-CPU Attacks}.
\newblock In {\em USENIX Security Symposium}, 2016.

\bibitem{pu2013your}
Xing Pu, Ling Liu, Yiduo Mei, Sankaran Sivathanu, Younggyun Koh, Calton Pu, and
  Yuanda Cao.
\newblock {Who Is Your Neighbor: Net I/O Performance Interference in
  Virtualized Clouds}.
\newblock {\em IEEE Trans. on Services Computing}, 6(3), 2013.

\bibitem{richter2015hardware}
Andre Richter, Christian Herber, Stefan Wallentowitz, Thomas Wild, and Andreas
  Herkersdorf.
\newblock {A Hardware/Software Approach for Mitigating Performance Interference
  Effects in Virtualized Environments Using SR-IOV}.
\newblock In {\em IEEE Intl. Conf. on Cloud Computing (CLOUD)}, 2015.

\bibitem{getoffmycloud}
Thomas Ristenpart, Eran Tromer, Hovav Shacham, and Stefan Savage.
\newblock {Hey, You, Get off of My Cloud: Exploring Information Leakage in
  Third-party Compute Clouds}.
\newblock In {\em ACM Conf. on Computer and Communications Security (CCS)},
  2009.

\bibitem{saponas2007devices}
T~Scott Saponas, Jonathan Lester, Carl Hartung, Sameer Agarwal, Tadayoshi
  Kohno, et~al.
\newblock {Devices That Tell on You: Privacy Trends in Consumer Ubiquitous
  Computing.}
\newblock In {\em USENIX Security Symposium}, 2007.

\bibitem{beautyburst}
Roei Schuster, Vitaly Shmatikov, and Eran Tromer.
\newblock {Beauty and the Burst: Remote Identification of Encrypted Video
  Streams}.
\newblock In {\em USENIX Security Symposium}, 2017.

\bibitem{netspectre}
Michael Schwarz, Martin Schwarzl, Moritz Lipp, and Daniel Gruss.
\newblock {NetSpectre: Read Arbitrary Memory over Network}.
\newblock {\em CoRR}, abs/1807.10535, 2018.

\bibitem{shan2021real}
Shawn Shan, Arjun~Nitin Bhagoji, Haitao Zheng, and Ben~Y Zhao.
\newblock {A Real-time Defense against Website Fingerprinting Attacks}.
\newblock {\em arXiv preprint arXiv:2102.04291}, 2021.

\bibitem{DBLP:series/ais/Smith07}
Geoffrey Smith.
\newblock {Principles of Secure Information Flow Analysis}.
\newblock In Mihai Christodorescu, Somesh Jha, Douglas Maughan, Dawn Song, and
  Cliff Wang, editors, {\em Malware Detection}, volume~27 of {\em Advances in
  Information Security}, pages 291--307. Springer, 2007.

\bibitem{song2001timing}
Dawn~Xiaodong Song, David Wagner, and Xuqing Tian.
\newblock {Timing Analysis of Keystrokes and Timing Attacks on SSH}.
\newblock In {\em USENIX Security Symposium}, 2001.

\bibitem{sun02statistical}
Qixiang Sun, Daniel~R. Simon, Yi-Min Wang, Wilf Russell, Venkata~N.
  Padmanabhan, and Lili Qiu.
\newblock {Statistical Identification of Encrypted Web Browsing Traffic}.
\newblock In {\em IEEE Symposium on Security and Privacy (SP)}, 2002.

\bibitem{Tor}
Paul Syverson, Roger Dingledine, and Nick Mathewson.
\newblock {Tor: The Second-Generation Onion Router}.
\newblock In {\em Usenix Security}, 2004.

\bibitem{van2015vuvuzela}
Jelle Van Den~Hooff, David Lazar, Matei Zaharia, and Nickolai Zeldovich.
\newblock {Vuvuzela: Scalable private messaging resistant to traffic analysis}.
\newblock In {\em Symposium on Operating Systems Principles (SOSP)}, 2015.

\bibitem{varadarajan2014scheduler}
Venkatanathan Varadarajan, Thomas Ristenpart, and Michael~M Swift.
\newblock {Scheduler-based Defenses against Cross-VM Side-channels.}
\newblock In {\em USENIX Security Symposium}, 2014.

\bibitem{vattikonda2011eliminating}
Bhanu~C Vattikonda, Sambit Das, and Hovav Shacham.
\newblock {Eliminating fine grained timers in Xen}.
\newblock In {\em ACM workshop on Cloud Computing Security Workshop}, 2011.

\bibitem{vattikonda2012practical}
Bhanu~Chandra Vattikonda, George Porter, Amin Vahdat, and Alex~C Snoeren.
\newblock {Practical TDMA for Datacenter Ethernet}.
\newblock In {\em ACM European Conference on Computer Systems (EuroSys)}, 2012.

\bibitem{vila2017loophole}
Pepe Vila and Boris K{\"o}pf.
\newblock {Loophole: Timing attacks on shared event loops in chrome}.
\newblock In {\em USENIX Security Symposium}, 2017.

\bibitem{wang2014effective}
Tao Wang, Xiang Cai, Rishab Nithyanand, Rob Johnson, and Ian Goldberg.
\newblock {Effective attacks and provable defenses for website fingerprinting}.
\newblock In {\em USENIX Security Symposium}, 2014.

\bibitem{wang2017walkie}
Tao Wang and Ian Goldberg.
\newblock {Walkie-Talkie: An Efficient Defense Against Passive Website
  Fingerprinting Attacks}.
\newblock In {\em USENIX Security Symposium}, 2017.

\bibitem{winter2013scramblesuit}
Philipp Winter, Tobias Pulls, and Juergen Fuss.
\newblock {ScrambleSuit: A Polymorphic Network Protocol to Circumvent
  Censorship}.
\newblock In {\em ACM Workshop on Privacy in the Electronic Society (WPES)},
  2013.

\bibitem{wright2008spot}
Charles~V Wright, Lucas Ballard, Scott~E Coull, Fabian Monrose, and Gerald~M
  Masson.
\newblock {Spot me if you can: Uncovering spoken phrases in encrypted {VoIP}
  conversations}.
\newblock In {\em IEEE Symposium on Security and Privacy (SP)}, 2008.

\bibitem{wright2009morphing}
Charles~V. Wright, Scott~E. Coull, and Fabian Monrose.
\newblock Traffic morphing: An efficient defense against statistical traffic
  analysis.
\newblock In {\em Network and Distributed System Security Symposium (NDSS)},
  2009.

\bibitem{wright2006inferring}
Charles~V Wright, Fabian Monrose, and Gerald~M Masson.
\newblock {On Inferring Application Protocol Behaviors in Encrypted Network
  Traffic}.
\newblock {\em Journal of Machine Learning Research (JMLR)}, 7, Dec 2006.

\bibitem{wu2015deterland}
Weiyi Wu and Bryan Ford.
\newblock {Deterministically deterring timing attacks in Deterland}.
\newblock {\em arXiv preprint arXiv:1504.07070}, 2015.

\bibitem{xu2015}
Yuanzhong Xu, Weidong Cui, and Marcus Peinado.
\newblock {Controlled-Channel Attacks: Deterministic Side Channels for
  Untrusted Operating Systems}.
\newblock In {\em IEEE Symposium on Security and Privacy (SP)}, 2015.

\bibitem{flushreload2014}
Yuval Yarom and Katrina Falkner.
\newblock {FLUSH+RELOAD: A High Resolution, Low Noise, L3 Cache Side-Channel
  Attack}.
\newblock In {\em USENIX Security Symposium}, 2014.

\bibitem{yarom2017cachebleed}
Yuval Yarom, Daniel Genkin, and Nadia Heninger.
\newblock {CacheBleed: a timing attack on OpenSSL constant-time RSA}.
\newblock {\em Journal of Cryptographic Engineering}, 7(2), 2017.

\bibitem{zhang2011predinteractive}
Danfeng Zhang, Aslan Askarov, and Andrew~C Myers.
\newblock {Predictive Mitigation of Timing Channels in Interactive Systems}.
\newblock In {\em ACM Conf. on Computer and Communications Security (CCS)},
  2011.

\bibitem{zhou2016mitts}
Yanqi Zhou and David Wentzlaff.
\newblock {MITTS: Memory inter-arrival time traffic shaping}.
\newblock {\em ACM SIGARCH Computer Architecture News}, 44(3), 2016.

\end{thebibliography}

\section{Network Side-Channel Attack}
\label{sec:attack}

%
Here, we briefly describe a proof-of-concept {\nsc} 
attack.  To carry out such an attack, an adversary must be able to
observe a victim's network traffic. An adversary with access to
network elements like links, switches, or routers can observe the
traffic {\em directly}.  An adversary without direct access can still
observe victim traffic {\em indirectly} if they can control attack
traffic that shares bandwidth with the victim's traffic.

Indirect observation is impossible if {\em each network flow has
  exclusively reserved bandwidth}, as in time-division multiple access
(TDMA), which ensures non-interference among flows.  However, this
approach prevents statistical multiplexing and is very inefficient for
bursty traffic.  On the other hand, when bandwidth is shared, then
regardless of the queuing discipline, available bandwidth and queuing
delays observed by one flow are influenced by concurrent flows.
We demonstrate a simple attack where an adversary exploits the signals in the
queueing delays for its own traffic to infer the victim's traffic shape.

\smallskip\noindent
\textbf{Experimental setup.}
We set up two VMs, a victim and an attack VM, on two separate sockets
of a Dell PowerEdge R730 server machine ($S_1$). The VMs use Xen's
virtualized network stack; thus all traffic is routed through the
netback driver and the TCP stack in dom0 of the hypervisor. We configure $S_1$'s shared
NIC with a bandwidth of 1Gbps, and the hierarchical token bucket (HTB)
queueing discipline.
We further create two separate HTB traffic classes for (i) the attack traffic, and
(ii) the victim traffic and rest of the traffic through the host. We configure the
attack traffic to have a lower priority than all other traffic. This is a
reasonable assumption as an attacker can always lower the priority for its
traffic.

The victim hosts a custom video streaming service on
top of Apache, which servers video segment files in response to client requests.
A custom video client runs on a second server ($S_2$) and requests the video
segments sequentially over HTTPS.
The attack VM runs a UDP client that sends short
payloads (56 bytes) to a UDP server on a third machine ($S_3$), which
logs the packet arrival timestamps and echoes the packets back to the attack
client.  $S_2$ and $S_3$ have 10Gbps NICs and all machines are
connected with a 10Gbps switch; thus the bottleneck link is the shared
NIC at $S_1$.
The attack client maintains a send window of 4500 packets (computed based on the
bandwidth-delay product for the NIC), which ensures that some attack packets
are always queued at the bottleneck link without overflowing the queue.

We streamed 10 videos at 720p resolution from a YouTube dataset (a
detailed description of the dataset is given in~\S\ref{sec:padding})
for up to 30 segments.
Segments take less than 0.02s to download, and segments within a video are
requested at an interval of 5s.
We streamed each video 150 times.  During each video stream, we log the
series of arrival timestamps of the adversarial client's packets at the
adversarial server. We label each time series of the adversary's packet arrival
timestamps with the id of the video streamed by the victim.  Thus, we have
1500 time series of adversary's packet arrival timestamps with 10
distinct labels.

\textbf{Analysis.}
We aggregated each time series into windows of 50ms, and generated a
time series of the adversary's transmitted packet count in each window. The
packet count is the number of packet arrival timestamps recorded in each
time window.  Finally, we normalized each packet count time series using min-max
normalization.

\begin{figure}[t]\centering
\includegraphics[width=\columnwidth]{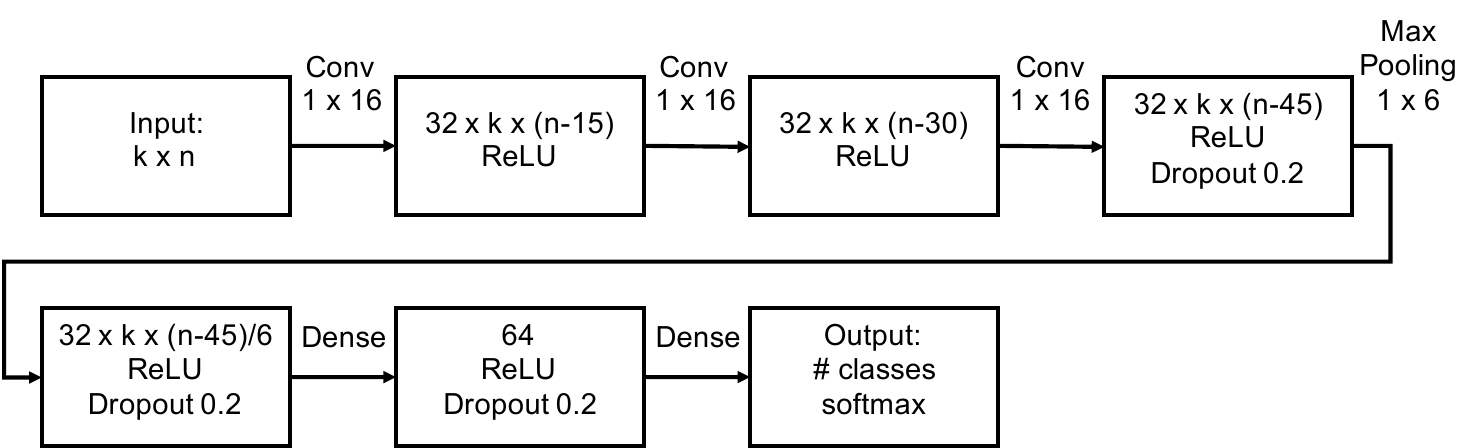}
\caption{CNN architecture. $k$: the number of features used.
  $n$: the number of elements of one time series, which is the total time of the
  time series divided by the window size (50ms).}
\label{fig:pacer:attack-cnn}
\end{figure}

Next, we implemented a CNN classifier to train on the time series of normalized packet
counts. Figure~\ref{fig:pacer:attack-cnn} shows the architecture of our
classifier, which consists of three convolution layers, a max pooling layer, and
two dense layers. We use a dropout of 0.2 between each pair of hidden layers of
the classifier as shown in the figure. We train the classifier with an Adam
optimizer \cite{kingma2014adam}, categorical cross-entropy error function, and
with input batches of 64 samples.
Our CNN classifier is similar to the one built by
Schuster \etal~\cite[section 7.2]{beautyburst}, with the difference that we used a
dropout of 0.2 between the model's hidden layers and 64 epochs for training.

We implemented the classifier using Tensorflow 2 API and with the Keras frontend.
We used 70\% of the time series data for each label (video) for training and the remaining for
evaluating the classifier.
The classifier achieves an overall precision and recall of 81.8\% each, and an
accuracy of 96.4\%.

\smallskip\noindent{\update{\bf Additional attack setups.}}
We performed a similar, but simpler attack on two additional setups:
(i) with the host, $S_1$'s NIC configured in SR-IOV, exposing vNICs to the
victim and attacker VMs, and
(ii) in a commercial IaaS Cloud provider platform.
In both cases, we were able to show that victim transfers of large files
generate a large signal on an attacker's cross-traffic that is visible in a
timeseries plot even to the naked eye. These attacks are not surprising, since
any queueing policy that allows a tenant to use network bandwidth not currently
used by other tenants that share a link permits {\nsc}s.

Our experiment confirms prior work~\cite{agarwal2016moving, song2001timing,
hintz2002, vila2017loophole, beautyburst, wright2008spot,
brumley2005remote, brumley2011remote, panchenko2011} and shows that
a network side-channel attack can identify videos in a collection with good
accuracy. While an attack in a production environment faces additional
challenges like achieving colocation with the victim, prior work has shown that it is easy to
attain colocation~\cite{getoffmycloud, inci2015seriously,
inci2016efficient}. Hence, cloud tenants that require strong confidentiality have
to consider that {\nsc}s are a realistic threat.


\if 0
We have also performed similar experiments with each VM interfacing directly with
an SR-IOV vNIC with bottleneck bandwidth set to 100Mbps.
We have also confirmed that the attack works in principle on a
public IBM Cloud, albeit with lower resolution (1 GB), presumably due
to higher network capacity and noise induced by cross traffic.  A more
sophisticated, appropriately trained classifier could no doubt achieve
much better resolution and robustness to noise.
\fi

\if 0
\textbf{Measuring victim's traffic shape.}  An adversary can exploit
this fact by measuring either the available bandwidth or queuing
delay, either of which carry a (potentially noisy) signal about the
victim traffic's shape.  To demonstrate a simple attack, we generate
periodic UDP traffic across a shared network link, measure the
per-packet delays, and look for delay shifts that signal a victim flow
at least 100KB in size. Our results show that even a very simple
rule-based classifier can reliably detect the size of victim flows
with a resolution of 300KB on a 1 Gbps link.  The attack works when
the shared bottleneck is at the server's access link or a downstream
link. We have also confirmed that the attack works in principle on a
public IBM Cloud, albeit with lower resolution (1 GB), presumably due
to higher network capacity and noise induced by cross traffic.  A more
sophisticated, appropriately trained classifier could no doubt achieve
much better resolution and robustness to noise.

\textbf{Recognizing videos.}  Next, we show how we can use
coarse-grained measurements of traffic shape to recognize MPEG-DASH
coded videos with good accuracy. With MPEG-DASH, videos are divided
into variable-sized compressed segments that each cover 5 seconds of
playback time. The sequence of segment sizes therefore characterizes a
video's length and content. In a randomly selected set of 956 1080p
videos we downloaded from YouTube, and assuming we can identify victim
flows at 300KB resolution, we can uniquely identify 681 (72\%) of the
videos, and place 120, 66, 8, 15, 12, 8, and 46 of the remaining
videos in candidate sets of sizes 2, 3, 4, 5, 6, 8 and 46,
respectively.
\fi


\if 0

In a network side channel attack, an adversary exploits pre-determined
correlations between observable attributes of the shape of a victim's
traffic and the victim's secrets. Broadly speaking, observable
attributes are the total \emph{number} of packets, and the
\emph{sizes} and \emph{timing} of individual packets. An adversary may
measure these attributes directly by capturing the victim's traffic (by having privileged access to routers, middle boxes, access points or any on-path vantage point) or
indirectly by observing contention with traffic it generates. Indirect
observations are more noisy, but easier to make in the setting of a
Cloud. In fact, Agarwal~\emph{et al.}~\cite{agarwal2016moving}
demonstrate the feasibility of an attack with indirect
observations in a Cloud.

In this section, we systematically explore the attack feasibility. We
re-demonstrate the attack in a variety of settings with varying challenging
levels and we provide a sample analysis of the required capabilities of an
attacker to be able to perform a network side channel attack.

\textbf{Attack setting}
An adversary and a colluder communicate together such that their malicious
traffic share a network link with the victim's traffic. A guaranteed way to
share a network link is to be colocated with a victim on a Cloud server where
the NIC is shared. The adversary consumes a fraction of the bandwidth of the
shared network link with its own traffic. By observing variations in the latency
of its own traffic upon contention, the adversary can infer the shape of the
victim's traffic and infer any secrets reflected in that shape.

\textbf{Attack vector}
Inferring the traffic shape of the victim, automatically makes the victim
subject to website content fingerprinting attacks. These attacks allow
adversaries to identify static encrypted content delivered by the victim  to a
client. The identifiable content can be any data hosted at the victim server
that has a unique traffic shape when transferred over the network. An example of
identifiable content would be a web page (the base html page size and the
sequence of web objects in the page can form a unique fingerprint
\cite{backes2013}), a downloaded file (the transfer size can uniquely identify
the downloaded file), a streamed video (the sequence of segment sizes of the
video can leak its title \cite{beautyburst}).

\textbf{Attack feasibility requirements.}
Unlike the attack demonstrated in the Beauty and the Burst
paper\cite{beautyburst}, and Agarwal~\emph{et al.}~\cite{agarwal2016moving} we
do not require flooding the network to allow for the side-channel to leak data,
Instead, our attack relies on precise packet generation by the attacker and
precise delay measurement at a colluder. For an attack to be feasible, there
must be contention on a network link, contention can happen if:
(a) the sum of the per-flow bandwidth allocations exceeds the network capacity, or
(b) flows can exceed their bandwidth allocations and the sum of the offered load
exceeds the network capacity. (a) and (b) can hold even if tenants bandwidth
limiting is applied because the bandwidth limiting enforcement is usually done
at a coarse-granularity to accommodate  bursty traffic without severely
impacting the latency.

\textbf{Attacker's required sensing granularity.}
An attacker needs only coarse grained estimate of the cross traffic size to be
able to launch an attack. The granularity of the attacker's measurements is
determined by the leakage it is after. The granularity for uniquely identifying
a video title may be very different from that required for uniquely identifying
web pages. Similarly, the granularity required is a function of the size of the
set of objects to be identified, and how quickly should the identification
happen.

To demonstrate this relation, we randomly crawled \todo{how was the dataset
collected?} 535 non-music 1080p YouTube videos with at least 100KB average
segment size. An attacker with a 10KB measurements resolution would be able to
uniquely identify a video by monitoring at most 20 consecutive segment
transmissions (100 seconds, assuming no initial buffering), an attacker with a
100KB resolution would be able to uniquely identify videos by monitoring at most
30 consecutive segment transmissions (150seconds, assuming no initial
buffering), while an attacker with 200KB resolution would be able to uniquely
identify videos by monitoring at most 60 consecutive segment transmissions (300
seconds, assuming no initial buffering). The overall average segment size was
1.03 MB, the standard deviation of the per-video average segment size is 391 KB.
A CDF of the percentage of uniquely identifiable videos as number of observed segments increase can be seen in Figure~\ref{fig:fingerprinting_analysis}.

\begin{figure}[t]
	\includegraphics[width=\columnwidth]{plots/fingerprinting_analysis.pdf}
	\centering
	\caption{CDF of the uniquely identifiable videos as number of observed segments increase.}
	\label{fig:fingerprinting_analysis}
\end{figure}

\textbf{Attack description.}
The attack is a 4 step process:
(1) Fingerprinting: Identifying the content traffic shape fingerprint through an
offline analysis of the data sizes and timing, or through actually requesting
the data and capturing the traffic shape as demonstrated in \cite{beautyburst}.
(2) Calibration: This step allows the attacker to measure the network delays
between itself and its colluder in the absence of cross traffic, and in the
absence of queuing delays. Flooding the network might add unnecessary noise by
introducing queuing delays among the attacker's own traffic. In this step the
attacker measures the maximum sustainable line rate that allows the attacker to
have stable delays. In practice, the attacker should try to utilise the maximum
bandwidth allocated to it by the cloud provider.
(3) Monitoring: In this step the attacker continuously monitors its own traffic
and records all delays it observes.
(4) Reconstruction and inference: In this final step, the attacker automatically
detects bursts that lead to delaying its own traffic. Based on the number of
attacker's own delayed messages, the underlying link speed, and the attacker's
messages rate, the attacker can infer the size of the victim's cross-traffic.

Our attack requires an attacker and a colluder. The attacker asynchronously
generates constant rate UDP traffic directed to the colluder. The attacker tags
each packet with a sequence number and a time stamp. The colluder computes a
shifted 1/2 RTT by recording the difference between the local TS and the
attacker's TS (Timestamps are constant-rate and monotonically increasing). This
approach allows the receiver to quickly check if there is a sudden change in the
shifted 1/2 RTT which gives a hint that a queuing delay was introduced. The
colluder echoes back each received message with a fresh echoing timestamp which
similarly allows the sender to compute both the RTT as well as the shifted 1/2
RTT. This approach allows the attacker (with the help of the colluder) quickly
detect queuing delays and identify their direction. Our attack uses a simple
rule-based classifier to detect bursts and map them to cross-traffic sizes.

\textbf{Attack demonstration}
We demonstrated the capabilities of our proposed attacker, and measured the
detection granularity it can achieve. We observe that our attacker can correctly
reconstruct burst sizes with \todo{90\%} of the time at a granularity of 20KB on
a 100Mbps link, 200KB on a 1Gbps link, and 2MB on a 10Gbps link. We setup an
experiment were the victim's client requests 20 files of sizes 100 KB up to 2000
KB, with increments of 100KB. Requests are 1 second apart (starting the first
second), and the network link shared by the victim and the attacker is 1Gbps,
where both the attacker and the victim are bandwidth limited to 0.5Gbps. In
Figure \ref{fig:attacker_accuracy}, we can see the attacker's inferred
cross-traffic sizes in 20 iterations of the same experiment. This experimental
setup satisfies condition (b) in the attack feasibility requirements. The same
result was achieved using different queuing policies, as well as when the
bandwidth sharing allowed each tenant to use more than 0.5Gbps (subject to
bandwidth availability) which satisfies condition (a).

\begin{figure}[t]
	\includegraphics[width=\columnwidth]{plots/attack_accuracy_1Gbps.pdf}
	\centering
	\caption{Attacker's inferred cross traffic sizes. (Final plot will use bars and std dev)}
	\label{fig:attacker_accuracy}
\end{figure}

\fi







\section{\update{Security of masking mechanisms}}
\label{sec:eval-masking}



\update{Recall from \S\ref{sec:hypace} that {\sys} relies on four
  parameters whose values are empirically determined: the epoch
  length, the packet transmission batch size, HyPace's interrupt
  handler masking delay ($\delta_{xmit}$), and GPace's inbound packet-
  and timer-processing masking delay ($\delta_{delay}$). Of these,
  only the last two parameters are security-relevant. In this section,
  we discuss experiments to demonstrate that (i) masking is necessary
  (without it, the actual runtime of the handler tasks are observable,
  which could be correlated with guest secrets), and (ii) our
  empirically computed thresholds are effective at masking these
  timing leaks.  We do this by analyzing {\sys}'s handlers under two
  extreme configurations: no background workload ({\bf nobg}) and with
  {\em heavy} background load ({\bf bg}).  }


\begin{figure}[t]
\begin{minipage}{1.0\columnwidth}
	\includegraphics[width=\columnwidth]{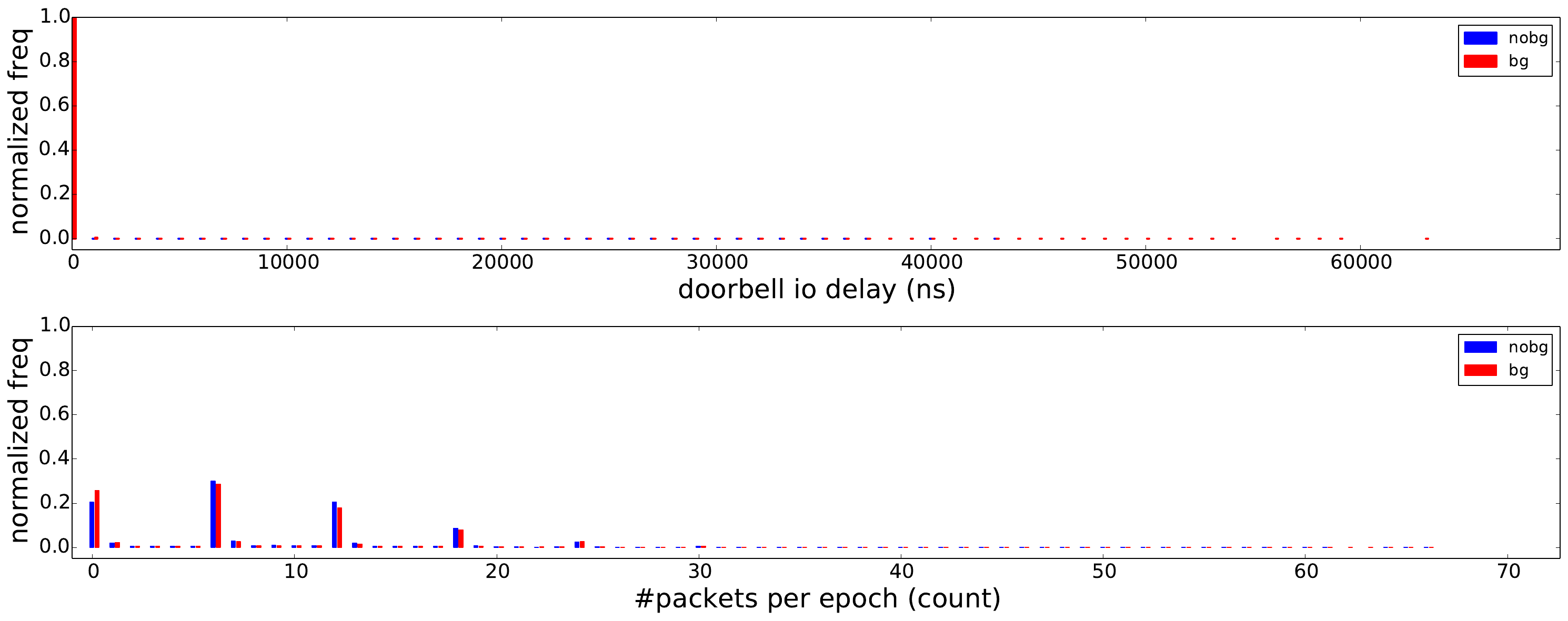}
	\centering
  \caption{\update{HyPace delays and batch size without masking.
  }}
  \label{fig:nomasking-bgload}
\end{minipage}
\begin{minipage}{1.0\columnwidth}
  \includegraphics[width=\columnwidth]{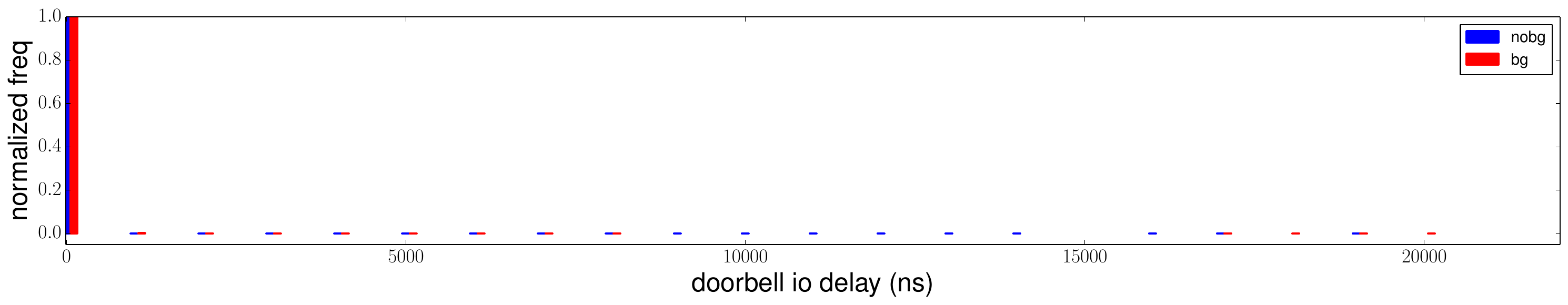}
  \centering
  \caption{\update{HyPace delays after applying masking.}}
  \label{fig:masking-bgload}
\end{minipage}
\end{figure}

To demonstrate that masking is necessary, we measure if there is any
difference in HyPace execution due to background load in the {\em
  absence of masking}.  Figure~\ref{fig:nomasking-bgload} top and
bottom plots respectively show the distributions of delays in HyPace's
doorbell writes (from the time of the scheduled interrupt handler) and
the number of packets that can be prepared within an 120$\mu$s epoch
in the two configurations. Both the distributions are based on 24-hour
experiments, with about 550 million epochs involving some packet
transmissions.
First of all, in absence of masking, the exact HyPace delay and batch size of
every single epoch is observable. Each pair of delay and batch size may be
correlated with specific secrets; thus the observations could leak the secrets.
Additionally, the distributions of delays and batch sizes are influenced by the
background workload. As the figure shows, the maximum HyPace delay observed (top
plot) is ~65$\mu$s and 44$\mu$s in {\bf bg} and {\bf nobg} configurations,
respectively and the maximum batch size (bottom plot) is 66 and 65 packets,
respectively. These observations show that the adversary can potentially affect
HyPace timing in the absence of masking to induce leaks. Hence, masking these
delays is essential for security.

Next, we repeat the experiment with masking enabled.
Figure~\ref{fig:masking-bgload} shows the observed delays of HyPace's
doorbell writes when masking with $\delta_{xmit}$ set to 35$\mu$s. As
can be seen,
\update{all handler execution times were masked in these
  experiments.}

For full disclosure, in earlier unrelated experiments, we had observed
a small number of epochs (\eg less than 20~out of 550 million) where
HyPace's delays exceeded $\delta_{xmit}$ by up to 5$\mu$s.
Such overruns, if they occurred in practice, would be mitigated
quickly by the automatic adjustment of $\delta_{xmit}$
(\S\ref{sec:hypace}).


To summarize, masking is necessary and effective. In the unlikely case
of HyPace delays exceeding our $\delta_{xmit}$ of 35$\mu$s, the
adversary would be able to observe execution times that may be
correlated with victim secrets.
However, \sys's automatic adjustment of $\delta_{xmit}$ denies the
adversary an opportunity to repeat an observation. Combined with the
adversary's challenge to induce increasing delays in the execution of
the privileged handlers, it seems impossible for an adversary to cause
a sufficient number of repeated overruns necessary to infer a victim's
secrets.

Our empirical observations and security arguments for $\delta_{delay}$ are
similar to the above, and we omit those details.

\section{Formal Model and Proof of Security}
\label{sec:formal-model}

\newcommand{\metafunc}[1]{\mathsf{#1}}
\newcommand{\kw}[1]{\mathtt{#1}}
\newcommand{\mytab}{~~~~~~~~}

\newcommand{\defeq}{\mathrel{\triangleq}}
\newcommand{\roundup}[1]{\lceil #1 \rceil}
\newcommand{\append}{\mathrel{+\!\!+}}
\newcommand{\merge}{\mathrel{\divideontimes}}

\newcommand{\nin}{\medskip\noindent}
\newcommand{\guest}{\mathsf{G}}
\newcommand{\env}{\mathsf{E}}
\newcommand{\hypace}{\mathsf{H}}
\newcommand{\state}{\sigma}
\newcommand{\sG}{{\state_{\guest}}}
\newcommand{\sE}{{\state_{\env}}}
\newcommand{\sH}{{\state_{\hypace}}}
\newcommand{\sGU}{{{\state}\mathsf{pub}_{\guest}}}
\newcommand{\sGR}{{{\state}\mathsf{pri}_{\guest}}}
\newcommand{\sEU}{{{\state}\mathsf{pub}_{\env}}}
\newcommand{\sER}{{{\state}\mathsf{pri}_{\env}}}
\newcommand{\prof}{\Phi}

\newcommand{\glob}{\mathsf{g}}
\newcommand{\sglob}{\state\glob}

\newcommand{\queue}{\mathsf{Q}}
\newcommand{\queueEG}{{\queue_{\guest}}}
\newcommand{\queueGH}{{\queue_{\hypace}}}
\newcommand{\queueGHu}{{\queue^u_{\hypace}}}
\newcommand{\queueGHp}{{\queue^p_{\hypace}}}
\newcommand{\queueHE}{{\queue_{\env}}}
\newcommand{\updates}{{\mathsf{U}}}
\newcommand{\updatesproj}{{\mathsf{V}}}

\newcommand{\ts}{\mathsf{T}}
\newcommand{\tinn}{\ts\mathsf{i}}
\newcommand{\tupd}{\ts\mathsf{u}}
\newcommand{\teff}{\ts\mathsf{e}}
\newcommand{\teffmax}{{\teff_{\max}}}
\newcommand{\tout}{\ts\mathsf{o}}
\newcommand{\tglob}{\ts\glob}

\newcommand{\event}{\mathsf{E}}
\newcommand{\evinn}{\event\mathsf{i}}
\newcommand{\nf}{\mathsf{f}}
\newcommand{\nfc}{\mathtt{flow}}

\newcommand{\evupd}{\event\mathsf{u}}
\newcommand{\evout}{\event\mathsf{o}}

\newcommand{\config}{\mathcal{C}}

\newcommand{\tdiff}{\delta}

\newcommand{\step}{\mathrel{\rightsquigarrow}}
\newcommand{\tepoch}{\tdiff_{e}}

\newcommand{\func}{\mathsf{F}}
\newcommand{\fE}{\func_\env}
\newcommand{\fG}{\func_\guest}
\newcommand{\fH}{\func_\hypace}
\newcommand{\fHu}{\func_\hypace^u}
\newcommand{\fHo}{\func_\hypace^o}

\newcommand{\bef}[1]{{\leq#1}}
\newcommand{\aft}[1]{{>#1}}
\newcommand{\res}[1]{|_{#1}}
\newcommand{\metaonly}{}

\newcommand{\resf}[2]{{[#1]_{#2}}}

\newcommand{\fupdprof}{\metafunc{update\_prof}}
\newcommand{\fupdate}{\metafunc{update}}

\newcommand{\rname}[1]{\mbox{#1}}

\newcommand{\invdelay}{{I_{\mathsf{delay}}}}
\newcommand{\inveffmax}{{I_{\mathsf{emax}}}}
\newcommand{\invupds}{{I_{\mathsf{upd}}}}

We build a formal model to prove formally that HyPace is secure. In
particular, it makes packet timing independent of any application
secrets. \update{The model assumes that the masking delays are never
exceeded.
}

Our model has the following actors:

\nin[-] Guest. The guest VM using HyPace's services.  Denoted
$\guest$. $\guest$ is modeled as a state machine with an internal
state $\sG$ that may contain secrets. Our goal is to keep these
secrets confidential from the other actors. $\guest$'s state machine
reacts to incoming network events and, in turn, generates events to
which HyPace reacts.

\nin[-] HyPace. Denoted $\hypace$. HyPace is modeled as a state
machine with internal state $\sH$. $\hypace$'s state machine reacts to
events generated by the guest $\guest$ and produces outgoing network
events to which the environment reacts.

\nin[-] The environment, which comprises everything outside the
server, including the network and the clients. Denoted $\env$. $\env$
is modeled as a state machine with internal state $\sE$. This state
machine reacts to HyPace's outgoing network events and produces
incoming network events to which the guest reacts.

For simplicity of exposition, we assume that there is a fixed set of
$N$ flow names, $\nfc_1, \ldots, \nfc_N$. Not all of these may be
active at any time, but we assume that HyPace always has a profile for
each of them. When a flow is not active, its (default) profile causes
no packets to be sent. We use $\nf$ and its decorated variants as
meta-variables that range over $\nfc_1, \ldots, \nfc_N$.

\paragraph{Event queues}
The three causal interactions $\env \rightarrow \guest$, $\guest
\rightarrow \hypace$, and $\hypace \rightarrow \env$ are mediated by
three event queues (producer-consumer queues), written $\queueEG$,
$\queueGH$ and $\queueHE$, respectively. A queue contains
\emph{timestamped pending events} that have been generated by the
queue's producer ($\env$ state machine for $\queueEG$), but not yet
handled by the queue's consumer ($\guest$ for $\queueEG$). The
subscript on a queue indicates its consumer. We describe these queues
next.

\nin $\queueEG$: This queue is a set of tuples of the form $(\tinn_i,
\evinn_i, \nf_i)$ meaning that incoming network event $\evinn_i$ that
occurred at time $\tinn_i$ on flow $\nf_i$ is still pending for the
guest. Incoming network events represent incoming packets (including
new client requests), network congestion signals and indicators of
packet loss (e.g., network stack timeouts). The exact structure of
these events is irrelevant here and, hence, kept abstract.
\[\begin{array}{lll}
\queueEG  & ::= &
(\tinn_1, \evinn_1, \nf_1), \ldots, (\tinn_n, \evinn_n, \nf_n)
\end{array}\]

\nin $\queueGH$: This queue actually consists of two subqueues -- the
\emph{profile update subqueue}, written $\queueGHu$, and the
\emph{packet subqueue}, written $\queueGHp$ -- and a time value
$\teffmax$.
\[ \begin{array}{lll}
\queueGH  & ::= & (\queueGHu, \teffmax, \queueGHp)
\end{array}\]

$\queueGHu$ contains profile updates. It is actually a key-value
store, keyed by flows. For every flow $\nfc_i$ ($i \in \{1\ldots N\}$),
the value is a set $\updates_i$ of pending updates on that flow.
\[\begin{array}{lll}
\updates & ::= & (\tupd_1, \evupd_1, \teff_1), \ldots, (\tupd_m, \evupd_m, \teff_m) \\
& & \mbox{{-}- sorted ascending by $\teff_i$}\\
\queueGHu & ::= & \nfc_1 \mapsto \updates_1, \ldots, \nfc_N \mapsto \updates_N 
\end{array}\]
$\updates$ contains update tuples of the form $(\tupd_i, \evupd_i,
\teff_i)$ meaning that the profile update described by $\evupd_i$ was
queued by the guest at time $\tupd_i$, but \emph{should be effective
  at time $\teff_i$}. $\evupd_i$ may replace the existing profile at
the application's request, or update/start the profile in response to
an incoming network event (e.g., starting a new profile in response to
a new request, pausing or resuming a profile in response to congestion
signals, or extending a profile in response to packet
retransmissions).

It is assumed that the guest chooses $\teff_i$ sufficiently after
$\tupd_i$ to allow the event to propagate to HyPace despite any
processing delays (we explicitly model this assumption later). It is
essential that \emph{$\teff_i$ be independent of secrets}. For an
update in response to an incoming network event, $\teff_i$ can be set
to the timestamp of the incoming network event plus the maximum
(empirical) propagation delay of the network stack. For a profile
update initiated by the application, the application can set $\teff_i$
to the end time of the current profile minus the propagation delay of
the application and network stack. If the propagation delay of the
application is secret-dependent, it can be bucketized into public
buckets and the next higher bucket boundary can be used instead.

Note that $\updates$ is an ordered list, not a set. It is sorted in
increasing order of $\teff_i$. HyPace also applies updates in this
order. We often treat $\queueGHu$ as an indexed vector, writing
$\queueGHu[\nf]$ for the update tuples of flow $\nf$. 

$\queueGH$ contains a timestamp $\teffmax$, which is the
highest effective time of any update event that has been added to
$\queueGHu$ in the past. This means that if $(\tupd, \evupd, \teff)
\in \updates_i$, then $\teff \leq \teffmax$.

The packet subqueue $\queueGHp$ contains network packets queued by the
guest for transmission. These packets should be encrypted. The details
of this queue are irrelevant for our model, so we leave it abstract.

\nin $\queueHE$: This queue is a set of tuples of the form $(\tout_i,
\evout_i, \nf_i)$ meaning that the outgoing network packet $\evout_i$
is generated on flow $\nf_i$ by HyPace at time $\tout_i$. The
$\evout_i$ represents the encrypted payload, so its structure is
irrelevant.
\[\begin{array}{lll}
\queueHE  & ::= &
(\tout_1, \evout_1, \nf_1), \ldots, (\tout_n, \evout_n, \nf_n)
\end{array}\]

\paragraph{State}
Next, we describe the internal states of the environment, guest and
HyPace:

\nin $\sE$: The environment's state $\sE$ is a pair of a private
component and a remaining component, written $\sER$ and $\sEU$,
respectively. The private component $\sER$ is held only by the clients
of the guest VM being protected, while $\sEU$ represents the remaining
state of the clients, the network and any other actors. We do not
specify these any further, but there are constraints on how they
evolve.
\[\begin{array}{lll}
\sE  & ::= & (\sER, \sEU)
\end{array}\]

\nin $\sG$: The guest's state $\sG$ is similarly a pair of a private
component and a remaining component, written $\sGR$ and $\sGU$,
respectively. We do not specify these any further, but there are
constraints on how they evolve.
\[\begin{array}{lll}
\sG  & ::= & (\sGR, \sGU)
\end{array}\]

\nin $\sH$: HyPace's state $\sH$ consists of a map from flows to the
current active profiles on them. We use the notation $\prof$ for a
profile. The exact structure of profiles is irrelevant for the
security argument, so we keep it abstract.
\[\begin{array}{lll}
\sH  & ::= & \nfc_1 \mapsto \prof_1, \ldots, \nfc_N \mapsto \prof_N
\end{array}\]

\paragraph{Auxiliary state}
There is also some auxiliary state that is not associated to any
specific component. This state is just the current (global) time,
written $\tglob$.
\[\begin{array}{lll}
\tglob & ::= & \mbox{Current global time} 
\end{array}\]
%

\paragraph{Overall state (Configuration)}
The overall state of the system, also called a \emph{configuration}
and denoted $\config$, consists of the internal states of HyPace, the
guest and the environment, the three event queues, and the auxiliary
state.
\[\begin{array}{lll}
\config  & ::= & (\sE, \sG, \sH, \queueEG, \queueGH, \queueHE, \tglob)
\end{array}\]


\subsection{System evolution}

The overall state (configuration) evolves over time through
transitions. We write $\config \step \config'$ to say that the
configuration $\config$ transitions to $\config'$ in a single step.

A transition happens when one of the agents acts on events pending for
it. Without loss of generality, we assume that only HyPace's actions
cause the global time $\tglob$ to jump forward, and this jump is
exactly the length of one epoch, which we denote $\tepoch$.
\[ \begin{array}{lll}
\tepoch & ::= & \mbox{Length of epoch} 
\end{array}\]
Also w.l.o.g., the agents act in order: HyPace, environment, guest,
and repeat. Technically,
\[\step ~\defeq~ \step_\env ~;~ \step_\guest ~;~ \step_\hypace\]
where $\step_\env$, $\step_\guest$ and $\step_\hypace$ represent steps
of the environment, guest and HyPace, respectively, and the semicolon
means relation composition. In the following we describe $\step_\env$,
$\step_\guest$ and $\step_\hypace$ one by one.


\begin{figure}
\[\begin{array}{lcl}
\queueHE\res{\metaonly} & \defeq & \{ (\tout, \nf) ~|~ (\tout, \evout, \nf) \in \queueHE \}
\\
\queueHE_1 \sim \queueHE_2 & \defeq & \queueHE_1\res{\metaonly} = \queueHE_2\res{\metaonly}
\\\\
\queueEG\res{\metaonly} & \defeq & \{ (\tinn, \nf) ~|~ (\tinn, \evinn, \nf) \in \queueEG \}
\\
\queueEG_1 \sim \queueEG_2 & \defeq & \queueEG_1\res{\metaonly} = \queueEG_2\res{\metaonly}
\\\\
\updates\res{\metaonly} & \defeq & [ (\evupd_i, \teff_i) ~|~ (\tupd_i, \evupd_i, \teff_i) \in \updates ]
\\
\queueGHu\res{\metaonly} & \defeq & \{ (\nf \mapsto \updates\res{\metaonly}) ~|~ (\nf \mapsto \updates) \in \queueGHu \}
\\
\queueGHu_1 \sim \queueGHu_2 & \defeq & \queueGHu_1\res{\metaonly} = \queueGHu_2\res{\metaonly}
\\\\
\sE\res{\metaonly} & = & \sEU \mbox{ for } \sE = (\sER, \sEU)
\\
\sE_1 \sim \sE_2 & \defeq & \sE_1 \res{\metaonly} = \sE_2 \res{\metaonly}
\\\\
\sG\res{\metaonly} & = & \sGU \mbox{ for } \sG = (\sGR, \sGU)
\\
\sG_1 \sim \sG_2 & \defeq & \sG_1 \res{\metaonly} = \sG_2 \res{\metaonly}
\end{array}
\]

  \caption{Equivalence of states and queues}
  \label{fig:equivalence}
\end{figure}

\begin{figure}
  \framebox{Assumptions}
  \begin{mathpar}
    \begin{array}{ll}
      (1) & 
      \begin{array}[t]{l}
        \queueHE_1 \sim \queueHE_2 \mbox{ and } \sE_1 = \sE_2 \mbox{ and } \\
        \fE(\queueHE_1, \sE_1, \tglob) = (\queueHE_1', \queueEG_1'', \sE_1') \mbox{ and } \\
        \fE(\queueHE_2, \sE_2, \tglob) = (\queueHE_2', \queueEG_2'', \sE_2') \\
        \Rightarrow \\
        ~~~~ \queueHE_1' \sim \queueHE_2' \mbox { and } \\
        ~~~~ \queueEG_1'' \sim \queueEG_2'' \mbox { and } \\
        ~~~~ \sE_1' \sim \sE_2'
      \end{array}
    \end{array}
  \end{mathpar}
  \framebox{Transition}
  \begin{mathpar}
    \inferrule{
      \fE(\queueHE, \sE, \tglob) = (\queueHE', \queueEG'', \sE')
    }{
      (\sE, \sG, \sH, \queueEG, \queueGH, \queueHE, \tglob) \\\\
      \step_\env
      (\sE', \sG, \sH, \queueEG \cup \queueEG'', \queueGH, \queueHE', \tglob)
    }\rname{env}
  \end{mathpar}
  \caption{Assumptions and transition of the environment}
  \label{fig:env}
\end{figure}

\subsubsection{Environment acts}
The environment acts by consuming a subset of events in the queue
$\queueHE$, processing them to update its internal state and adding
new events to the queue $\queueEG$. We model the environment as an
abstract function $\fE$ that takes as input the current queue
$\queueHE$, the environment's internal state $\sE$ (which contains the
private component $\sER$ and the remaining component $\sEU$) and the
current global time $\tglob$. It outputs a new internal state $\sE'$,
an updated queue $\queueHE'$ (which should be a subset of $\queueHE$)
and a set of events $\queueEG''$ that are added to $\queueEG$ by
$\step$.
\[ \fE(\queueHE, \sE, \tglob) = (\queueHE', \queueEG'', \sE') \]

Using $\fE$, we define the transition rule ($\rname{env}$) for the
environment (Figure~\ref{fig:env}). Note the index $\env$ in
$\step_\env$, which indicates that this is the environment's
transition.\footnote{The way to read a rule $\inferrule{A}{B}$ is that
  \emph{if} $A$ holds \emph{then} $B$ holds.}

The function $\fE$ can be arbitrary (we don't assume that we know what
the network and the clients do), but it is subject to an important
security assumption, which we describe here.

\nin (Clients don't break secrecy explicitly) Clients get to see the
payloads of incoming messages, which may depend on secrets and also
have access to their existing private state $\sER$. In our threat
model, we explicitly trust clients to not leak either of these into
timing. In the formal model, this is specified by a constraint on
$\fE$. Specifically, if we consider two input queues $\queueHE_1$ and
$\queueHE_2$ that differ only in payloads (but agree on timing), and
two states $\sE_1$ and $\sE_2$ that differ only in the private
components $\sER_1$ and $\sER_2$ (but agree in the non-private
components), then the output queues $\queueHE_1'$ and $\queueHE_2'$
should differ only in the payloads (similarly for $\queueEG''_1$ and
$\queueEG_2''$), and the output states $\sE_1$ and $\sE_2$ can differ
only in the private components.

Formally, we define $\queueHE_1 \sim \queueHE_2$ to mean that
$\queueHE_1$ and $\queueHE_2$ agree on the flows and timestamps of
events. Similarly, we define $\queueEG_1 \sim \queueEG_2$. Finally, we
define $\sE_1 \sim \sE_2$ to mean that $\sE_1$ and $\sE_2$ agree on
the non-private components. These definitions are shown in
Figure~\ref{fig:equivalence}. We then make the assumption (1) in
Figure~\ref{fig:env}, which captures exactly the intuition described
in the previous paragraph.

\paragraph{Note.}
A real $\fE$ would also have the following properties, but we do not
need these properties for security, so we do \emph{not} assume
them. We show these properties just for completeness.

\nin (Causality) $\fE$ should depend only on past events in
$\queueHE$, i.e., those that occurred before the current time
$\tglob$. Formally, we assume that
\[ \fE(\queueHE, \sE, \tglob) = \fE(\queueHE\res{\bef{\tglob}}, \sE, \tglob) \]
Here, $\queueHE\res{\bef{\tglob}}$ denotes the subset of $\queueHE$
containing events whose timestamps are no more than $\tglob$.
\[ \queueHE\res{\bef{\tglob}} \defeq \{ (\tout, \evout, \nf) ~|~  (\tout, \evout, \nf) \in \queueHE \mbox{ and } \tout \leq \tglob \} \]

\nin (Non-modification of past outputs) $\fE$ should not output events
in the past, i.e., $\queueEG''$ should not contain any events with
timestamps $\tglob$ or lower.
\[ \forall (\tinn, \evinn, \nf) \in \queueEG''. \; \tinn > \tglob \]

\nin (Non-consumption of future inputs) $\fE$ should not consume input
events from the future, i.e., $\queueHE'$ should agree with $\queueHE$
on \emph{future} events, i.e.,
\[ \queueHE\res{\aft{\tglob}} = \queueHE'\res{\aft{\tglob}} \]
Here, $\queueHE\res{\aft{\tglob}}$ denotes the subset of $\queueHE$
containing events whose timestamps are strictly greater than
$\tglob$. It is defined analogous to $\queueHE\res{\bef{\tglob}}$.


\begin{figure}
  \framebox{Assumptions}
  \begin{mathpar}
    \begin{array}{ll}
    (2) & 
      \begin{array}[t]{l}
        \queueEG_1 \sim \queueEG_2 \mbox{ and } \sG_1 \sim \sG_2 \mbox{ and }\\
        \fG(\queueEG_1, \teffmax_1, \queueGHp_1, \sG_1, \tglob) = \\
        ~~~~~(\queueEG_1', \queueGHu_1'', \teffmax_1', \queueGHp_1', \sG_1') \mbox{ and } \\
        \fG(\queueEG_2, \teffmax_2, \queueGHp_2, \sG_2, \tglob) =\\
        ~~~~~(\queueEG_2', \queueGHu_2'', \teffmax_2', \queueGHp_2', \sG_2') \\
        \Rightarrow \\
        %
        ~~~~ \queueEG_1' \sim \queueEG_2' \mbox { and } \\
        ~~~~ \queueGHu_1'' \sim \queueGHu_2'' \mbox { and } \\
        ~~~~ \sG_1' \sim \sG_2'
      \end{array}
      \\\\
      (3) &
      \begin{array}[t]{l}
        \fG(\queueEG, \teffmax, \queueGHp, \sG, \tglob) = \\
        ~~~~(\queueEG', \queueGHu', \teffmax', \queueGHp', \sG') \\
        \Rightarrow
        \invdelay(\queueGHu')
      \end{array}
      \\
      \multicolumn{2}{l}{
        \begin{array}{l@{}l}
          \mbox{where} \\
          \invdelay(\queueGHu) ~\defeq~ & \forall (\nf \mapsto \updates) \in \queueGHu.\;
          \forall (\tupd_i, \evupd_i, \teff_i) \in \updates.
          \\
          & ~~~~~~~~~~ \tupd_i \leq \teff_i
        \end{array}
      }
      \\\\
      (4) &
      \begin{array}[t]{l}
        \fG(\queueEG, \teffmax, \queueGHp, \sG, \tglob) = \\
        ~~~~ (\queueEG', \queueGHu'', \teffmax', \queueGHp', \sG') \\
        \Rightarrow
        \inveffmax(\queueGHu'', \teffmax')
      \end{array}
      \\
      \multicolumn{2}{l}{
        \begin{array}{l@{}l}
          \mbox{where} \\
          \inveffmax(\queueGHu, \teffmax) ~\defeq~ & \forall (\nf \mapsto \updates) \in \queueGHu.\;
          \forall (\tupd_i, \evupd_i, \teff_i) \in \updates.
          \\
          & ~~~~~~~~~~ \teff_i \leq \teffmax
        \end{array}
      }
      \\\\
      (5) &
      \begin{array}[t]{l}
        \fG(\queueEG, \teffmax, \queueGHp, \sG, \tglob) = \\
        ~~~~~ (\queueEG', \queueGHu'', \teffmax', \queueGHp', \sG') \\
        \Rightarrow
        \forall (\nf \mapsto \updates') \in \queueGHu''. \;
        \forall (\tupd_i', \evupd_i', \teff_i') \in \updates'. \\
        ~~~~~~~~~~~\teffmax < \teff_i'
      \end{array}
    \end{array}
  \end{mathpar}
  \framebox{Transition}
  \begin{mathpar}
    \inferrule{
      \queueGH = (\queueGHu, \teffmax, \queueGHp) \\ 
      \fG(\queueEG, \queueGHp, \sG, \tglob) = (\queueEG', \queueGHu'', \teffmax', \queueGHp', \sG') \\
      \queueGH' \leftarrow (\queueGHu \merge \queueGHu'', \teffmax', \queueGHp')
    }{
      (\sE, \sG, \sH, \queueEG, \queueGH, \queueHE, \tglob) \\\\
      \step_\guest
      (\sE, \sG', \sH, \queueEG', \queueGH', \queueHE, \tglob)
    }\rname{guest}
  \end{mathpar}
  Note: $\merge$ is the merge operation on (sorted) lists, lifted
  pointwise to key-value tuples pointwise on keys.\\
  \caption{Assumptions and transition of the guest}
  \label{fig:guest}
\end{figure}

\subsubsection{Guest acts}
The guest acts by consuming events from $\queueEG$ to update its
internal state and to produce events in the queue $\queueGH$
(including both its subqueues $\queueGHu$ and $\queueGHp$). We model
the environment as an abstract function $\fG$.
\[ \fG(\queueEG, \teffmax, \queueGHp, \sG, \tglob) = (\queueEG', \queueGHu'', \teffmax', \queueGHp', \sG') \]
$\fG$ takes as input the incoming network event queue $\queueEG$
(recall that this queue is populated by the environment), the current
maximum update effective time $\teffmax$, the current packet queue
$\queueGHp$, the guest's current state $\sG$ and the current time
$\tglob$. It outputs an updated input queue $\queueEG'$, a set of
profile update key-values $\queueGHu''$ to add to the hypervisor's
update queue, a new $\teffmax'$, an updated packet queue $\queueGHu'$,
and a new guest state $\sG'$.

Using $\fG$, we define the transition rule ($\rname{guest}$) for the
guest (Figure~\ref{fig:guest}). The index $\guest$ in $\step_\guest$
indicates that this is the guest's transition.

The function $\fG$ can be arbitrary (meaning that the enforcement is
almost black-box), but it is subject to some causality and security
assumptions.

\nin (Guest does not break secrecy explicitly) The guest should
distinguish private from public state, but it may have timing
leaks. Specifically, the descriptions of profile updates (denoted
$\evupd_i$) it queues for the hypervisor must not depend on the
guest's secret state or the payloads of incoming packets, which may
also be secret-dependent. Similarly, the times at which these updates
become effective (determined by $\teff_i$) should be
secret-independent. However, the time at which the update events are
queued (denoted $\tupd_i$) may depend on secrets due to timing
leaks. Also, the packets the guest queues to send (i.e., the subqueue
$\queueGHp$) may depend on secrets. These packets are encrypted
anyhow.

To formalize this, we define notions of equivalence $\sim$ of the
guest state $\sG$ (the public components, but not the private
components, must coincide) and the update event queue $\queueGHu$
(Figure~\ref{fig:equivalence}). We then assume (2) from
Figure~\ref{fig:guest}, which formalizes the intuition of the previous
paragraph.
Observe that $\queueGHu_1'' \sim \queueGHu_2''$ in (2) correctly
imposes no restrictions on $\tupd_i$s as these may depend on
secrets. Additionally, (2) imposes no restrictions at all on the
packet queues ($\queueGHp_1'$ and $\queueGHp_2'$), as these queues may
also be secret-dependent.

\nin (Propagation delays are respected) Next, we formalize the fiat
assumption that the guest accounts for propagation delays
correctly. For this, we define a property $\invdelay(\queueGHu)$ on
profile update subqueues, and assume that this property holds of the
output $\queueGHu''$ of $\fG$ (assumption (3) in
Figure~\ref{fig:guest}). $\invdelay(\queueGHu)$ simply says that for
any update tuple $(\tupd_i, \evupd_i, \teff_i)$ in $\queueGHu$,
$\tupd_i \leq \teff_i$, meaning that the time at which the update gets
queued ($\tupd_i$) is no more than the intended effective time
$\teff_i$.

\nin (Guest queues updates in increasing order) The new updates the
guest queues ($\queueGHu''$) should have effective times after
$\teffmax$, and the new $\teffmax'$ should be an upper bound on the
effective times in $\queueGHu'$. We formalize these as assumptions~(4)
and~(5) in Figure~\ref{fig:guest}.

\paragraph{Note.}
Any real guest would also have the following additional
properties. These properties are not necessary for security, so we do
not assume them. We mention them just for completeness.

\nin (Causality) $\fG$ should only depend on past events in
$\queueEG$, i.e., those that occurred before $\tglob$.

\nin (Non-modification of past outputs) $\fG$ should not output events
in the past, i.e., $\queueGHu'$ should only contain events with
timestamps greater than $\tglob$.

\nin (Non-consumption of future inputs) $\fG$ should not remove future
events from its input queue, $\queueEG$. Formally, $\queueEG$ and
$\queueEG'$ should agree on events that have timestamps greater than
$\tglob$.


\begin{figure}
  \[\begin{array}{l}
  \hline\\[-10pt]
  \kw{function}~~ \fupdate(\prof, \updatesproj)
  \\ \hline \\[-6pt]
  \prof_{out} \leftarrow \prof \\
  \kw{foreach}~(\evupd_i, \teff_i) \in \updatesproj: \\
  \mytab\prof_{out} \leftarrow \fHu(\prof, \evupd_i, \teff_i)\\
  \kw{return}~\prof_{out}
  \\\hline
  \\\\
  \hline\\[-10pt]
  \kw{function}~~ \fupdprof(\prof, \updates, \tglob)
  \\ \hline \\[-6pt]
  i \leftarrow \min_j \{\updates[j] = (\tupd, \_, \_) \mbox{ and } \tupd > \tglob\} \\
  \updates_{curr} \leftarrow \updates[..(i-1)] \\
  \updates_{rest} \leftarrow \updates[i..] \\
  \prof_{out} \leftarrow \fupdate(\prof, \updates_{curr}\res{\metaonly}) \\
  \kw{return}~ (\prof_{out}, \updates_{rest})
  \\\hline
  \end{array}\]
  \caption{The functions $\fupdate$ and $\fupdprof$ that model
    HyPace's profile update logic}
  \label{fig:updprof}
\end{figure}


\begin{figure}
  \framebox{Assumptions}
  \begin{mathpar}
    \begin{array}{ll}
      (6) & 
      \begin{array}[t]{l}
        \fHu(\prof, \evupd, \teff) = \prof' \mbox{ and }
        \teff' < \teff \\
        ~~\Rightarrow
        \resf{\prof}{\teff'} = \resf{\prof'}{\teff'}
      \end{array}
      \\\\
      (7) &
      \begin{array}[t]{l}
        s_1 = s_2 \mbox{ and }\\
        \fHo(\queueGHp_1, s_1) = (\queueGHp_1', \queueHE_1'') \mbox{ and }\\
        \fHo(\queueGHp_2, s_2) = (\queueGHp_2', \queueHE_2'') \\
        \Rightarrow \\
        ~~~~ \queueHE_1'' \sim \queueHE_2''
      \end{array}
    \end{array}
  \end{mathpar}
  \framebox{Transition}
  \begin{mathpar}
    \inferrule{
      \sH = \{\nfc_i \mapsto \prof_i\}_{i=1}^N \\
      \queueGHu = \{\nfc_i \mapsto \updates_i\}_{i=1}^N \\
      (\prof_i', \updates_i') = \fupdprof(\prof_i,\updates_i,\tglob)\\
      \sH' \leftarrow \{\nfc_i \mapsto \prof_i'\}_{i=1}^N \\
      \resf{\sH'}{\tglob} \defeq \{\nfc_i \mapsto \resf{\prof_i'}{\tglob}\}_{i=1}^N \\
      \fHo(\queueGHp, \resf{\sH'}{\tglob}) = (\queueGHp', \queueHE'') \\
      \queueGHu' \leftarrow \{\nfc_i \mapsto \updates_i'\}_{i=1}^N     
    }{
      (\sE, \sG, \sH, \queueEG, \queueGH, \queueHE, \tglob) \\\\
      \step_\hypace
      (\sE, \sG, \sH', \queueEG, \queueGH', \queueHE \cup \queueHE'', \tglob + \tepoch)
    }\rname{hypace}
  \end{mathpar}
  \caption{Assumptions and transition of HyPace}
  \label{fig:hypace}
\end{figure}

\subsubsection{HyPace acts}

In its turn to act, HyPace's work corresponds to its (batched) actions
in the epoch $(\tglob, \tglob + \tepoch]$, where $\tepoch$ is the
  epoch length. HyPace does two things.

First, HyPace applies the longest prefix of profile updates from
$\queueGHu$ whose availability timestamps $\tupd_i$ (\emph{not}
effective timestamps $\teff_i$) are less than $\tepoch$. This models
applying pending updates that became available in the earlier
epochs. We assume an abstract profile update function $\fHu(\prof,
\evupd, \teff) = \prof'$ that updates a current profile $\prof$ to a
new profile $\prof'$ by applying the update $\evupd$ effectively from
$\teff$. Importantly, we assume that $\prof$ and $\prof'$ agree in the
packet timing they provide up to time $\teff$. In other words, the
update becomes effective only at time $\teff$. The updated profile
$\prof'$ is stored back in HyPace's internal state $\sH$.

In detail, for each flow $\nfc_i$, we iterate on the updates in
$\queueGHu[\nfc_i]$ from the left, till the first update event whose
availability timestamp $\tupd_j$ is larger than $\tglob$. All updates
to the left of this event came to HyPace before the end of the
previous epoch and are applied immediately to $\prof_i$ using the
function $\fHu$ and stored back in $\sH$. The remaining updates stay
pending in $\queueGHu[\nfc_i]$. This iteration is formalized by the
defined function $\fupdprof$ shown in Figure~\ref{fig:updprof}. The
function $\fupdate$.

Note that the meta-variable $\updatesproj$ stands for updates
\emph{projected} to only the update event $\evupd$ and the effective
time $\teff$ (i.e., removing $\tupd$).
\[ \begin{array}{lll}
  \updatesproj & ::= & (\evupd_1, \teff_1), \ldots, (\evupd_m, \teff_m)
\end{array} \]

Second, HyPace uses the updated profiles to generate output packets
for the NIC. Only profile prefixes up to $\tglob$ are considered.
This process is abstractly modeled by a function $\fHo(\queueGHp,
\resf{\sH}{\tglob}) = (\queueGHp', \queueHE'')$. This function takes
as input the current packet subqueue between the guest and HyPace, and
the current profiles on all flows, limited to the time interval $(0,
\tglob]$ (denoted by $\resf{\sH}{\tglob}$). It returns an updated
  packet subqueue $\queueGHp'$ (as some packets have been consumed),
  and a set of events $\queueHE''$ to output to the NIC.

The entire HyPace transition is formalized in the rule
($\rname{hypace}$) in Figure~\ref{fig:hypace}. The index $\hypace$ in
$\step_\hypace$ indicates that this is HyPace's transition.

We make the following assumptions about the functions $\fHu$ and
$\fHo$. 

(Profile update respects effective time) $\fHu$ should respect the
effective time $\teff_i$ passed as its third argument. Formally, we
let $\resf{\prof}{\ts}$ denote the restriction of profile $\prof$ to
the interval $(0, \ts]$. Assumption~(6) of Figure~\ref{fig:hypace}
  represents this requirement. (Note that $\resf{\prof}{\ts}$ is
  abstract; we don't define it. However, it is also used in the rule
  ($\rname{hypace}$) of Figure~\ref{fig:hypace} to limit profiles
  before using them to generate events.)

(HyPace does not leak secrets explicitly) $\fHo$ should not leak
  information from the packet subqueue $\queueGHp$, which may be
  secret-dependent, into the timing of outgoing packets. Formally,
  this is represented by assumption~(7) of
  Figure~\ref{fig:hypace}.

\subsection{Security theorem and proof}

\begin{figure}
  \begin{mathpar}
    \begin{array}{ll}
      \multicolumn{2}{l}{
        \framebox{Invariants (unary)}
        }\\[5pt]
      (I1) & \invdelay(\queueGHu) \\
      (I2) & \inveffmax(\queueGHu, \teffmax) 
      \\[5pt]
      \multicolumn{2}{l}{
        \mbox{where $\invdelay$ and $\inveffmax$ are defined in Figure~\ref{fig:guest}.}
      }
      \\[5pt]
      \multicolumn{2}{l}{
        \framebox{Invariants (relational)}
      }\\[5pt]
      (I3) &
      \begin{array}[t]{@{}l}
        \sE_1 \sim \sE_2 \mbox{ and} \\
        \sG_1 \sim \sG_2 \mbox{ and} \\
        \queueEG_1 \sim \queueEG_2 \mbox{ and} \\
        \invupds(\queueGHu_1, \sH_1, \queueGHu_2, \sH_2) \mbox{ and} \\
        \queueHE_1 \sim \queueHE_2 \mbox{ and} \\
        \tglob_1 = \tglob_2 
      \end{array}
      \\[5pt]
      \multicolumn{2}{l}{\mbox{where}} \\
      \multicolumn{2}{l}{
        ~~\begin{array}{@{}l}
          \invupds(\queueGHu_1, \sH_1, \queueGHu_2, \sH_2) ~\defeq\\
          ~~~~\forall i.\; (\nfc_i \mapsto \updates_1) \in \queueGHu_1 \mbox{ and } \\
          ~~~~~~~~~~ (\nfc_i \mapsto \updates_2) \in \queueGHu_2 \mbox{ and } \\
          ~~~~~~~~~~ (\nfc_i \mapsto \prof_1) \in \sH_1 \mbox{ and } \\
          ~~~~~~~~~~  (\nfc_i \mapsto \prof_2) \in \sH_2              \\
          ~~~~~~~~ \Rightarrow \\
          ~~~~~~~~~~~ \exists \updatesproj.\;
          (\updates_1\res{\metaonly} = \updatesproj \append \updates_2\res{\metaonly}
          \mbox{ and }
          \prof_2 = \fupdate(\prof_1, \updatesproj)) \mbox{ or} \\
          ~~~~~~~~~~~~~~~~~~
          (\updates_2\res{\metaonly} = \updatesproj \append \updates_1\res{\metaonly}
          \mbox{ and }
          \prof_1 = \fupdate(\prof_2, \updatesproj))
        \end{array}
      }
    \end{array}
  \end{mathpar}
  \caption{Invariants of the transition system}
  \label{fig:invariants}
\end{figure}


We formalize confidentiality of the guest's private state using the
standard concept of
\emph{noninterference}~\cite{DBLP:series/ais/Smith07}. Noninterference
is inherently a relational property, i.e., a property of two runs of
the system. Noninterference is usually proved by establishing
\emph{invariants}. We state and prove the relevant invariants of our
model before stating and proving security. Our model has two kinds of
relevant invariants: unary and relational.

\paragraph{Unary invariants}
A unary invariant is a property of the configuration that is
\emph{preserved} by all transitions, i.e., if the property holds
before the a transition, then it holds after the transition as
well. There are two unary invariants of relevance to us. These are
called $(I1)$ and $(I2)$, and shown in Figure~\ref{fig:invariants}.

\begin{lem}
$(I1)$ and $(I2)$ are (unary) invariants.
\end{lem}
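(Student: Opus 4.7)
The plan is to proceed by a standard case analysis on the three transition rules $\rname{env}$, $\rname{guest}$, and $\rname{hypace}$, showing in each case that if $(I1)$ and $(I2)$ hold of the pre-state then they hold of the post-state. The base case (initial configuration) is trivially satisfied by taking all update lists empty.

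The rule $\rname{env}$ leaves the profile update subqueue $\queueGHu$ and the scalar $\teffmax$ unchanged, so both invariants transfer from the pre-state to the post-state with no work. For $\rname{hypace}$, inspection of the metafunction $\fupdprof$ (Figure~\ref{fig:updprof}) shows that for each flow the new list $\updates_i'$ is a suffix of the old list $\updates_i$; thus every triple $(\tupd,\evupd,\teff)$ in $\queueGHu'$ already appeared in $\queueGHu$. Since the HyPace rule does not alter $\teffmax$, both conjuncts of $\invdelay$ and $\inveffmax$ carry over pointwise from the induction hypothesis on the pre-state.

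The interesting case is $\rname{guest}$, where the new update subqueue is $\queueGHu \merge \queueGHu''$ and the scalar becomes $\teffmax'$. For any triple in the merged list we consider its origin. If it came from $\queueGHu$, then $\tupd \leq \teff$ holds by the induction hypothesis; if it came from $\queueGHu''$, then it holds by guest assumption~(3). This establishes preservation of $(I1)$. For $(I2)$ I first argue that $\teffmax \leq \teffmax'$: by induction hypothesis every triple of $\queueGHu$ satisfies $\teff \leq \teffmax$, while guest assumption~(5) forces every newly added $\teff_i'$ to strictly exceed $\teffmax$, and guest assumption~(4) bounds all new $\teff_i'$ above by $\teffmax'$; together with the convention that $\teffmax$ is monotone in the absence of new updates (either taken as part of assumption~(4) when $\queueGHu''$ is empty, or equivalent to the guest never decreasing its announced maximum), this yields $\teffmax \leq \teffmax'$. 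Then triples inherited from $\queueGHu$ satisfy $\teff \leq \teffmax \leq \teffmax'$, and triples freshly added from $\queueGHu''$ satisfy $\teff \leq \teffmax'$ directly by assumption~(4).

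The only subtle point, and hence the place I would be most careful, is the monotonicity step $\teffmax \leq \teffmax'$ used for $(I2)$ in the guest case, since assumption~(5) by itself is vacuous when $\queueGHu''$ is empty; this is where I would either invoke a conservative reading of assumption~(4) (which quantifies over all pairs in $\queueGHu''$ and is trivially satisfied whenever the queue is empty, so does not by itself force monotonicity) or add a small fiat lemma stating that the guest never regresses its announced $\teffmax$. Everything else is a mechanical unfolding of the transition rule and the definition of list merge.
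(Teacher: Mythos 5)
Your proposal follows essentially the same route as the paper: a case analysis over the three transitions, observing that only $\step_\guest$ adds to $\queueGHu$, and discharging the guest case via assumptions~(3) and~(4). For $(I1)$ your argument matches the paper's exactly. For $(I2)$ you are in fact \emph{more} careful than the paper: the paper's proof says the guest transition ``trivially preserves'' $\inveffmax$ by assumption~(4), but assumption~(4) only bounds the freshly added updates $\queueGHu''$ by the new $\teffmax'$, whereas the post-state invariant $\inveffmax(\queueGHu \merge \queueGHu'', \teffmax')$ also requires the \emph{inherited} updates (bounded only by the old $\teffmax$ via the induction hypothesis) to be bounded by $\teffmax'$. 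You correctly isolate the missing step as the monotonicity $\teffmax \leq \teffmax'$, note that assumptions~(4) and~(5) jointly yield it whenever $\queueGHu''$ is nonempty, and that a small additional stipulation (the guest never regresses its announced maximum, which is clearly the intended semantics of $\teffmax$ as ``the highest effective time of any update added in the past'') is needed when $\queueGHu''$ is empty. This is a legitimate refinement of the paper's one-line treatment rather than a flaw in your argument; the rest of your proof is sound.
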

\begin{proof}
  $(I1)$: We need to prove that every transition preserves $(I1)$,
  i.e., $\invdelay(\queueGHu)$. This property is defined pointwise on
  the individual elements of $\queueGHu$, so it can be violated only
  by a transition that \emph{adds} to $\queueGHu$. The only such
  transition is $\step_\guest$ ($\step_\env$ does not change
  $\queueGHu$ and $\step_\hypace$ removes from $\queueGHu$). However,
  $\step_\guest$ trivially preserves the invariant due to assumption~(3)
  of Figure~\ref{fig:guest}.

  $(I2)$: We need to prove that every transition preserves $(I2)$,
  i.e., $\inveffmax(\queueGHu, \teffmax)$. Again, this property is
  defined pointwise on the individual elements of $\queueGHu$, so we
  only need to consider the transition $\step_\guest$. This transition
  trivial preserves the invariant due to assumption~(4) of
  Figure~\ref{fig:guest}.
\end{proof}
  
\paragraph{Relational invariants}
A relational property is a property of two configurations,
conventionally denoted by the subscripts $1$ and $2$. A relational
property is called a \emph{relational invariant} if the following
holds: Consider two configurations in the property. If both
configurations step with the same kind of transition then the
resulting configurations are also in the property.\footnote{We do not
  need to consider different types of transitions on the two sides
  since we fix the order of the transitions. In other words, we assume
  a deterministic scheduler. This can be easily relaxed to any
  scheduler that only looks at the non-private components of the
  configurations.}

For our model, there is only one interesting relational invariant --
$(I3)$ of Figure~\ref{fig:invariants}. This relation says two things:
(1) The two configurations agree on the public (non-private)
components of $\sE$, $\sG$, $\queueEG$, and $\queueHE$ (the private
components may arbitrarily differ) and (2) For every flow $\nfc_i$,
HyPace's internal state $\sH$ and the pending updates queue
$\queueGHu$ differ across the two configurations only in that, in one
of the two sides, fewer updates have been taken out of $\queueGHu$ and
applied to the flow's profile. In other words, the \emph{same} profile
updates reach HyPace (and with the same effectiveness timestamps
$\teff_i$) in the two runs, but the two runs may differ in \emph{when}
they apply the updates. The latter difference arises because the time
at which the updates reach HyPace (the timestamps $\tupd_i$s) may
depend on guest secrets and may differ.

\begin{lem}
$(I3)$ is a relational invariant.
\end{lem}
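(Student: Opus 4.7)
I would prove the lemma by case analysis on which of the three transitions ($\step_\env$, $\step_\guest$, or $\step_\hypace$) is taken in lockstep on the two sides, assuming the unary invariants $(I1)$, $(I2)$ and the relational invariant $(I3)$ on the pre-step configurations $\config_1,\config_2$ and showing $(I3)$ on the successors $\config_1',\config_2'$. In each rule most clauses of $(I3)$ are preserved trivially because the corresponding components are untouched; only a small number of clauses need attention per case.

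The $\step_\env$ case is essentially an unfolding of assumption~(1) of Figure~\ref{fig:env}: the premises $\queueHE_1 \sim \queueHE_2$ and $\sE_1 \sim \sE_2$ come from $(I3)$, and the conclusion yields $\queueHE_1' \sim \queueHE_2'$, $\queueEG_1'' \sim \queueEG_2''$, and $\sE_1' \sim \sE_2'$. Noting that $\sim$ on $\queueEG$ is preserved under union of equivalent queues closes this case. The $\step_\guest$ case is analogous, using assumption~(2) to derive the corresponding equivalences for $\queueEG'$, the fresh update batch $\queueGHu''$, and $\sG'$. The only nontrivial bookkeeping is to reestablish $\invupds$ after the merge $\queueGHu \merge \queueGHu''$: assumptions~(4) and~(5) combined with invariant~$(I2)$ show that every prior pending update has $\teff_i \leq \teffmax$ while every fresh one has $\teff_i > \teffmax$, so the sorted merge degenerates into a tail append on both sides. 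Since $\queueGHu_1'' \sim \queueGHu_2''$ agrees on projections, the pre-step $\invupds$ witness $\updatesproj$ carries over unchanged.

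The $\step_\hypace$ case is the main obstacle. The subtlety is that the two sides may apply prefixes of \emph{different} lengths inside $\fupdprof$, because the availability stamps $\tupd_i$ are secret-dependent and so may differ between the two runs even when the projected updates coincide. I would split the argument into the two clauses of $(I3)$ that this rule can change. For preservation of $\invupds$, I would unfold the pre-step witness $\updatesproj$ together with the two prefixes $U_1^a,U_2^a$ actually consumed on each side, and construct a new witness $\updatesproj'$ from the difference between the two prefixes; because $\fupdate$ is iterated $\fHu$ (Figure~\ref{fig:updprof}) and is therefore compositional, the equation $\fupdate(\prof_1, U_1^a\res{\metaonly} \append \updatesproj') = \fupdate(\prof_2, U_2^a\res{\metaonly})$ (or its symmetric counterpart) reduces to concatenation algebra on projected lists. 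For output equivalence I would show $\resf{\sH_1'}{\tglob} = \resf{\sH_2'}{\tglob}$ and then invoke assumption~(7). The key observation is that $(I1)$ gives $\tupd_i \leq \teff_i$, so every update with $\teff_i \leq \tglob$ has $\tupd_i \leq \tglob$ and is therefore applied on \emph{both} sides by the end of this epoch; whereas any update with $\teff_i > \tglob$, whether applied this epoch or still pending, leaves $\resf{\prof}{\tglob}$ unchanged by assumption~(6). The restricted profiles therefore collapse to the same object determined solely by the commonly agreed projected list of updates with $\teff_i \leq \tglob$ extracted from the pre-step $\invupds$ witness.

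The only genuine difficulty is the prefix bookkeeping in the HyPace case across the two possible orientations of $\invupds$; once compositionality of $\fupdate$ and the $\teff_i$-based cutoff from assumption~(6) are in hand, the remainder is mechanical list manipulation.
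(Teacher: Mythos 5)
Your proposal is correct and follows essentially the same route as the paper's proof: the same clause-by-clause/transition-by-transition bookkeeping, the same use of assumptions (1), (2), (4), (5) with $(I2)$ to reduce the merge to an append in the guest step, and the same combination of $(I1)$, assumption (6) and assumption (7) to get $\resf{\sH_1'}{\tglob} = \resf{\sH_2'}{\tglob}$ and hence $\queueHE_1'' \sim \queueHE_2''$ in the HyPace step. The only difference is that you spell out the witness reconstruction for $\invupds$ under $\step_\hypace$ (via compositionality of $\fupdate$ over concatenation), which the paper dismisses as trivial; your added detail is sound and arguably an improvement.
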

\begin{proof}
We assume that the unary invariants $(I1)$ and $(I2)$ hold. We then
assume that $(I3)$ holds \emph{before} a step, and show that it holds
\emph{after} the step as well. To prove the latter, we show that all
conjuncts of $(I3)$ hold. We use the quote symbol $'$ to denote
elements after the transition.

\underline{$\sE_1 \sim \sE_2$}: The only transition that modifies $\sE$ is
$\step_\env$. This transition trivially guarantees $\sE_1' \sim
\sE_2'$ due to assumption~(1) of Figure~\ref{fig:env}.

\underline{$\sG_1 \sim \sG_2$}: The only transition that modifies $\sG$ is
$\step_\guest$. This transition trivially guarantees $\sG_1' \sim
\sG_2'$ due to assumption~(2) of Figure~\ref{fig:guest}.

\underline{$\queueEG_1 \sim \queueEG_2$}: The two transitions that
modify $\queueEG$ are $\step_\env$ and $\step_\guest$. Both guarantee
$\queueEG_1' \sim \queueEG_2'$ -- the former due to assumption~(1) of
Figure~\ref{fig:env}, and the latter due to assumption~(2) of
Figure~\ref{fig:guest}.

\underline{$\invupds(\queueGHu_1, \sH_1, \queueGHu_2, \sH_2)$}: This invariant is
affected only by transitions that change $\queueGHu$ or $\sH$ or
both. There are two such transitions: $\step_\guest$ and
$\step_\hypace$.

$\step_\guest$ adds to $\queueGHu$. First, note that due to the clause
$\queueGHu_1'' \sim \queueGHu_2''$ in assumption (2) of
Figure~\ref{fig:guest}, the updates $\queueGHu_1''$ and
$\queueGHu_2''$ that are added to $\queueGHu_1$ and $\queueGHu_2$
agree in their content and have the same effective times (they may
differ in when they reach HyPace, but this is irrelevant for
$\invupds(\queueGHu_1, \sH_1, \queueGHu_2, \sH_2)$ as it projects each
$\updates$ to $\updates\res{\metaonly}$). Second, due to invariant
$(I2)$, all events $\queueGHu_1$ have effective times less than
$\teffmax_1$, while the new events being added have timestamps greater
than $\teffmax_1$ due to assumption~(5) of Figure~\ref{fig:guest}. It
follows that $\queueGHu_1''$ is simply \emph{appended} to the end of
$\queueGHu_1$ by the $\merge$ operation in rule ($\rname{guest}$) of
Figure~\ref{fig:guest}. A similar observation holds for
$\queueGHu_2''$ and $\queueGHu_2$. This immediately implies that
$\invupds(\queueGHu_1', \sH_1', \queueGHu_2', \sH_2')$ holds.

$\step_\hypace$ modifies $\queueGHu$ by taking a prefix of it and
applying it to profiles. Hence, it trivially yields
$\invupds(\queueGHu_1', \sH_1', \queueGHu_2', \sH_2')$.

\underline{$\queueHE_1 \sim \queueHE_2$}: The two transitions that modify
$\queueHE$ are $\step_\env$ and $\step_\hypace$. Of these,
$\step_\env$ guarantees $\queueHE_1' \sim \queueHE_2'$ due to
assumption~(1) of Figure~\ref{fig:env}.

Showing that $\step_\hypace$ guarantees $\queueHE_1' \sim \queueHE_2'$
is harder. First, note that from the rule (\rname{hypace}) of
Figure~\ref{fig:hypace}, $\queueHE_1' = \queueHE_1 \cup \queueHE_1''$
and, similarly, $\queueHE_2' = \queueHE_2 \cup
\queueHE_2''$. $\queueHE_1 \sim \queueHE_2$ by assumption about the
invariant holding before the transition, so we only need to prove that
$\queueHE_1'' \sim \queueHE_2''$. Now, again following the rule,
$\queueHE_i''$ (for $i = 1,2$) is obtained from the function $\fHo$,
so by assumption~(7) of Figure~\ref{fig:hypace}, we only need to show
that $\resf{\sH_1'}{\tglob_1} = \resf{\sH_2'}{\tglob_2}$. We already
know from the invariant before the transition that $\tglob_1 =
\tglob_2 = \tglob$ (say) and, following the definition of
$\resf{\sH'}{\tglob}$, we only need to show that for every $i \in \{1,
\ldots, N\}$, $\resf{\prof_{i1}'}{\tglob} =
\resf{\prof_{i2}'}{\tglob}$. However, from the previous clause
($\invupds(\queueGHu_1', \sH_1', \queueGHu_2', \sH_2')$) we know that
$\prof_{i1}'$ and $\prof_{i2}'$ only differ in \emph{which} of two
identical sets of updates have been applied. However, all updates with
\emph{effective} times less than $\tglob$ \emph{must have been
  applied} to both. To see this, note that the time at which any such
update comes to HyPace (the timestamp $\tupd_i$) has to be lower than
the effective time due to invariant $(I1)$ and, hence, lower than
$\tglob$. So, $\fupdprof(\prof_i, \updates_i, \tglob)$ in the rule
($\rname{hypace}$) will apply all such updates (in both
runs). $\prof_{i1}'$ and $\prof_{i2}'$ can still differ in applied
updates with effective timestamps \emph{after} $\tglob$. However, due
to assumption~(6) of Figure~\ref{fig:hypace}, such differences are
irrelevant for the projections $\resf{\prof_{i1}'}{\tglob}$ and
$\resf{\prof_{i2}'}{\tglob}$. Hence, $\resf{\prof_{i1}'}{\tglob} =
\resf{\prof_{i2}'}{\tglob}$, as required.

\underline{$\tglob_1 = \tglob_2$}: The only transition that modifies
$\tglob$ is $\step_\hypace$, but this transition increases $\tglob$ by
a fixed amount ($\tepoch$), so it trivially preserves equality of
$\tglob_1$ and $\tglob_2$.
%
\end{proof}

\paragraph{Security}
We formulate security as follows standard noninterference. Consider
two runs that both start from empty queues, the same states for
clients and the network, the same non-private state for the guest, but
possibly \emph{different private guest state}. Then, after $n$ steps
in each run, the non-private state of the environment is exactly the
same. Let $\emptyset$ denote an empty queue.

\begin{thm}[Security]
Let (1)~$\sG_1 \sim \sG_2$, (2)~$(\sE, \sG_1, \sH, \emptyset,
\emptyset, \emptyset, \tglob) \step^n (\sE_1', \_, \_, \_, \_, \_,
\_)$ and (3)~$(\sE, \sG_2, \sH, \emptyset, \emptyset, \emptyset,
\tglob) \step^n (\sE_2', \_, \_, \_, \_, \_, \_)$. Then $\sE_1' \sim
\sE_2'$.
\end{thm}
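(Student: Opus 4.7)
The plan is to prove this security theorem as a straightforward corollary of the relational invariant $(I3)$ established in the preceding lemma, using induction on $n$. Concretely, I would show that $(I3)$ (along with the unary invariants $(I1)$ and $(I2)$) holds for the two initial configurations, then iterate the invariant lemma $n$ times to conclude that $(I3)$ holds for the pair of final configurations. Since the first conjunct of $(I3)$ is exactly $\sE_1 \sim \sE_2$, the desired conclusion $\sE_1' \sim \sE_2'$ follows immediately.

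For the base case, I need to verify all conjuncts of $(I3)$ for the two initial configurations $(\sE, \sG_i, \sH, \emptyset, \emptyset, \emptyset, \tglob)$ for $i \in \{1,2\}$. The clauses $\sE \sim \sE$ and $\tglob = \tglob$ hold by reflexivity; $\sG_1 \sim \sG_2$ is assumption (1) of the theorem; $\queueEG_1 \sim \queueEG_2$ and $\queueHE_1 \sim \queueHE_2$ both reduce to $\emptyset \sim \emptyset$, which holds because the projection $\res{\metaonly}$ of the empty set is empty. The clause $\invupds(\emptyset, \sH, \emptyset, \sH)$ holds vacuously, because with $\queueGHu = \emptyset$ there is no flow $\nfc_i$ for which $(\nfc_i \mapsto \updates_j)$ exists on either side, so the implication defining $\invupds$ is trivially satisfied. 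Similarly, $\invdelay(\emptyset)$ and $\inveffmax(\emptyset, \teffmax)$ hold vacuously, establishing $(I1)$ and $(I2)$ initially.

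For the inductive step, I would appeal to the two lemmas already proved: $(I1)$ and $(I2)$ are preserved by every single transition, and $(I3)$ is preserved whenever the two configurations take the same kind of transition. Since the theorem statement fixes the same number of steps $n$ on both sides and the scheduler is deterministic (HyPace, then environment, then guest, in a fixed cyclic order as noted in the model), both runs take transitions of the same kind in lockstep. Hence $n$-fold iteration of the relational invariant lemma yields $(I3)$ on the final pair of configurations, and extracting its $\sE_1' \sim \sE_2'$ conjunct completes the proof.

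The hard part of the overall security argument has already been discharged in proving that $(I3)$ is preserved; in particular, the subtle case where $\queueHE_1'' \sim \queueHE_2''$ must be shown for the HyPace step, which relies on unary invariant $(I1)$ to ensure that all updates with effective times before $\tglob$ have been applied in both runs, combined with assumption~(6) of Figure~\ref{fig:hypace} to discard differences in not-yet-effective updates. For this theorem itself, essentially no new ideas are required: the only minor care lies in verifying that the empty initial queues discharge the quantified invariants vacuously and in justifying the lockstep coupling of the two runs under the fixed scheduler.
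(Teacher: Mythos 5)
Your proposal is correct and follows exactly the paper's own argument: establish that $(I1)$, $(I2)$, and $(I3)$ hold of the initial configurations (trivially, given the empty queues and assumption (1)), then iterate the invariance lemmas over the $n$ lockstep transitions and read off $\sE_1' \sim \sE_2'$ from the first conjunct of $(I3)$. The paper states this in three sentences; you merely spell out the vacuous base-case checks and the deterministic-scheduler coupling in more detail.
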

\begin{proof}
It is trivial to see that $(I1)$, $(I2)$ and $(I3)$ hold of the
starting states. Since these properties are invariants, they hold of
the final states. In particular, from $(I3)$ on the final state, we
get that $\sE_1' \sim \sE_2'$.
\end{proof}

\if 0
\section*{Acknowledgments}

Thank Keon Jang, Antoine Kaufmann, Bobby Bhattacharjee for detailed, helpful
feedback.

\fi

\end{document}